\documentclass[11pt]{article}
\usepackage[T1]{fontenc}
\usepackage{amsfonts}
\usepackage{amsmath}
\usepackage{amssymb}
\usepackage{amsthm}
\usepackage{bbm}
\usepackage{bm}
\usepackage{mathrsfs}
\usepackage{verbatim}
\usepackage{setspace}
\usepackage{color}
\usepackage{pdfsync}
\usepackage{enumitem}

\theoremstyle{plain}
\newtheorem{theorem}{Theorem}[section]
\newtheorem{proposition}[theorem]{Proposition}
\newtheorem{lemma}[theorem]{Lemma}

\theoremstyle{definition}
\newtheorem{definition}[theorem]{Definition}
\newtheorem{remark}[theorem]{Remark}

\newtheorem{example}[theorem]{Example}

\newtheorem{assumption}[theorem]{Assumption}

\theoremstyle{remark}

\renewenvironment{thebibliography}[1]{%
\begin{oldthebibliography}{#1}%
\setlength{\baselineskip}{.9em}
\linespread{1}
\small
\setlength{\parskip}{0.3ex}%
\setlength{\itemsep}{.5em}%
}%
{%
\end{oldthebibliography}%
}
\newcommand{\eps}{\varepsilon}

\newcommand{\F}{\mathbb{F}}

\newcommand{\N}{\mathbb{N}}

\newcommand{\R}{\mathbb{R}}
\renewcommand{\S}{\mathbb{S}}

\newcommand{\cA}{\mathcal{A}}

\newcommand{\cE}{\mathcal{E}}
\newcommand{\cF}{\mathcal{F}}

\newcommand{\cK}{\mathcal{K}}
\newcommand{\cL}{\mathcal{L}}

\newcommand{\cS}{\mathcal{S}}

\newcommand{\sC}{\mathscr{C}}

\newcommand{\fP}{\mathfrak{P}}
\newcommand{\fM}{\mathfrak{M}}
\newcommand{\fg}{\mathfrak{g}}

\DeclareMathOperator*{\argmin}{arg\, min}
\DeclareMathOperator*{\argmax}{arg\, max}

\numberwithin{equation}{section}

\usepackage[pdfborder={0 0 0}]{hyperref}
\hypersetup{
  urlcolor = black,
  pdfauthor = {Ariel Neufeld, Marcel Nutz},
  pdfkeywords = {Utility maximization; Knightian uncertainty; Nonlinear Levy process
},
  pdftitle = {Robust Utility Maximization with Levy Processes},
  pdfsubject = {Robust Utility Maximization with Levy Processes},
  pdfpagemode = UseNone
}

\begin{document}

\title{\vspace{-0em}
\mbox{Robust Utility Maximization with L{\'e}vy Processes}
\date{\today}
\author{
  Ariel Neufeld%
  \thanks{
  Department of Mathematics, ETH Zurich, \texttt{ariel.neufeld@math.ethz.ch}.
  Financial support by Swiss National Science Foundation Grant PDFMP2-137147/1 is gratefully acknowledged.
  }
  \and
  Marcel Nutz%
  \thanks{
  Departments of Statistics and Mathematics, Columbia University, New York, \texttt{mnutz@columbia.edu}. Financial support by NSF Grants DMS-1512900 and DMS-1208985 is gratefully acknowledged. The authors thank Kostas Kardaras for enlightening discussions and two anonymous referees for helpful comments.
  }
 }
}
\maketitle \vspace{-1.2em}

\begin{abstract}
We study a robust portfolio optimization problem under model uncertainty for an investor with logarithmic or power utility. The uncertainty is specified by a set of possible L{\'e}vy triplets; that is, possible instantaneous drift, volatility and jump characteristics of the price process. We show that an optimal investment strategy exists and compute it in semi-closed form. Moreover, we provide a saddle point analysis describing a worst-case model.
\end{abstract}

\vspace{.9em}

{\small
\noindent \emph{Keywords} Utility maximization; Knightian uncertainty; Nonlinear L\'evy process

\noindent \emph{AMS 2010 Subject Classification}
91B28; %
93E20; %
60G51 %
}

\section{Introduction}\label{se:intro}

We study a robust utility maximization problem of the form
\begin{equation}\label{eq:valueIntro}
  \sup_{\pi}\inf_{P} E^{P}[U(W^{\pi}_{T})]
\end{equation}
in a continuous-time financial market with jumps. Here~$W^{\pi}_{T}$ is the wealth at time~$T$ resulting from investing in $d$ stocks according to the trading strategy~$\pi$ and $U$ is either the logarithmic utility $U(x)=\log(x)$ or a power utility $U(x)=\frac{1}{p}x^{p}$ for some $p\in(-\infty,0)\cup (0,1)$. The infimum is taken over a class~$\fP$ of possible models~$P$ for the dynamics of the log-price processes of the stocks. More precisely, the model uncertainty is parametrized by a set~$\Theta$ of L\'evy triplets $(b,c,F)$ and then~$\fP$ consists of all semimartingale laws $P$ such that the associated differential characteristics $(b^{P}_{t},c^{P}_{t},F^{P}_{t})$ take values in $\Theta$, $P\times dt$-a.e. In particular, $\fP$ includes all L\'evy processes with triplet in $\Theta$, but unless $\Theta$ is a singleton, $\fP$ will also contain many laws for which $(b^{P}_{t},c^{P}_{t},F^{P}_{t})$ are time-dependent and random. Thus, our setup describes uncertainty about drift, volatility and jumps over a class of fairly general models. 

Our first main result shows that an optimal trading strategy $\hat\pi$ exists for~\eqref{eq:valueIntro}. This strategy is of the constant-proportion type; that is, a constant fraction of the current wealth is invested in each stock. We compute this fraction in semi-closed form, so that the impact of model uncertainty can be readily read off; cf.\ Theorem~\ref{thm}. Thus, our specification of model uncertainty retains much of the tractability of the classical utility maximization problem for exponential L\'evy processes. This is noteworthy for the power utility as~$\fP$ contains models~$P$ that are not L\'evy and in which the classical power utility investor is not myopic. Moreover, while the classical $\log$ utility investor is myopic in any given semimartingale model, this property generally fails in robust problems, due to the nonlinearity caused by the infimum---retaining the myopic feature is specific to the setup chosen here, and in particular the (nonlinear) i.i.d.\ property of the increments of the log-prices under the nonlinear expectation $\inf_{P\in\fP} E^{P}[\,\cdot\,]$ in the sense of~\cite{HuPeng.09levy,NeufeldNutz.13b}.

Under a compactness condition on $\Theta$, we also show the existence of a worst-case model~$\hat{P}\in\fP$. This model is a L\'evy law and the corresponding L\'evy triplet $(\hat b,\hat c, \hat F)$ is computed in semi-closed form. More precisely, our second main result yields a saddle point $(\hat P,\hat\pi)$ for the problem~\eqref{eq:valueIntro} which may be seen as a two player zero-sum game. The strategy $\hat\pi$ and the triplet $(\hat b,\hat c, \hat F)$ are characterized as a saddle point of a deterministic function; cf.\ Theorem~\ref{thm-compact}. The fact that $\hat{P}$ is a L\'evy model may be compared with option pricing in the Uncertain Volatility Model, where in general the worst-case model is a non-L\'evy law unless the option is convex or concave.

Mathematically, our method of proof follows the local-to-global paradigm. That is, we first derive versions of our main results for a ``local'' optimization problem that plays the role of a Bellman-Isaacs operator. The passage to the global results is relatively direct in the logarithmic case, because the log investor is myopic in every model $P\in\fP$. For the power utility, this fails and thus the optimal strategy and expected utility for a fixed $P$ cannot be expressed in a simple way. However, we shall see that the worst case over all L\'evy laws already corresponds to the worst case over all $P\in\fP$. The key tool for this is a martingale argument; cf.\ Lemma~\ref{le:expo-martingale}.

Within the rich  literature on the portfolio optimization problem, going back to~\cite{Merton.69,Samuelson.69}, the present paper follows a branch which focuses on obtaining explicit or semi-explicit expressions for optimal portfolios. Essentially, this is possible only for isoelastic utility functions; moreover, a tractable model for the stock prices is required. While~\cite{Merton.69} provides the closed-form solution in the classical Black--Scholes model, a semi-explicit optimizer is still available for exponential L\'evy processes; see, e.g., \cite{Kallsen.00, Nutz.09c}. Semi-explicit solutions are also available for certain stochastic volatility models such as Heston's; see, e.g., \cite{KallsenMuhleKarbe.08, StoikovZariphopoulou.05}, among many others. The main merit of these solutions is to yield insight into how the presence of a specific phenomenon, such as stochastic volatility or jumps, may influence the choice of an investment strategy in comparison to more classical models. Here, our purpose is to study specifically the influence of model uncertainty.

Much of the literature on robust utility maximization in mathematical finance, starting with \cite{Quenez.04, Schied.06}, assumes that the set $\fP$ of models is dominated by a reference measure $P_*$. This assumption leads to a setting where volatilities and jump sizes are perfectly known, only drifts may be uncertain. By contrast, we are interested in uncertainty about all these three components, so that $\fP$ is nondominated. 
In general, the existence of optimal portfolios is known only in discrete time \cite{Nutz.13util}; however, \cite{DenisKervarec.13} establishes a minimax result and the existence of a worst-case measure in a setup where prices have continuous paths and the utility function is bounded. Continuous-time models with jumps have not been studied in the extant literature.

The early contribution \cite{TalayZheng.02} studies a class of related model risk management problems and shows that the lower value function ($\inf\sup$) solves a nonlinear PDE (these problems, however, do not admit a saddle point in general). In a setting closer to ours but again without jumps, \cite{MatoussiPossamaiZhou.12utility} obtains existence in a problem where $U$ is an isoelastic utility function, volatility is uncertain (within an interval) but the drift is known, by considering an associated second order backward stochastic differential equation. On the other hand, \cite{TevzadzeToronjadzeUzunashvili.13} studies the Hamilton--Jacobi--Bellman--Isaacs PDE related to the robust utility maximization problem in a diffusion model with a non-tradable factor and miss-specified drift and volatility coefficients for the traded asset; here a saddle point can be found after a randomization.
A model with several uncorrelated stocks, where drift, interest rate and volatility are uncertain within a specific parametrization, is considered in~\cite{LinRiedel.14}. A saddle point is found and analyzed, again by dynamic programming arguments. Recently, \cite{BiaginiPinar.15} also constructs a saddle point in a setting where the uncertainty in the drift may depend on the realization of the volatility in a specific way. Finally, \cite{FouquePunWong.14} considers a stochastic volatility model with uncertain correlation (but known drift) and describes an asymptotic closed-form solution. 
In the present paper, our main contribution is to exhibit and solve a problem that includes uncertainty about fairly general models while remaining very tractable.

The remainder of this paper is organized as follows. In Section~\ref{se:theProblem}, we specify our model and the optimization problem in detail, and we state our main results. Section~\ref{se:localAnalysis} contains the analysis of the local optimization problem. In Section~\ref{se:proof-log}, we give the proofs of the main results for the logarithmic utility, whereas Section~\ref{se:proof-pow} presents the proofs for power utility.

\section{The Optimization Problem}\label{se:theProblem}

\subsection{Setup for Model Uncertainty}

We fix the dimension $d \in \N$ and let $\Omega=D_0(\R_+,\R^d)$ be the space of all c\`adl\`ag paths $\omega=(\omega_t)_{t\geq 0}$ starting at $0\in\R^{d}$. We equip $\Omega$ 
with the Skorohod topology and
the corresponding Borel $\sigma$-field $\cF$. Moreover, we denote by $X = (X_t)_{t\geq 0}$
the canonical process $X_t(\omega) = \omega_t$, by $\F$ = $(\cF_t)_{t\geq 0}$ the (raw) filtration
generated by $X$, and by $\fP(\Omega)$ the Polish space of all probability
measures on $\Omega$. We also fix the time horizon $T\in (0,\infty)$. 

The uncertainty about drift, volatility and jumps is parametrized by a nonempty set 
$$
  \Theta\subseteq \R^d\times \S^d_+ \times \cL,
$$
where $\cL$ is the set of L\'evy measures; i.e., the set of all measures $F$ on $\R^{d}$ that satisfy $\int_{\R^d} |z|^2 \wedge 1 \, F(dz)<\infty$ and $F(\{0\})=0$.
We write 
\begin{equation*}
  \cL_{\Theta}= \{F\in\cL\,|\, (b,c,F)\in\Theta\}
\end{equation*}
for the projection of $\Theta$ onto $\cL$. The class of models to be considered is represented by the set $\fP$ of semimartingale laws such that the differential characteristics of the canonical process $X$ take values in $\Theta$. More precisely, let
\begin{equation*}
    \fP_{sem}=\big\{P\in \fP(\Omega)\,\big|\, \mbox{$X$ is a semimartingale on }(\Omega,\cF,\F,P)\big\}
\end{equation*}
be the set of all semimartingale laws, denote by $(B^P,C^P,\nu^P)$ the predicable characteristics of $X$ under $P$ with respect to a fixed truncation function $h$, and let
\begin{equation*}
   \fP^{ac}_{sem}= \big\{P\in \fP_{sem}\,\big|\, \mbox{$(B^P,C^P,\nu^P)\ll dt$, $P$-a.s.} \big\}
\end{equation*}
be the set of semimartingale laws with absolutely continuous characteristics (with respect to the Lebesgue measure $dt$).
Given such a triplet $(B^P,C^P,\nu^P)$, the corresponding derivatives (defined $dt$-a.e.) are called the differential characteristics of $X$ and denoted by $(b^P,c^P,F^P)$. Our set $\fP$ of possible laws is then given by
\begin{equation*}
  \fP=\big\{P\in \fP^{ac}_{sem}\,\big|\, (b^P,c^P,F^P)\in \Theta, \, P\otimes dt\mbox{-a.e.}\big\}.
\end{equation*}
The canonical process $X$, considered under the set $\fP$, can be seen as a nonlinear L\'evy process in the sense of~\cite{NeufeldNutz.13b}. Finally, let us denote by
\begin{equation*}
  \fP_{L}=\big\{P\in \fP\,\big|\, \mbox{$X$ is a L\'evy process under $P$}\big\}
\end{equation*}
the set of all L\'evy laws in $\fP$. Thus, there is a one-to-one correspondence between $\fP_{L}$ and the set $\Theta$ of L\'evy triplets, whereas the set $\fP$ is in general much larger than $\fP_{L}$.

\subsection{Utility and Constraints} 

To model the preferences of the investor, we consider the logarithmic and the power utility functions on $(0,\infty)$; i.e.,
\[
 U(x)= \log(x) \  \quad \mbox{and }\quad U(x)=\frac{1}{p} x^p \  \mbox{ for } p\in(-\infty,0)\cup (0,1).
\]
As usual, we set $U(0):=\lim\limits_{x\to 0} U(x)$ and $U(\infty):=\lim\limits_{x\to \infty} U(x)$.

Our investor is endowed with a deterministic initial capital $x_0>0$ and chooses a trading strategy $\pi$; that is, a predictable $\R^d$-valued process which is $X$-integrable under all $P\in \fP$. Here the canonical process $X$ represents the returns of the (discounted) stock prices and thus the $i$th component of $\pi$ is interpreted as the \emph{proportion} of current wealth invested in the $i$th stock. Under any $P\in \fP$, the corresponding wealth process $W^{\pi}$ is given by the stochastic exponential
\[
  W^{\pi}=x_0\,\cE\Big( \int \pi \, dX\Big). 
\]
The portfolio is subject to a no-bankruptcy constraint that can be described by the set of  \emph{natural constraints},
\begin{equation*}
\sC^0 := \bigcap_{F \in \cL_{\Theta}} \big\{ y \in \R^d \, \big| \, F[z \in \R^d \, | \, y^\top z< -1]=0 \big\}.
\end{equation*}
Indeed, a strategy $\pi$ with values in $\sC^0$ satisfies $\pi^\top \Delta X\geq-1$ $P$-a.s.\ for all $P \in \fP$ and this is in turn equivalent to $W^\pi\geq0$ $P$-a.s.\ for all $P \in \fP$. For later use, we note that $\sC^0$ is a closed, convex subset of $\R^{d}$ that contains the origin.

In addition to the natural constraints, we may impose further constraints such as no-shortselling on the investor. These constraints are modeled by an arbitrary closed, convex set $\sC\subseteq \R^d$ containing the origin.

The set $\cA$ of \emph{admissible strategies} is the collection of all strategies $\pi$ such that $\pi_{t}(\omega)\in \sC\cap \sC^0$ for all $(\omega,t)\in [\![0,T]\!]$ and $U(W^{\pi}_T)>-\infty$ $P$-a.s.\ for all $P \in \fP$. The second condition is for notational convenience: if $U(W^{\pi}_T)=-\infty$ with positive probability for some $P \in \fP$, then $\pi$ is not relevant for our optimization problem. Note that nothing is being excluded for the power utility with $p\in(0,1)$, whereas in the other cases we have $U(0)=-\infty$ and thus $\pi \in \cA$ implies $W^{\pi}>0$ $P$-a.s.\ for all $P \in \fP$; cf.\ \cite[Theorem~I.4.61, p.\,59]{JacodShiryaev.03}.
The value function of our robust utility maximization problem is 
\begin{equation}\label{eq:utilityproblem}
  u(x_0):=\sup_{\pi \in \cA} \inf_{P \in \fP} E^P\big[U(W_T^{\pi})\big].
\end{equation}
Here and below, we define the expectation for any measurable function with values in $\overline{\R}=[-\infty,\infty]$, using the convention $\infty - \infty = -\infty$. We say that the robust utility maximization problem is \emph{finite}  if $u(x_0)<\infty$. Under this condition, we call $\pi\in \cA$ \emph{optimal} if it attains the supremum in~\eqref{eq:utilityproblem}.

\subsection{Main Results}

We recall that $U$ stands for either $U(x)= \log(x)$ or $U(x)=\frac{1}{p} x^p$ with $p\in(-\infty,0)\cup (0,1)$. For convenience of notation, $p=0$ will refer to the logarithmic case in what follows. The subsequent conditions are in force for the remainder of the paper.

\begin{assumption}\label{assumpt}
 $ $
\begin{enumerate}
\item The set $\sC \cap \sC^0\subseteq \R^d$ is compact.
\item The set $\Theta\subseteq \R^d\times \S^d_+ \times \cL$ is convex and satisfies $\cK<\infty$, where
  \[
     \mathcal{K} :=\sup\limits_{(b,c,F)\in \Theta} \begin{cases} 
      |b| + |c| + \int  |z|^2 \wedge \log(1+|z|)  \, F(dz)    &\mbox{if } p=0,
     \\ 
      |b| + |c| + \int  |z|^2 \wedge |z|^{p(1+\varepsilon)} \, F(dz)   &\mbox{if }  p\in (0,1), \\
          |b| + |c| + \int  |z|^2 \wedge 1 \, F(dz)   &\mbox{if }  p<0.
         \end{cases} 
       \]
      In the case $p\in (0,1)$, we have fixed an arbitrarily small constant $\varepsilon>0$ in the above definition of $\cK$.
       
\end{enumerate}
\end{assumption}

\begin{remark}
(i) The compactness assumption on $\sC \cap \sC^0$ is \emph{not} very restrictive for the cases of our interest: in the presence of jumps, the set $\sC^0$ is typically compact, and then the assumption holds even in the unconstrained case $\sC=\R^{d}$. Indeed, let $d=1$ for simplicity of notation. As soon as the the jumps of $X$ are unbounded from above, for at least one $P\in\fP$, and not bounded away from $-1$, for at least one $P\in\fP$, then $\sC^0\subseteq [0,1]$ and $\sC \cap \sC^0$ is necessarily compact.

The non-compact case can also be analyzed but leads to technical complications that are not of specific interest to our robust problem. These complications are well-studied in the classical case; cf.\ \cite{Kardaras.09,Nutz.09c}. The non-compact case is of interest, in particular, in the Uncertain Volatility Model without jumps.

(ii) The second condition in Assumption~\ref{assumpt} will guarantee, in particular, the finiteness of the utility maximization problem. When $p<0$, no specific L\'evy triplet is excluded as any L\'evy measure satisfies $\int |z|^2 \wedge 1 \, F(dz) <\infty$. When $p\geq 0$, a sufficient condition is that the L\'evy process has integrable jumps, which is equivalent to $\int |z|^2 \wedge |z| \, F(dz) <\infty$.
\end{remark}

\begin{definition}
	Let $(b,c,F)\in \R^d\times \S^d_+ \times \cL_\Theta$. For $y\in\sC^0$, we define
	\begin{equation}\label{eq:fct_g-bcF}
	g^{(b,c,F)}(y):= y^\top b +\frac{p-1}{2} \,y^\top c y + \int_{\R^d} I_{y}(z)\, F(dz), 
	\end{equation}
	where
	\begin{equation*}
	I_{y}(z):=\left\{ \begin{array}{ll}
	 \log(1+y^\top z)-y^\top h(z) \ &\mbox{if } \ p=0, \\
	     &\\
	        p^{-1}(1+y^\top z)^p -p^{-1}-y^\top h(z) \ &\mbox{if } \ p\neq 0.
	\end{array}\right.
  \end{equation*}
We shall see later that $g^{(b,c,F)}$ is a well-defined concave function with values in $[-\infty,\infty)$.
\end{definition}

Our first main result states that an optimal strategy exists; moreover, it is given by a constant proportion that can be described in terms of the function~$g$. We recall that Assumption~\ref{assumpt} is in force.

\begin{theorem}[Optimal Strategy]\label{thm}
$ $
\begin{enumerate}
\item  The robust utility maximization problem is finite and
\begin{equation}\label{eq:global-max-thm}
\sup_{\pi \in \cA} \inf_{P\in \fP} E^P\big[U(W_T^{\pi})\big]
= \inf_{P\in \fP} \sup_{\pi \in \cA} E^P\big[U(W_T^{\pi})\big].
\end{equation}
\item  There exists an optimal strategy which is constant. More precisely, the finite-dimensional problem
\begin{equation*}
\argmax _{y \in \sC\cap \sC^0} \inf_{(b,c,F)\in \Theta} g^{(b,c,F)}(y)
\end{equation*}
has at least one solution. Any solution $\hat{y}$, 
seen as a constant process,  is in $\cA$ and defines an optimal strategy; i.e.,
\begin{equation*}%
\inf_{P \in \fP}E^P\big[U(W_T^{\hat{y}})\big]
= \sup_{\pi \in \cA} \inf_{P\in \fP} E^P\big[U(W_T^{\pi})\big],
\end{equation*}
and this value is equal to
\begin{equation*}
\left\{ \begin{array}{ll}
\log(x_0)+ \,T \inf\limits_{(b,c,F)\in \Theta} g^{(b,c,F)}(\hat{y}) \ & \mbox{if } \ p=0, \\
     &\\
        \frac{1}{p}x_0^p\exp\Big(p\,T \inf\limits_{(b,c,F)\in \Theta} g^{(b,c,F)}(\hat{y})\Big) \ &\mbox{if } \ p\neq 0.
\end{array}\right.
\end{equation*}
\item Conversely, any constant optimal strategy $\tilde{\pi}\in \cA$ is an element of 
\begin{equation*}
\argmax _{y \in \sC\cap \sC^0} \inf_{(b,c,F)\in \Theta} g^{(b,c,F)}(y).
\end{equation*}
\end{enumerate}
\end{theorem}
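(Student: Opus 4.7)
My plan is to execute the local-to-global paradigm outlined in Section~\ref{se:intro}: first solve the finite-dimensional saddle problem in $g^{(b,c,F)}$, then lift both inequalities in~\eqref{eq:global-max-thm} to the global utility problem via an Ito-type expansion of $U(W^\pi)$. The function $g^{(b,c,F)}(y)$ is precisely the local drift rate appearing in such an expansion---additively in the log case and after an exponential transformation in the power case---so it plays the role of a Bellman--Isaacs integrand.

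First I would establish the finite-dimensional ingredients (the contents of Section~\ref{se:localAnalysis}): that $g^{(b,c,F)}$ is a well-defined concave function of $y\in\sC^0$ with values in $[-\infty,\infty)$; that $(y,(b,c,F))\mapsto g^{(b,c,F)}(y)$ is jointly concave in $y$ and convex in $(b,c,F)$ with appropriate semicontinuity; and that $\varphi(y) := \inf_{(b,c,F)\in\Theta} g^{(b,c,F)}(y)$ is concave and upper semicontinuous. Compactness of $\sC\cap\sC^0$ from Assumption~\ref{assumpt}(i) then yields a maximizer $\hat y$ of $\varphi$. I would also record the finite-dimensional minimax identity
\begin{equation*}
\sup_{y\in\sC\cap\sC^0}\inf_{(b,c,F)\in\Theta} g^{(b,c,F)}(y)=\inf_{(b,c,F)\in\Theta}\sup_{y\in\sC\cap\sC^0} g^{(b,c,F)}(y),
\end{equation*}
via a Sion-type argument (the compact-concave side $\sC\cap\sC^0$ does the work; one direction is automatic).

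Next I would lift to the global problem. For the lower bound on $\sup_\pi\inf_P E^P[U(W_T^\pi)]$, insert the constant strategy $\hat y$. In the log case, Ito's formula gives $\log(W_T^{\hat y}/x_0)=\int_0^T g^{(b^P_t,c^P_t,F^P_t)}(\hat y)\,dt+M^P_T$ with $M^P$ a local martingale whose true-martingale property follows from Assumption~\ref{assumpt}(ii); since the integrand is bounded below by $\varphi(\hat y)$, taking expectations gives $E^P[\log W_T^{\hat y}]\geq\log(x_0)+T\varphi(\hat y)$ for every $P\in\fP$. In the power case, I would apply Lemma~\ref{le:expo-martingale} to write $(W_T^{\hat y}/x_0)^p=\exp\!\bigl(p\int_0^T g^{(b^P_t,c^P_t,F^P_t)}(\hat y)\,dt\bigr)\cdot Z^P_T$ for a positive local martingale $Z^P$ with $E^P[Z^P_T]\leq 1$; the pointwise bound on the integrand together with the sign of $p$ and this supermartingale property gives the same lower bound. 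For the matching upper bound I would restrict the adversary to $\fP_L$: for $P^*\in\fP_L$ with triplet $(b^*,c^*,F^*)$, the same Ito/exponential expansion (now with deterministic triplet) produces the classical closed-form value $\sup_{\pi\in\cA} E^{P^*}[U(W_T^\pi)]$ as an explicit function of $\sup_{y\in\sC\cap\sC^0} g^{(b^*,c^*,F^*)}(y)$; taking infimum over $\Theta$ and invoking the finite-dimensional minimax forces equality, yielding both the saddle relation~\eqref{eq:global-max-thm} and the explicit value formula, hence (i) and (ii).

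Part (iii) is then quick: if $\tilde\pi\equiv y$ is constant, admissible, and optimal, the Levy-subclass computation applied to $\inf_{P\in\fP} E^P[U(W_T^y)]$ shows this infimum equals the explicit formula in terms of $\varphi(y)$, so optimality forces $\varphi(y)=\max\varphi$, i.e.\ $y\in\argmax\varphi$. The main obstacle is the power case: because the power investor is not myopic under a general $P\in\fP$, $E^P[U(W_T^\pi)]$ does not decouple additively into an integral of local utility rates, and one cannot simply integrate $g$ pointwise. Lemma~\ref{le:expo-martingale} is the crucial device, replacing the additive Ito decomposition by a multiplicative exponential one and routing the sign analysis through the supermartingale property of a stochastic exponential; securing the integrability required by this lemma is precisely the role of the $|z|^{p(1+\varepsilon)}$ term in Assumption~\ref{assumpt}(ii) when $p\in(0,1)$.
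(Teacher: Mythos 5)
Your overall architecture matches the paper's: the local analysis plus Sion's theorem, the additive It\^o decomposition for $p=0$, the multiplicative exponential decomposition for $p\neq 0$, and the restriction of the adversary to $\fP_{L}$ for the upper bound. However, there is a genuine gap in your treatment of the power-case lower bound. You write $(W_T^{\hat y})^p=\exp\big(p\int_0^T g^{\theta^P_s}(\hat y)\,ds\big)\,Z_T^P$ and claim that the supermartingale property $E^P[Z_T^P]\leq 1$ suffices. For $p\in(0,1)$ this inequality points the wrong way: since $g^{\theta^P_s}(\hat y)\geq\inf_{\theta}g^{\theta}(\hat y)$ and $p>0$ give $\exp\big(p\int_0^T g^{\theta^P_s}(\hat y)\,ds\big)\geq\exp\big(pT\inf_{\theta}g^{\theta}(\hat y)\big)$, you obtain $E^P[(W_T^{\hat y})^p]\geq\exp\big(pT\inf_{\theta}g^{\theta}(\hat y)\big)\,E^P[Z_T^P]$, and the bound $E^P[Z_T^P]\leq 1$ then yields nothing. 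You need $E^P[Z_T^P]=1$, i.e.\ the \emph{true} martingale property, and establishing it is the entire content of the class-(D) argument in Lemma~\ref{le:expo-martingale}; the $|z|^{p(1+\varepsilon)}$ term in Assumption~\ref{assumpt}(ii) is needed precisely to upgrade the supermartingale to a uniformly integrable martingale, not merely to make the local drift rate finite. (For $p<0$ the sign of $1/p$ flips and the supermartingale inequality does suffice, so your argument is fine there.)

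A second, related omission: the martingale statement of Lemma~\ref{le:expo-martingale} is proved only for strategies valued in the interior sets $\sC\cap\sC^0_n$, because the class-(D) estimate and the finiteness of $g$ degenerate at the boundary of $\sC^0$. The maximizer $\hat y$ may lie on that boundary, so you cannot apply the lemma to $\hat y$ directly. The paper instead works with $\hat y_n=(1-\tfrac1n)\hat y\in\cA_n$, proves the statements on $\cA_n$, and passes to the limit using the concavity of $g$ (Lemma~\ref{le:approx-g}) together with an $L^1$-convergence/Fatou argument (Lemma~\ref{le:stoch-expo-limit}) and a comparison of $\cA$ with the classical admissible class $\cA^P$ (Lemma~\ref{le:levyvalue}). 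Your proposal skips this limit passage entirely; it is where most of the technical work in the power case resides.
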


The robust utility maximization problem can be seen as a two player zero-sum game. Indeed, the minimax identity~\eqref{eq:global-max-thm} then states the existence of the value. Our second main result is a saddle point analysis of the game. For reference, let us recall that a point $(\hat{x},\hat{y})\in \mathcal{X}\times \mathcal{Y}$ in some product set is called a saddle point of the function
$f:\mathcal{X}\times \mathcal{Y} \to [-\infty,\infty]$ if
\begin{equation*}
 f(\hat{x},y)\leq f(\hat{x},\hat{y})\leq f(x,\hat{y}) \quad \mbox{for all}\quad  x\in \mathcal{X},\:y \in \mathcal{Y}.
\end{equation*}
Thus, $\hat{x}$ is the optimal response when the second player chooses $\hat{y}$, and vice versa. This is equivalent to the three assertions
\begin{enumerate}
\item $\sup_{y \in \mathcal{Y}} f(\hat{x},y)= \inf_{x \in \mathcal{X}} \sup_{y \in \mathcal{Y}} f(x,y)$,
\item  $\inf_{x \in \mathcal{X}} f(x,\hat{y})= \sup_{y \in \mathcal{Y}} \inf_{x \in \mathcal{X}} f(x,y)$,
\item $\sup_{y \in \mathcal{Y}} \inf_{x \in \mathcal{X}} f(x,y) = \inf_{x \in \mathcal{X}} \sup_{y \in \mathcal{Y}} f(x,y)$;
\end{enumerate}
that is, $\hat{x}$ and $\hat{y}$ solve the respective robust optimization problems, and the minimax identity holds.

To provide a saddle point analysis of the game, we need to introduce a topology on the set $\Theta$.
Recall first that the space $\fM^f$ of all finite measures on $\R^d$ is a Polish space under a metric $d_{\fM^f}$ which induces the weak convergence relative to $C_b(\R^d)$; cf.\ \cite[Theorem~8.9.4, p.\,213]{Bogachev.07volII}. This topology is the natural extension of the more customary weak convergence of probability measures. With any L\'evy measure $\mu\in\cL$ we can associate the finite measure
\begin{equation*}
  A \mapsto \int_A |x|^2 \wedge 1 \, \mu(dx), \ \ \ A \in \mathcal{B}(\R^d),
\end{equation*}
 denoted by $|x|^2 \wedge 1.\mu$. We can now define a metric $d_{\cL}$ via
\begin{equation*}
  d_{\cL}(\mu,\nu)= d_{\fM^f}\big(|x|^2 \wedge 1.\mu, |x|^2 \wedge 1.\nu\big), \quad \mu,\nu \in \cL,
\end{equation*}
and then $(\cL,d_{\cL})$ is a separable metric space; cf.\, \cite[Lemma~2.3]{NeufeldNutz.13a}. Moreover, the following version of Prohorov's theorem holds: a set $\cS\subseteq \cL$ is relatively compact if and only if 
  \begin{enumerate}
  \item $\sup_{F \in \cS} \int_{\R^d} |z|^2 \wedge 1 \, F(dz)<\infty$ and
  \item for any $\delta>0$ there exists a compact set $K_\delta \subseteq \R^d$ such that 
  \begin{equation*}
\sup_{F \in \cS} \int_{K^c_\delta} |z|^2 \wedge 1 \, F(dz) \leq \delta.
  \end{equation*}
  \end{enumerate}
  This is a consequence of~\cite[Theorem~1.12]{Prohorov.56}.
   
  Having defined a topology on $\cL$, we can equip $\R^d\times \S^d_+ \times \cL$ with the corresponding product topology and state our second main result; recall that $\fP_{L}$ denotes the set of all L\'evy laws in $\fP$.

\begin{theorem}[Saddle Point]\label{thm-compact}
  Let $\Theta\subseteq \R^d\times \S^d_+ \times \cL$ be compact.\begin{enumerate}
\item  The function $(P,\pi)\mapsto E^P\big[U(W^\pi_T)\big]$ has a saddle point on $\fP \times \cA$.

More precisely, 
the function $g^{(b,c,F)}(y)$ defined in \eqref{eq:fct_g-bcF} has a saddle point on $\Theta \times\sC\cap\sC^0$. If $\big((\hat{b},\hat{c}, \hat{F}),\hat{y}\big)$ is any such saddle point and $\hat{P} \in \fP_{L}$ denotes the L\'evy law with triplet $(\hat{b},\hat{c}, \hat{F})$, then $\hat{y} \in \cA$  and $(\hat{P},\hat{y})$ is a saddle point of $(P,\pi)\mapsto E^P\big[U(W^\pi_T)\big]$ on $\fP \times \cA$, and its value is 
\begin{equation*}
E^{\hat{P}}\big[U(W^{\hat{y}}_T)\big]=
\left\{ \begin{array}{ll}
\log(x_0) + T \, g^{(\hat{b},\hat{c}, \hat{F})}(\hat{y}) \ & \mbox{if } \ p=0, \\
     &\\
        \frac{1}{p}x_0^p\exp\big(p\,T \, g^{(\hat{b},\hat{c}, \hat{F})}(\hat{y})\big) \ &\mbox{if } \ p\neq 0.
\end{array}\right.
\end{equation*}
\item Conversely, if $(\tilde{P},\tilde{\pi})$ is a saddle point of  $(P,\pi)\mapsto E^P\big[U(W^\pi_T)\big]$ on $\fP\times\cA$, and   $\tilde{P}\in \fP_{L}$ and $\tilde{\pi}$ is constant, then $\big((\tilde{b},\tilde{c},\tilde{F}),\tilde{\pi}\big)$ is a saddle point of the function $g^{(b,c,F)}(y)$ on $ \Theta\times\sC\cap\sC^0$, where $(\tilde{b},\tilde{c},\tilde{F})$ is the L\'evy triplet of $\tilde{P}$.
\end{enumerate}
 \end{theorem}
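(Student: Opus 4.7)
The plan is to reduce Theorem~\ref{thm-compact} to a deterministic minimax problem for $g^{(b,c,F)}(y)$ on the compact set $\Theta \times (\sC \cap \sC^0)$, and then lift a saddle point of $g$ to a saddle point of the game using Theorem~\ref{thm} together with a direct computation of $E^P[U(W^y_T)]$ when $P\in\fP_{L}$ is a L\'evy law and $y$ is a constant strategy.

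First I would establish the saddle point of $g$ by applying Sion's minimax theorem. Both $\Theta$ and $\sC \cap \sC^0$ are convex and compact, by the hypothesis of the theorem combined with Assumption~\ref{assumpt}. The function $g^{(b,c,F)}(y)$ is \emph{linear} (hence convex) in $(b,c,F)$ for fixed $y$, since the integrand $I_y(z)$ does not depend on $(b,c,F)$. Concavity in $y$ for fixed $(b,c,F)$, and the semicontinuity in $(b,c,F)$ under the topology on $\cL$ generated by $d_{\cL}$, are provided by the analysis in Section~\ref{se:localAnalysis}; the key uniform integrability comes from Assumption~\ref{assumpt}(ii) together with the compactness of $\Theta$. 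Sion's theorem yields $\sup_{y}\inf_{(b,c,F)} g = \inf_{(b,c,F)}\sup_{y} g$, compactness ensures that both outer extrema are attained at some $\hat y$ and $(\hat b,\hat c,\hat F)$, and the minimax equality then forces the pair to be a saddle point.

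Next, let $\hat P \in \fP_{L} \subseteq \fP$ denote the L\'evy law with triplet $(\hat b,\hat c,\hat F)$. By Theorem~\ref{thm}, $\hat y$ belongs to $\cA$ and is an optimal strategy for $\sup_\pi \inf_P E^P[U(W^\pi_T)]$, whose value equals the formula in Theorem~\ref{thm}(ii) involving $\inf_{(b,c,F)\in\Theta} g^{(b,c,F)}(\hat y) = g^{(\hat b,\hat c,\hat F)}(\hat y)$. A standard It\^o-formula calculation for the stochastic exponential of a L\'evy process gives, for any L\'evy law $P\in\fP_{L}$ with triplet $(b,c,F)$ and any constant $y\in \sC\cap\sC^0$,
\[
E^{P}[U(W^{y}_T)] = \begin{cases} \log(x_0) + T\, g^{(b,c,F)}(y) & \text{if } p=0, \\ \frac{1}{p}\, x_0^{p}\exp\bigl(pT\, g^{(b,c,F)}(y)\bigr) & \text{if } p\neq 0, \end{cases}
\]
so $E^{\hat P}[U(W^{\hat y}_T)]$ coincides with the game value, giving the first saddle inequality $\inf_P E^P[U(W^{\hat y}_T)] = E^{\hat P}[U(W^{\hat y}_T)]$. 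For the reverse inequality, classical utility maximization in exponential L\'evy models asserts that $\sup_{\pi\in\cA} E^{\hat P}[U(W^{\pi}_T)]$ is attained by the constant strategy that maximizes $g^{(\hat b,\hat c,\hat F)}$ over $\sC\cap\sC^0$; by the saddle property of $g$, $\hat y$ is such a maximizer, so $\sup_{\pi\in\cA} E^{\hat P}[U(W^{\pi}_T)] = E^{\hat P}[U(W^{\hat y}_T)]$. Together these show that $(\hat P,\hat y)$ is a saddle point of the game on $\fP\times\cA$.

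For the converse, suppose $(\tilde P,\tilde \pi)$ is a saddle point with $\tilde P \in \fP_{L}$ of triplet $(\tilde b,\tilde c,\tilde F)$ and $\tilde \pi$ constant. Theorem~\ref{thm}(iii) places $\tilde \pi$ in $\argmax_{y\in \sC\cap\sC^0} \inf_{(b,c,F)\in\Theta} g^{(b,c,F)}(y)$, and the identity $E^{\tilde P}[U(W^{\tilde \pi}_T)] = \inf_{P\in\fP} E^P[U(W^{\tilde \pi}_T)]$, combined with the L\'evy formula above and the explicit value given in Theorem~\ref{thm}(ii), forces $(\tilde b,\tilde c,\tilde F)$ to minimize $g^{(b,c,F)}(\tilde \pi)$ over $\Theta$. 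Hence $((\tilde b,\tilde c,\tilde F),\tilde \pi)$ is a saddle point of $g$. The main obstacle I anticipate is the lower semicontinuity of $(b,c,F)\mapsto g^{(b,c,F)}(y)$ with respect to $d_{\cL}$: the integrand $I_y(z)$ behaves like $|z|^2$ near $0$ but may grow like $\log(1+|z|)$ or $|z|^p$ at infinity, so continuity against weak convergence of the associated finite measures $|z|^2\wedge 1.F$ is delicate and relies crucially on the tail truncation supplied by Assumption~\ref{assumpt}(ii). A secondary subtlety, specific to power utility, is the passage from the L\'evy-law optimum to the full robust optimum, where the investor is not myopic; this is exactly where the martingale argument of Lemma~\ref{le:expo-martingale} is needed.
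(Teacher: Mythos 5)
Your treatment of part (i) follows the paper's own route: Sion's minimax theorem plus the semicontinuity results of Section~\ref{se:localAnalysis} produce the saddle point of $g$, and the lift to $\fP\times\cA$ goes through Theorem~\ref{thm}, the explicit formula for $E^P[U(W^{y}_T)]$ when $P\in\fP_{L}$ and $y$ is constant, and the classical exponential-L\'evy result. You leave the $\cA$ versus $\cA^{P}$ versus $\cA_n$ subtleties for power utility as acknowledged obstacles rather than executing them (the paper resolves them via Lemmas~\ref{le:stoch-expo-limit} and~\ref{le:levyvalue}), but you correctly identify Lemma~\ref{le:expo-martingale} as the required engine, so part (i) is essentially the paper's argument.

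Part (ii), however, contains a genuine logical gap. From the hypotheses you extract two facts: (a) $\tilde\pi\in\argmax_{y}\inf_{\theta}g^{\theta}(y)$ via Theorem~\ref{thm}(iii), and (b) $g^{\tilde\theta}(\tilde\pi)=\inf_{\theta}g^{\theta}(\tilde\pi)$ via the minimization half of the game saddle property and the L\'evy formula. But (a) and (b) together do \emph{not} imply that $(\tilde\theta,\tilde\pi)$ is a saddle point of $g$: the missing inequality is $g^{\tilde\theta}(y)\le g^{\tilde\theta}(\tilde\pi)$ for all $y\in\sC\cap\sC^0$, i.e., that $\tilde\pi$ maximizes $y\mapsto g^{\tilde\theta}(y)$ and not merely $y\mapsto\inf_{\theta}g^{\theta}(y)$. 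That this inference fails in general is seen already for $f(x,y)=xy$ on $[-1,1]^2$: the pair $(\hat x,\hat y)=(1,0)$ satisfies the analogues of (a) and (b), since $\inf_{x}f(x,0)=0$ is attained by every $x$, yet $\sup_{y}f(1,y)=1>0=f(1,0)$, so it is not a saddle point. The repair is exactly what the paper does in Lemma~\ref{le:proof-thm-l4-log}: you must also invoke the maximization half of the game saddle point, $\sup_{\pi\in\cA}E^{\tilde P}[U(W^{\pi}_T)]=E^{\tilde P}[U(W^{\tilde\pi}_T)]$, and combine it with the L\'evy computation to conclude $g^{\tilde\theta}(\tilde\pi)=\sup_{y}g^{\tilde\theta}(y)$, taking care to pass from $\sC\cap\sC^{0,*}$ to its closure $\sC\cap\sC^0$ using concavity and properness of $g^{\tilde\theta}$. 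As written, the ``hence'' closing your converse does not follow.
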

 
 We remark that the worst-case model $\hat{P}$ is not unique in any meaningful way. For instance, if $\sC=[0,1]$ and $\Theta\subseteq \R_{-}\times[0,\infty)\times\{0\}$, then $(P,0)$ is a saddle point for any $P \in \fP_{L}$. Generally speaking, $P\mapsto \sup_{\pi} E^P\big[U(W^\pi_T)\big]$ is convex but not necessarily strictly convex, so that $\hat{P}$ may easily fail to be unique. On the other hand, $\hat{\pi}$ is unique in the sense that $W^{\hat\pi}$ is uniquely determined $\hat{P}$-a.s.
 
\begin{example}
  To give a simple example with an explicit solution, consider positive numbers $b_{min}\leq b_{max}$, $c_{min}\leq c_{max}$, $\lambda_{min}\leq \lambda_{max}$ and let
	\begin{equation*}
	\Theta:=\big\{(b,c,\lambda \delta_1) \,|\, b \in [b_{min},b_{max}], c\in [c_{min},c_{max}], \lambda\in [\lambda_{min},\lambda_{max}]\big\}
	\end{equation*}
	with respect to a truncation function $h$ satisfying $h(1)=1$. We consider the case $p=0$; moreover, $b_{min}-c_{max}-\lambda_{max}>0$ to avoid trivialities. Then, 
	\begin{equation*}
	\argmax_{y\in \sC\cap\sC^0} g^{(b,c,\lambda \delta_1)}(y)=\frac{(b-c-\lambda)+\sqrt{(b-c-\lambda)^2+4bc}}{2c}
	\end{equation*}
	provided that the constraint $\sC$ is large enough to contain that value. Moreover, 
	\begin{equation*}
	\argmin_{(b,c,F)\in \Theta} \sup_{y \in\sC\cap \sC^0} g^{(b,c,F)}(y)=(b_{min}, c_{max}, \lambda_{max}\,\delta_1)= :(\hat{b},\hat{c}, \hat{F}),
	\end{equation*}
	and we see that this triplet together with the strategy 
	$$
	\hat y := \frac{(b_{min}-c_{max}-\lambda_{max})+\sqrt{(b_{min}-c_{max}-\lambda_{max})^2+4b_{min}c_{max}}}{2c_{max}}
	$$
	forms the unique saddle point of
	$g^{(b,c,F)}(y)$
	on $\Theta\times \sC\cap\sC^0$.
\end{example}

\begin{remark}
   We may compare the situation of uncertainty over the set $\fP$ of semimartingale laws and the (much smaller) set $\fP_{L}$ of L\'evy laws. It follows from the proofs below that the value function, the optimal strategies and the saddle points in the main results are the same in both cases. This is in contrast, for example, to the situation of option pricing in the Uncertain Volatility Model, where the worst-case can be a non-L\'evy law.
\end{remark}

\section{The Local Analysis}\label{se:localAnalysis}

In this section we analyze the function $g^{(b,c,F)}$ defined in \eqref{eq:fct_g-bcF}; the results will be fundamental for the proofs of our main theorems. We set
\begin{equation}\label{eq:fct_g}
    g(y):= \inf_{(b,c,F) \in \Theta} g^{(b,c, F)}(y), \quad y\in\sC^{0}
\end{equation}
and recall that Assumption~\ref{assumpt} is in force.

\begin{lemma}\label{le:fct_g}
  Let $\theta=(b,c,F)\in \R^d\times \S^d_+ \times \cL_\Theta$. The function $g^{\theta}$ of~\eqref{eq:fct_g-bcF} is well-defined, proper, concave  and upper semicontinuous on $\sC^0$, with values in $[-\infty,\infty)$. The same holds for the function $g$ of~\eqref{eq:fct_g}. As a consequence,  $g^{\theta}$ and $g$ attain their maxima on $\sC\cap \sC^0$.
\end{lemma}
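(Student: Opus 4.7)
The plan is to establish, in order, (a) that the integral in~\eqref{eq:fct_g-bcF} is well-defined with values in $[-\infty,\infty)$, (b) that $g^\theta$ is proper, (c) that $g^\theta$ is concave, and (d) that $g^\theta$ is upper semicontinuous on $\sC^0$; then to pass from $g^\theta$ to $g=\inf_\theta g^\theta$ by observing that concavity, upper semicontinuity, and the codomain $[-\infty,\infty)$ are all preserved under pointwise infima; and finally to invoke compactness of $\sC\cap\sC^0$ to conclude attainment of the maxima.

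\textbf{Well-definedness and properness.} For (a), what I need is $F$-integrability of $I_y^+$ for each $y\in\sC^0$. I would split the integral at $|z|=\delta$, with $\delta$ small enough that the truncation function satisfies $h(z)=z$ on $\{|z|\leq\delta\}$. Near zero, a second-order Taylor expansion of $u\mapsto \log(1+u)$ (resp.\ $u\mapsto p^{-1}((1+u)^p-1)$) yields $I_y(z)=\frac{p-1}{2}(y^\top z)^2+o(|z|^2)$, hence $|I_y(z)|\leq C(|y|)|z|^2$, which is integrable against any L\'evy measure. For $|z|>\delta$, I would bound $I_y^+(z)$ by a constant depending on $|y|$ and $h$ times $\log(1+|z|)$ when $p=0$, times $|z|^p$ when $p\in(0,1)$ (using $(a+b)^p\leq a^p+b^p$ for $p\leq 1$ and then $|z|^p\leq |z|^{p(1+\varepsilon)}$ on $\{|z|>1\}$), and by a constant when $p<0$ (since $p^{-1}(1+y^\top z)^p\leq 0$ there and $F$ is finite outside a neighbourhood of $0$). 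Each majorant is $F$-integrable under Assumption~\ref{assumpt}(ii), giving (a). For (b), $g^\theta(0)=0$, so $g^\theta\not\equiv-\infty$ and does not take the value $+\infty$.

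\textbf{Concavity and upper semicontinuity.} The affine term $y^\top b$ and the concave quadratic $\frac{p-1}{2}y^\top cy$ (concave since $c\succeq 0$ and $\frac{p-1}{2}<0$) cause no difficulty. For the integrand, the scalar function $f(u)=\log(1+u)$ or $p^{-1}(1+u)^p$ satisfies $f''(u)<0$ on $(-1,\infty)$ in every case $p\in(-\infty,1)$; hence $y\mapsto f(y^\top z)$ is concave as a composition with a linear map, and subtracting $y^\top h(z)$ and integrating in $z$ preserves concavity. For upper semicontinuity on $\sC^0$, I would verify pointwise USC of $y\mapsto I_y(z)$ first: it is continuous wherever $1+y^\top z>0$; at a point $y$ with $1+y^\top z=0$ one has $I_y(z)=-\infty$ when $p\leq 0$, and since $y_n\in\sC^0$ forces $y_n^\top z\geq -1$ $F$-a.e., any sequence $y_n\to y$ sends $y_n^\top z$ to $-1$ from above and therefore $I_{y_n}(z)\to -\infty$, while for $p\in(0,1)$ the function $I_{\cdot}(z)$ is even globally continuous. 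Given $y_n\to y$ in $\sC^0$ and a compact neighbourhood $K$ containing the tail, the estimates from the previous step produce a single majorant $\phi_K\in L^1(F)$ with $\sup_{y'\in K}I_{y'}(z)\leq \phi_K(z)$; the reverse Fatou lemma then yields $\limsup_n \int I_{y_n}\,dF\leq \int I_y\,dF$, and combined with continuity of the affine and quadratic terms this gives USC of $g^\theta$.

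\textbf{From $g^\theta$ to $g$ and attainment.} The pointwise infimum $g=\inf_{\theta\in\Theta}g^\theta$ inherits concavity and upper semicontinuity, and its values remain in $[-\infty,\infty)$; properness follows from $g(0)=0$. Finally, compactness of $\sC\cap\sC^0$ (Assumption~\ref{assumpt}(i)) together with properness and upper semicontinuity yields attainment of the maxima of $g^\theta$ and of $g$ on $\sC\cap\sC^0$ by the standard Weierstrass-type argument. I expect the main technical obstacle to be the production of the uniform integrable majorant $\phi_K$ required by reverse Fatou in step (d); this is where the three regimes of Assumption~\ref{assumpt}(ii) enter and need to be treated separately, matched to the growth of $f$ at infinity.
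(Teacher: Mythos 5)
Your proof is correct, but it takes a genuinely different route from the paper: the paper disposes of this lemma in one line by citing the classical (non-robust) exponential L\'evy literature --- \cite[Section~5.1]{Kardaras.09} for $p=0$ and \cite[Lemma~5.3]{Nutz.09c} for $p\neq 0$ --- for the properties of $g^{\theta}$, and then notes that the assertions about $g=\inf_{\theta}g^{\theta}$ and the attainment of maxima are direct consequences. You instead give a self-contained verification: the Taylor expansion $I_y(z)=\frac{p-1}{2}(y^\top z)^2+o(|z|^2)$ near $0$ combined with the three tail regimes of Assumption~\ref{assumpt}(ii) to get $F$-integrability of $I_y^+$, strict concavity of the scalar functions $u\mapsto\log(1+u)$ and $u\mapsto p^{-1}(1+u)^p$ on $(-1,\infty)$, and a reverse-Fatou argument with a locally uniform integrable majorant for upper semicontinuity (where you correctly handle the boundary case $1+y^\top z=0$ by using that $y_n\in\sC^0$ forces $y_n^\top z\geq -1$ off an $F$-null set, so the approach to $-1$ is one-sided). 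The passage to $g$ via preservation of concavity, upper semicontinuity and the codomain under pointwise infima, and the Weierstrass argument on the compact set $\sC\cap\sC^0$, are exactly the "direct consequences" the paper has in mind. What your approach buys is independence from the cited results and an explicit majorant construction that is essentially the same estimate the paper needs anyway in the proof of Lemma~\ref{le:lsc} (where $\tilde I_{y,n}=I_{y,n}/(|z|^2\wedge 1)$ is shown to be bounded below); what the paper's approach buys is brevity. No gaps.
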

\begin{proof}
The first assertion follows directly from Assumption~\ref{assumpt} and the literature on classical utility maximization; cf.\ \cite[Section~5.1, p.\,182]{Kardaras.09} for $p=0$ and \cite[Lemma~5.3]{Nutz.09c} for $p \neq 0$. The remaining assertions are direct consequences.
\end{proof}

It will be useful to avoid the singularity of $g^{(b,c,F)}$ by considering the closed, convex sets
\begin{equation*}
\sC^0_{n}:= \bigcap_{F \in \cL_{\Theta}} \Big\{ y \in \R^d \, \Big| \, F\big[z \in \R^d \, \big| \, y^\top z< -1+\tfrac{1}{n}\big]=0 \Big\} \subseteq \sC^0
\end{equation*}
for $n\in\N$. We have the following approximation result.

\begin{lemma}\label{le:approx-g}
Let $\theta \in \R^d \times \S^d_+\times \cL_\Theta$ and let $\hat{y}^{\theta}$ be a maximizer of $y\mapsto g^{\theta}(y)$ on $\sC\cap \sC^0$; then
\begin{equation*}
\sup_{y \in \sC\cap \sC^0}g^{\theta}(y) 
= \lim\limits_{n \to \infty} \sup_{y \in \sC\cap \sC^0_n}g^{\theta}(y) 
= \lim\limits_{n \to \infty} g^{\theta}\big(\hat{y}^{\theta}_n\big)\quad\mbox{for}\quad\hat{y}^{\theta}_n:= (1-\tfrac{1}{n})\hat{y}^{\theta}.
\end{equation*}
Similarly, let $\hat{y}$ be a maximizer of $y\mapsto g(y)$ on $\sC\cap \sC^0$; then 
\begin{align*}
\sup_{y \in \sC\cap \sC^0} g(y) 
= \lim\limits_{n \to \infty} \sup_{y \in \sC\cap \sC^0_n}  g(y)
= \lim\limits_{n \to \infty}  g(\hat{y}_n)\quad\mbox{for}\quad\hat{y}_n:= (1-\tfrac{1}{n})\hat{y}.
\end{align*}
\end{lemma}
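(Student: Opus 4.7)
The plan is to sandwich $\sup_{y\in\sC\cap\sC^0_n}g^{\theta}(y)$ between the value $g^{\theta}(\hat y^{\theta}_n)$ at the explicit approximant and the global maximum $g^{\theta}(\hat y^{\theta})$. Since $-1+1/n < -1+1/(n{+}1)$ gives $\sC^0_n \subseteq \sC^0_{n+1} \subseteq \sC^0$, we immediately obtain
\[
 g^{\theta}(\hat y^{\theta}_n) \le \sup_{y\in\sC\cap\sC^0_n} g^{\theta}(y) \le \sup_{y\in\sC\cap\sC^0} g^{\theta}(y) = g^{\theta}(\hat y^{\theta}),
\]
once we verify that $\hat y^{\theta}_n \in \sC\cap\sC^0_n$. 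The entire proof thus reduces to this membership together with the limit $\lim_n g^{\theta}(\hat y^{\theta}_n) = g^{\theta}(\hat y^{\theta})$.

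For the membership (take $n\ge 2$), convexity of $\sC$ together with $0,\hat y^{\theta}\in\sC$ yields $(1-\tfrac{1}{n})\hat y^{\theta}\in\sC$. For the $\sC^0_n$ condition, the key algebraic observation is that dividing the inequality $(1-\tfrac{1}{n})(\hat y^{\theta})^{\top}z < -1+\tfrac{1}{n}$ by the positive constant $1-\tfrac{1}{n}$ produces exactly $(\hat y^{\theta})^{\top}z < -1$, so the two events coincide; the latter is $F$-null for every $F\in\cL_\Theta$ because $\hat y^{\theta}\in\sC^0$.

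For the limit, a direct inspection of $I_0$ shows $g^{\theta}(0) = 0$. The trivial case $g^{\theta}(\hat y^{\theta}) = -\infty$ forces $g^{\theta}\equiv -\infty$ on $\sC\cap\sC^0$ and is immediate; otherwise concavity of $g^{\theta}$ (Lemma~\ref{le:fct_g}) yields
\[
 g^{\theta}(\hat y^{\theta}_n) \ge (1-\tfrac{1}{n})\,g^{\theta}(\hat y^{\theta}) + \tfrac{1}{n}\,g^{\theta}(0) = (1-\tfrac{1}{n})\,g^{\theta}(\hat y^{\theta}) \longrightarrow g^{\theta}(\hat y^{\theta}),
\]
which closes the sandwich and proves the first claim. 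The argument for $g$ is word-for-word the same: $g$ is concave with $g(0) = \inf_{\theta}g^{\theta}(0) = 0$ by Lemma~\ref{le:fct_g}, and the sets $\sC^0_n$ do not involve $\theta$, so both the membership step and the concavity estimate transfer verbatim. The only place requiring genuine care is the membership, which is handled by the algebraic identity above; everything else is a one-line convex-combination bound, so no semicontinuity, compactness, or measure-theoretic approximation is needed.
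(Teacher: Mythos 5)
Your proof is correct and follows essentially the same route as the paper's: the monotone inclusions $\sC^0_n\subseteq\sC^0_{n+1}\subseteq\sC^0$ give the easy sandwich, and concavity together with $g^{\theta}(0)=0$ gives $g^{\theta}(\hat y^{\theta}_n)\ge(1-\tfrac1n)g^{\theta}(\hat y^{\theta})$ for the reverse direction; you merely spell out the membership $\hat y^{\theta}_n\in\sC^0_n$ in more detail than the paper does. One trivial slip: the inequality justifying the inclusion should read $-1+\tfrac1n>-1+\tfrac{1}{n+1}$ (a larger threshold is a stronger null condition, hence a smaller set), not the reverse as you wrote it; the inclusion you use is nevertheless the correct one.
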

\begin{proof}
	As $\sC$ is convex and contains the origin, $\hat{y}^{\theta}_n \in \sC \cap \sC^0_n$. Moreover,
	\begin{equation*}
	\sup_{y \in \sC\cap \sC^0}g^{\theta}(y) 
	\geq \lim\limits_{n \to \infty} \sup_{y \in \sC\cap \sC^0_n}g^{\theta}(y) 
	\geq \limsup_{n \to \infty} g^{\theta} (\hat{y}^{\theta}_n )
	\end{equation*}
	as $\sC^0_{n} \subseteq \sC^0_{n+1} \subseteq \sC^0$.
	For the converse inequality, note that
	$g^{\theta}$ is concave and $g^{\theta}(0)=0$, so that
	\begin{equation*}
	g^{\theta}(\hat{y}^{\theta}_n) = g^{\theta}\big((1-\tfrac{1}{n})\, \hat{y}^{\theta}\big) \geq
	 (1-\tfrac{1}{n})\, g^{\theta}(\hat{y}^{\theta}).
	\end{equation*}
	Thus, we conclude that
	\begin{equation*}
	\liminf_{n \to \infty} g^{\theta}(\hat{y}^{\theta}_n)
	\geq g^{\theta}(\hat{y}^{\theta})
	=  \sup_{y \in \sC \cap \sC^0} g^{\theta}(y)
	\end{equation*}
	and the first claim follows. The proof of the second claim is analogous.
\end{proof}

\begin{lemma}\label{le:lsc}
The map $\theta \mapsto \sup_{y \in \sC \cap \sC^0} g^{\theta}(y)$ is real-valued and lower semicontinuous on $\Theta$.\end{lemma}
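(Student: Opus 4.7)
The plan is to handle finiteness and lower semicontinuity separately, reducing the lsc claim to a pointwise-in-$\theta$ continuity statement via Lemma~\ref{le:approx-g}.

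\textbf{Finiteness.} Since $0\in\sC\cap\sC^0$ and a direct check of the definition gives $g^\theta(0)=0$, the supremum is bounded below by $0$. For the upper bound, I would exploit the compactness of $\sC\cap\sC^0$ (Assumption~\ref{assumpt}(i)) so that $|y|$ is bounded uniformly, and then estimate each summand of $g^\theta(y)$ against the quantities appearing in the definition of $\cK$. Specifically, $|y^\top b|$ and $|y^\top c y|$ are controlled by $|b|+|c|$, while a Taylor expansion near $0$ combined with the case-by-case growth estimates on $I_y(z)$ at infinity (logarithmic for $p=0$, polynomial of order $p(1+\varepsilon)$ for $p\in(0,1)$, bounded for $p<0$) yields $|I_y(z)|\leq C(|z|^2\wedge\log(1+|z|))$, etc., hence $|\int I_y\,dF|\leq C\cK$. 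Taking the supremum gives $\sup_y g^\theta(y)\leq C(1+\cK)<\infty$ uniformly in $\theta$.

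\textbf{Reduction for lsc.} By Lemma~\ref{le:approx-g},
\begin{equation*}
\sup_{y\in\sC\cap\sC^0} g^\theta(y)=\sup_{n\in\N}\sup_{y\in\sC\cap\sC^0_n} g^\theta(y),
\end{equation*}
and a pointwise supremum of lsc functions is lsc, so it suffices to show that for each fixed $n$, the map $\theta\mapsto\sup_{y\in\sC\cap\sC^0_n}g^\theta(y)$ is lsc on $\Theta$. In turn, a supremum of continuous functions is lsc, so I am reduced to proving that for each fixed $y\in\sC\cap\sC^0_n$, the map $\theta\mapsto g^\theta(y)$ is continuous on $\Theta$.

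\textbf{Pointwise continuity.} Let $\theta_k=(b_k,c_k,F_k)\to\theta=(b,c,F)$ in $\Theta$ with the product topology. The linear and quadratic parts pass to the limit trivially. The core step is to show $\int I_y\,dF_k\to\int I_y\,dF$. The constraint $y\in\sC^0_n$ gives $1+y^\top z\geq 1/n$ for $F$-a.e.\ $z$ (uniformly over $F\in\cL_\Theta$), so $I_y$ is finite and continuous on $\R^d\setminus\{0\}$, with $|I_y(z)|\leq C|z|^2$ near the origin by Taylor expansion. I would then argue in two steps: (a) truncate the integral at $|z|\leq R$, controlling the tail uniformly in $k$ via $\cK<\infty$ and making it arbitrarily small by choosing $R$ large; (b) on $|z|\leq R$, write $I_y(z)=\phi_R(z)(|z|^2\wedge 1)$ where $\phi_R$ is bounded and continuous on $\R^d\setminus\{0\}$, and apply the Portmanteau theorem to the weakly convergent finite measures $|z|^2\wedge 1.F_k\to|z|^2\wedge 1.F$, noting that the limit measure assigns no mass to $\{0\}$.

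The main obstacle is precisely the behaviour of $I_y/(|z|^2\wedge 1)$ at the origin: it has direction-dependent limits (its leading order is $\frac{p-1}{2}(y^\top z)^2/|z|^2$), so it fails to be continuous at $0$ and one cannot simply read off convergence of $\int I_y\,dF_k$ from the definition of $d_\cL$. This is resolved by the Portmanteau form of weak convergence (which only requires $\mu$-almost-everywhere continuity of the test function together with boundedness), and the unboundedness at infinity is separately handled by the uniform-integrability estimate provided by $\cK<\infty$.
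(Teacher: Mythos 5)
Your overall architecture matches the paper's: finiteness from $0\in\sC\cap\sC^0$ and Assumption~\ref{assumpt}, reduction via Lemma~\ref{le:approx-g} to the sets $\sC\cap\sC^0_n$ (sup of lsc is lsc), then a pointwise statement about $F\mapsto\int I_y\,dF$ for fixed $y\in\sC\cap\sC^0_n$, phrased through the weakly convergent finite measures $|z|^2\wedge 1.F_k$. The gap is in the final step: you claim \emph{continuity} of $F\mapsto\int I_y\,dF$, and your step (a) asserts that the tail $\int_{|z|>R}I_y\,dF_k$ can be made small uniformly in $k$ ``via $\cK<\infty$''. In the logarithmic case $p=0$ this fails: $\sup_{F\in\cL_\Theta}\int|z|^2\wedge\log(1+|z|)\,F(dz)\le\cK$ is a uniform \emph{bound}, not uniform integrability, so mass may escape to infinity while carrying a non-vanishing contribution. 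Concretely, with $d=1$, $y=1$, $F_k=\frac{1}{\log(1+k)}\,\delta_k$, one has $F_k\to 0$ in $(\cL,d_\cL)$ (the measures $|z|^2\wedge 1.F_k$ have total mass $\to 0$) while $\int I_1\,dF_k\to 1\neq 0$; embedding this in an admissible $\Theta$ (measures on $[0,\infty)$ with $\int|z|^2\wedge\log(1+|z|)\,dF\le 1$, $\sC=[0,1]$) shows $F\mapsto g^{(b,c,F)}(y)$ is genuinely \emph{not} upper semicontinuous, so no correct argument can deliver the continuity you aim for. (For $p\in(0,1)$ the extra $\varepsilon$ in $\cK$ does give the uniform tail decay $R^{-p\varepsilon}\cK$, and for $p<0$ the integrand is bounded at infinity on $\sC^0_n$, so your two-step scheme does work there; the obstruction is specific to $p=0$.) A second, minor issue: $I_y$ is undefined on $\{y^\top z\le -1\}$, so to speak of continuity on $\R^d\setminus\{0\}$ you must first replace $I_y$ by $I_y\cdot\psi_n(y^\top z)$ for a continuous cutoff $\psi_n$, which is harmless since all $F\in\cL_\Theta$ charge no mass where the cutoff acts.

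The repair is to prove only \emph{lower} semicontinuity of $F\mapsto\int I_y\,dF$, which suffices since a supremum of lsc functions is lsc. The usable asymmetry is that, after the cutoff, $I_{y,n}(z)\ge K\,(|z|^2\wedge 1)$ for a constant $K$ (depending on $n$) uniformly on $\R^d$, while no such upper bound holds for $p=0$. This is exactly what the paper exploits: it sets $\tilde I_{y,n}=I_{y,n}/(|z|^2\wedge 1)$, observes that this function is lsc and bounded from below, approximates it from below by an increasing sequence of bounded continuous functions, and obtains $\liminf_k\int I_{y,n}\,dF_k\ge\int I_{y,n}\,dF$ from weak convergence of $|z|^2\wedge 1.F_k$ plus monotone convergence. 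Your own truncation scheme survives in one-sided form: the lower bound gives $\int_{|z|>R}I_{y,n}\,dF_k\ge K\sup_k\int_{|z|>R}|z|^2\wedge 1\,dF_k$, which tends to $0$ as $R\to\infty$ by tightness of the convergent sequence $|z|^2\wedge 1.F_k$, while on $\{|z|\le R\}$ your Portmanteau/a.e.-continuity argument (the discontinuity at the origin being a null set for the limit measure) applies verbatim.
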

\begin{proof}
  We note that $\sup_{y \in \sC \cap \sC^0} g^{\theta}(y)>-\infty$ as $0 \in \sC \cap \sC^0$. On the other hand, the two conditions in Assumption~\ref{assumpt} yield that $\sup_{y \in \sC \cap \sC^0} g^{\theta}(y)<\infty$.
  
  We turn to the semicontinuity. Without loss of generality, we may assume that the truncation function $h$ is continuous; cf.\ \cite[Proposition~2.24, p.\,81]{JacodShiryaev.03}.
  Using the form of $g^{\theta}$ and the compactness of $\sC \cap \sC_0$, it suffices to show that for fixed $(b,c) \in \R^d \times \S^d_+$, the map
\begin{equation*}
  F \mapsto \fg(F) :=\sup_{y \in \sC \cap \sC^0} g^{(b,c,F)}(y)
\end{equation*}
is lower semicontinuous on $\cL_{\Theta}$. Consider the map
\begin{equation*}
F \mapsto \fg_{n}(F):=\sup_{y \in \sC \cap \sC_n^0} g^{(b,c,F)}(y)
\end{equation*}
for $n\in\N$. We deduce from Lemma~\ref{le:approx-g} that  $\fg_{n}(F)$ increases to $\fg(F)$ as $n\to\infty$. Therefore, it is sufficient to show that $\fg_{n}$ is lower semicontinuous for fixed $n$, and for this, in turn, it suffices to show that  $F \mapsto g^{(b,c,F)}(y)$ is lower semicontinuous on $\cL_{\Theta}$ for fixed  $y \in \sC \cap \sC^0_{n}$.

To see this, let
\begin{equation*}
I_{y}(z)=\left\{ \begin{array}{ll}
 \log(1+y^\top z)-y^\top h(z) \ &\mbox{if } \ p=0, \\
     &\\
        p^{-1}(1+y^\top z)^p -p^{-1}-y^\top h(z) \ &\mbox{if } \ p\neq 0
\end{array}\right.
\end{equation*}
denote the integrand in the definition of $g^{(b,c,F)}(y)$. Fix a continuous function $\psi_{n}:\R \to [0,1]$ which satisfies $\psi_{n}(u)=1$ for $u\geq -1+ \frac{1}{n}$ and $\psi_{n}(u)=0$ for $u<-1 +\frac{1}{2n}$. As $F[z \in \R^d \, | \, y^\top z< -1+\frac{1}{n}]=0$ and $\psi_n(y^\top z)=1$ on $\{z \in \R^d \, | \, y^\top z\geq -1+\frac{1}{n}\}$, we see that
\begin{equation*}
\int_{\R^d} I_{y}(z) \, F(dz)= \int_{\R^d} I_{y}(z)\, \psi_{n}(y^\top z) \, F(dz)= \int_{\R^d} I_{y,n}(z) \, F(dz), 
\end{equation*}
where we have set $I_{y,n}(z):= I_{y}(z)\, \psi_{n}(y^\top z)$.
Thus, it suffices to show that
\begin{equation*}
F \mapsto \int_{\R^d}I_{y,n}(z) \ F(dz)
\end{equation*}
is lower semicontinuous on $\cL_{\Theta}$. 
Let $F^k\to F$ be a convergent sequence in~$\cL_{\Theta}$.  As $h(z)=z$ in a neighborhood of $0$ and by the property of $\psi_n$, 
\begin{equation*}
z \mapsto  \frac{I_{y,n}(z)}{|z|^2 \wedge 1}
\end{equation*}
is continuous on $\R^d\setminus\{0\}$ and uniformly bounded from below by a constant~$K$. Define on $\R^d$ the function
\begin{equation*}
\tilde{I}_{y,n}(z) =\begin{cases}
    \frac{I_{y,n}(z)}{|z|^2 \wedge 1} & \mbox{if } \  z\neq 0,\\
   K& \mbox{if } \ z=0.
  \end{cases}
  \end{equation*}
  By construction, $z \mapsto \tilde{I}_{y,n}(z)$
is lower semicontinuous and uniformly bounded from below on $\R^d$. Thus, there exist bounded continuous functions $\tilde{I}^m_{y,n}(z)$ which increase to $\tilde{I}_{y,n}(z)$; cf.\ \cite[Lemma~7.14, p.\,147]{BertsekasShreve.78}. For any $F(dz)\in \cL$, let $\tilde{F}(dz):= |z|^2\wedge 1 . F(dz)$ be the finite measure 
$
A \mapsto \int_A |z|^2 \wedge 1 \, F(dz).
$
By the definition of the topology on $\mathcal{L}$, we have that $F^k\to F$  if and only if $\tilde{F}^k\to\tilde{F}$ in the sense of weak convergence. Using that $F(\{0\})=\tilde{F}(\{0\})=0$ for any $F\in \cL$ and Fatou's Lemma, we obtain that
\begin{align*}
\liminf_{k \to \infty} \int_{\R^d}I_{y,n}(z) \ F^k(dz) 
  = & \  \liminf_{k \to \infty} \int_{\R^d} \tilde{I}_{y,n}(z)  \ \tilde{F}^k(dz)\\
  \geq & \ \lim\limits_{m \to \infty} \, \liminf_{k \to \infty} \int_{\R^d} \tilde{I}^m_{y,n}(z)  \ \tilde{F}^k(dz)\\
  = & \ \lim\limits_{m \to \infty}  \int_{\R^d} \tilde{I}^m_{y,n}(z)  \ \tilde{F}(dz)\\
  \geq & \   \int_{\R^d} \tilde{I}_{y,n}(z)  \ \tilde{F}(dz)\\
 = & \ \int_{\R^d} I_{y,n}(z) \ F(dz).
\end{align*}
This completes the proof. 
\end{proof}

We can now show the relevant properties of the function $g^{\theta}(y)$ defined in~\eqref{eq:fct_g-bcF}.

\begin{proposition}\label{prop:local-minimax}
There exists $\hat{y} \in \sC\cap\sC^0$ such that
\begin{equation*}
\inf_{\theta\in \Theta} g^{\theta}(\hat{y})=\sup_{y \in \sC\cap\sC^0} \inf_{\theta\in \Theta} g^{\theta}(y)= \inf_{\theta\in \Theta} \sup_{y \in \sC\cap\sC^0}  g^{\theta}(y).
\end{equation*}
 If $\Theta\subseteq \R^d\times \S^d_+ \times \cL$ is compact,  there exists $\hat{\theta} \in \Theta$ such that
 $(\hat{\theta},\hat{y})$ is a saddle point for the function $g^{\theta}(y)$ on $\Theta\times \sC\cap\sC^0$.
\end{proposition}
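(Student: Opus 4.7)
The plan is to establish the minimax identity by a compactness-plus-finite-subcover argument that leverages the USC concavity of $g^\theta$ in $y$ together with the linearity of $g^\theta$ in $\theta$ (and convexity of $\Theta$). This sidesteps the fact that $\theta\mapsto g^\theta(y)$ need not be LSC when $y$ lies on the boundary of $\sC^0$, so Sion's theorem cannot be applied directly on $\Theta\times(\sC\cap\sC^0)$. The saddle point in the compact case will then be obtained by adjoining a minimizer of $\fg(\theta):=\sup_{y}g^\theta(y)$ via Lemma~\ref{le:lsc}. The existence of $\hat y$ is immediate: by Lemma~\ref{le:fct_g}, $g:=\inf_\theta g^\theta$ is proper, concave and USC on the compact set $\sC\cap\sC^0$ and hence attains its supremum at some $\hat y$, yielding $\inf_\theta g^\theta(\hat y)=\sup_y\inf_\theta g^\theta(y)=:v$.

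The trivial inequality gives $v\le V:=\inf_\theta\sup_y g^\theta(y)$. Suppose for contradiction that $v<\alpha<V$. Then $g(y)<\alpha$ for every $y\in\sC\cap\sC^0$, so each $y$ admits some $\theta(y)\in\Theta$ with $g^{\theta(y)}(y)<\alpha$, and each set $\{y':g^{\theta(y)}(y')<\alpha\}$ is open by USC of $g^{\theta(y)}$. Compactness of $\sC\cap\sC^0$ yields finitely many $\theta_1,\dots,\theta_k\in\Theta$ with $\min_{i}g^{\theta_i}(y)<\alpha$ on $\sC\cap\sC^0$. The function $(y,\lambda)\mapsto f(y,\lambda):=\sum_i\lambda_ig^{\theta_i}(y)$ on $(\sC\cap\sC^0)\times\Delta_{k-1}$ is concave USC in $y$ and linear continuous in $\lambda$, with both domains compact and convex, so Sion's minimax theorem gives
\begin{equation*}
\inf_{\lambda\in\Delta_{k-1}}\max_{y\in\sC\cap\sC^0}f(y,\lambda)=\max_{y\in\sC\cap\sC^0}\min_{i}g^{\theta_i}(y)<\alpha.
\end{equation*}
Letting $\lambda^\ast$ attain this infimum (a minimum, by compactness of $\Delta_{k-1}$ and continuity of $\lambda\mapsto\max_y f(y,\lambda)$) and setting $\theta^\ast:=\sum_i\lambda_i^\ast\theta_i\in\Theta$ by convexity of $\Theta$, the linearity of $g^\theta$ in $\theta$ gives $g^{\theta^\ast}=\sum_i\lambda_i^\ast g^{\theta_i}$, hence $\sup_y g^{\theta^\ast}(y)<\alpha$. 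This contradicts $V>\alpha$, proving $V=v$.

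For the saddle point, assume $\Theta$ is compact. By Lemma~\ref{le:lsc}, $\fg$ is real-valued and LSC on $\Theta$, hence attains its minimum at some $\hat\theta$, and the chain
\begin{equation*}
g^{\hat\theta}(\hat y)\le\sup_y g^{\hat\theta}(y)=\fg(\hat\theta)=V=v=\inf_\theta g^\theta(\hat y)\le g^{\hat\theta}(\hat y)
\end{equation*}
forces equality throughout, which is exactly the saddle point property for $(\hat\theta,\hat y)$. The main technical obstacle, as noted, is the boundary behaviour of $g^\theta$ that spoils LSC in $\theta$; the finite-subcover device circumvents it by exchanging $\Theta$ for the finite-dimensional simplex $\Delta_{k-1}$, and convexity of $\Theta$ with linearity of $g^\theta$ in $\theta$ transfers the resulting minimizer back.
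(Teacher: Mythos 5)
Your argument is correct in substance but follows a genuinely different route for the minimax identity. The paper simply invokes Sion's theorem (in the extended form for $[-\infty,\infty)$-valued concave--convex functions cited from Pollard, which requires upper semicontinuity and compactness only on the maximization side and mere convexity in $\theta$) directly on $\Theta\times(\sC\cap\sC^0)$, and then obtains $\hat y$ and $\hat\theta$ exactly as you do, from Lemma~\ref{le:fct_g} and Lemma~\ref{le:lsc} respectively. You instead unwind the infinite-dimensional minimax theorem by hand: the finite subcover extracted from the USC sets $\{g^{\theta(y)}<\alpha\}$, the reduction to a minimax over the simplex $\Delta_{k-1}$, and the transfer back to $\Theta$ via its convexity and the affineness of $\theta\mapsto g^\theta(y)$. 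This is essentially a self-contained proof of the Kneser--Fan theorem in this setting; it is longer but makes explicit where the convexity of $\Theta$ enters, which the paper's citation hides.

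One soft spot: your stated motivation is that $\theta\mapsto g^\theta(y)$ fails to be LSC at boundary points of $\sC^0$, but the finite-dimensional function $f(y,\lambda)=\sum_i\lambda_i g^{\theta_i}(y)$ has exactly the same defect --- if $g^{\theta_i}(y)=-\infty$ for some $i$ (which occurs when $F_i$ charges $\{z:\,y^\top z=-1\}$, or for $p<0$ when the integral diverges), then $\lambda\mapsto f(y,\lambda)$ is convex but \emph{not} continuous, nor LSC, in $\lambda$. So the literal Sion theorem does not apply to $f$ either, and your detour does not actually avoid the issue it was designed to circumvent; you still need the same extended-valued minimax theorem (or a separate reduction to $\sC\cap\sC^0_n$ via Lemma~\ref{le:approx-g}) at the finite-dimensional stage. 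Relatedly, the parenthetical claim that $\lambda\mapsto\max_y f(y,\lambda)$ is continuous, used to get attainment of the $\lambda$-infimum, is unjustified for the same reason --- but this is harmless, since you only need \emph{some} $\lambda^\ast$ with $\sup_y f(y,\lambda^\ast)<\alpha$, which the strict inequality on the infimum already provides. With the extended-valued minimax theorem acknowledged at that one point, the proof is complete.
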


\begin{proof}
Recall that $\sC\cap\sC^0$ and $\Theta$ are non-empty convex sets and that $\sC\cap\sC^0$ is compact. For fixed $\theta\in \Theta$, the function $y \mapsto g^{\theta}(y)$ is concave and upper semicontinuous (Lemma~\ref{le:fct_g}), whereas for fixed $y \in \sC\cap\sC^0$, the function $\theta \mapsto g^{\theta}(y)$ is convex. Thus, we deduce from Sion's minimax theorem~\cite[Theorem~4.2]{Sion.58} that 
\begin{equation*}
\inf_{\theta\in \Theta} \sup_{y \in \sC\cap\sC^0}  g^{\theta}(y) = \sup_{y \in \sC\cap\sC^0} \inf_{\theta\in \Theta} g^{\theta}(y).
\end{equation*}
To be precise, we require an extension of that theorem to functions taking values in $[-\infty,\infty)$; see, e.g., \cite[Appendix~E.2]{Pollard.13}.
As $y \mapsto \inf_{\theta\in \Theta} g^{\theta}(y)$ is upper semicontinuous (Lemma~\ref{le:fct_g}), we also obtain $\hat{y} \in \sC\cap\sC^0$ such that
\begin{equation*}
\inf_{\theta\in \Theta} g^{\theta}(\hat{y})=\sup_{y \in \sC\cap\sC^0} \inf_{\theta\in \Theta} g^{\theta}(y).
\end{equation*}
Assume that $\Theta$ is compact. Then, as  $\theta\mapsto \sup_{y \in \sC\cap\sC^0} g^{\theta}(y)$ is lower semicontinuous (Lemma~\ref{le:lsc}), there exists $\hat{\theta}\in \Theta$ such that
\begin{equation*}
 \sup_{y \in \sC\cap\sC^0}  g^{\hat{\theta}}(y)=\inf_{\theta\in \Theta} \sup_{y \in \sC\cap\sC^0}  g^{\theta}(y).
\end{equation*}
In view of the above minimax identity, $(\hat{\theta}, \hat{y})$ is a saddle point.
\end{proof}

\begin{remark}\label{rem:local-minimax}
For later use, we note that Proposition~\ref{prop:local-minimax} also holds true with respect to $\sC\cap \sC^0_n$ instead of $\sC\cap\sC^0$. Indeed, we may apply the proposition to the modified constraint $\tilde{\sC}=\sC\cap \sC^0_n$ instead of $\sC$.
\end{remark}

\section{Proofs for Logarithmic Utility}\label{se:proof-log}

In this section we focus on the logarithmic case $p=0$ and prove Theorems~\ref{thm} and~\ref{thm-compact}.  By scaling, we may assume that the initial capital is $x_{0}=1$, and we recall that Assumption~\ref{assumpt} is in force. Because the logarithmic utility turns out to be myopic under our specific setting of model uncertainty, the passage from the local results in the preceding section to the global ones is relatively direct.

\begin{lemma}\label{le:CanDec}
Let $P \in \fP$ have differential characteristics $\theta^{P}=(b^P,c^P,F^P)$ and let $\pi \in \cA$. Then
\begin{equation*}
E^P[\log(W^\pi_T)]= E^P\Big[\int_{0}^T g^{\theta^{P}_{s}}(\pi_s) \,ds\Big] \in [-\infty,\infty).
\end{equation*}
\end{lemma}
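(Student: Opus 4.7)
The plan is to apply the multiplicative Itô formula to $W^\pi = \cE(\int \pi\,dX)$, identify the compensator term with $\int_0^T g^{\theta^P_s}(\pi_s)\,ds$, and show that the leftover local-martingale terms have zero expectation. Concretely, writing $Z := \int \pi\,dX$, the stochastic logarithm formula yields
\[
\log W^\pi_T = Z_T - \tfrac{1}{2}\langle Z^c\rangle_T + \sum_{s\le T}\big[\log(1+\pi_s^\top\Delta X_s) - \pi_s^\top\Delta X_s\big],
\]
which is well-defined $P$-a.s.\ because $\pi\in\cA$ ensures $W^\pi>0$ throughout $[0,T]$.

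Next, I would substitute the canonical representation $X = B^P + X^{c,P} + h*(\mu^X-\nu^P) + (z-h)*\mu^X$ into $Z_T$, using $\langle Z^c\rangle_T = \int_0^T \pi_s^\top c^P_s\pi_s\,ds$ and $\Delta Z_s = \pi_s^\top \Delta X_s$. Merging the two $\mu^X$-integrals produces $[\log(1+\pi^\top z)-\pi^\top h(z)]*\mu^X_T$, and compensating by $\nu^P$ cleanly yields
\[
\log W^\pi_T = \int_0^T g^{\theta^P_s}(\pi_s)\,ds + L_T,
\]
where $L$ is the sum of the three local martingales $\int\pi\,dX^{c,P}$, $(\pi^\top h)*(\mu^X-\nu^P)$ and $I_{\pi_s}(z)*(\mu^X-\nu^P)$.

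Third, I would exploit Assumption~\ref{assumpt} together with the compactness of $\sC\cap\sC^0$ to bound everything. A Taylor estimate near $0$ combined with the elementary inequality $\log(1+\pi^\top z)\le \log(1+|\pi||z|)$ for large $z$ shows that $I_y(z)^+ \le C(|z|^2\wedge\log(1+|z|))$ uniformly for $y\in\sC\cap\sC^0$, so $E^P[\int_0^T g^{\theta^P_s}(\pi_s)^+\,ds]\le CT\cK<\infty$. The bounds on $b^P,c^P$ and on $\int |z|^2\wedge 1\,F^P(dz)$ make the first two local martingales in $L$ into true $L^2$-martingales. This already yields the identity whenever $E^P[\int g^{\theta^P_s}(\pi_s)^-\,ds]<\infty$, since then the third martingale is also absolutely integrable.

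The main obstacle is the case when $\int_0^T g^{\theta^P_s}(\pi_s)\,ds = -\infty$ on a set of positive probability, where the compensator of the log-integrand may not be $\nu^P$-integrable. Here I would use the truncation $\pi^n := (1-1/n)\pi$ from Lemma~\ref{le:approx-g}: since $\pi^\top z\ge -1$ $F^P$-a.e., one has $1+(\pi^n)^\top z\ge 1/n$, so $I_{\pi^n}$ is bounded below, $\pi^n\in\cA$, and the identity
\[
E^P[\log W^{\pi^n}_T] = E^P\Big[\int_0^T g^{\theta^P_s}(\pi^n_s)\,ds\Big]
\]
follows from the preceding step. Then I would pass to the limit: by concavity of $g^\theta$ with $g^\theta(0)=0$, $\liminf_n g^{\theta}(\pi^n_s)\ge g^{\theta}(\pi_s)$, while upper semicontinuity gives the reverse inequality, so $g^{\theta^P_s}(\pi^n_s)\to g^{\theta^P_s}(\pi_s)$ pointwise; the uniform upper bound on $g^+$ allows reverse Fatou on the right-hand side, while on the left $\log W^{\pi^n}_T\to \log W^\pi_T$ in probability by continuity of the stochastic exponential and is dominated from above by the same uniform bound. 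Together this transfers the identity to $\pi$, completing the proof with both sides lying in $[-\infty,\infty)$.
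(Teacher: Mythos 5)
Your first half tracks the paper's argument closely: the canonical representation of $\int\pi\,dX$, It\^o's formula for the logarithm of the stochastic exponential, and the BDG estimates from Assumption~\ref{assumpt} showing that the continuous and compensated--small-jump parts are true martingales. The decisive divergence, and the gap, lies in how the large-jump term is treated. The paper deliberately does \emph{not} compensate $\int_0^T\int_{\R^d}\big[\log(1+\pi_s^\top z)-\pi_s^\top h(z)\big]\,\mu^X(dz,ds)$: it takes its expectation directly, applying \cite[Theorem~II.1.8, p.\,66]{JacodShiryaev.03} (namely $E^P$ of an integral against $\mu^X$ equals $E^P$ of the integral against $\nu^P$ for nonnegative integrands) to the positive and negative parts $I^{+}_{\pi_s}(z)$ and $I^{-}_{\pi_s}(z)$ separately. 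Since Assumption~\ref{assumpt} makes the positive part $\nu^P$-integrable, the convention $\infty-\infty=-\infty$ yields the identity in one stroke, including the case where both sides equal $-\infty$. You instead compensate this term into a third local martingale, which is only meaningful when $I^{-}_{\pi}$ is suitably $\nu^P$-integrable, and you try to reach the remaining case via the truncation $\pi^n=(1-\tfrac{1}{n})\pi$.

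That limiting step is where your argument breaks down. In the degenerate case the claim to be proved is $E^P[\log W^\pi_T]=-\infty$. From $\log W^{\pi^n}_T\to\log W^\pi_T$ together with domination \emph{from above}, reverse Fatou gives only $\limsup_n E^P[\log W^{\pi^n}_T]\le E^P[\log W^\pi_T]$; combined with $E^P[\log W^{\pi^n}_T]=E^P[\int_0^T g^{\theta^P_s}(\pi^n_s)\,ds]\to-\infty$, this reads $-\infty\le E^P[\log W^\pi_T]$ and is vacuous. The inequality you actually need, $E^P[\log W^\pi_T]\le\liminf_n E^P[\log W^{\pi^n}_T]$, is a Fatou bound from \emph{below} and would require an integrable minorant of $\log W^{\pi^n}_T$ --- precisely what is unavailable in the degenerate case. (Note also that whenever $E^P[\int_0^T g^{\theta^P_s}(\pi_s)^-\,ds]<\infty$ your first step already applies directly, so the truncation adds nothing there; its only purpose is the case it cannot handle.) The missing ingredient is the compensation identity for the \emph{nonnegative} integrand $I^{-}_{\pi_s}(z)$: $E^P\big[\int_0^T\int I^{-}_{\pi_s}(z)\,\mu^X(dz,ds)\big]=E^P\big[\int_0^T\int I^{-}_{\pi_s}(z)\,F^P_s(dz)\,ds\big]$, which shows $E^P[(\log W^\pi_T)^-]=\infty$ whenever the right-hand side is infinite. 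This is exactly the paper's route and closes the case your limit argument cannot reach.
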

\begin{proof}
Let $\mu^X$ be the integer-valued random measure associated with the jumps of $X$. Under $P$, the stochastic integral $\int \pi\, dX$ has the canonical representation 
\begin{equation*}
\int_0^\cdot \pi_s \, dX_s = M^c + M^d + \int_0^\cdot \pi_s^\top b^P_s \,ds + \int_0^\cdot \pi_s \,dJ_s
\end{equation*}
where $M^{c}$ and $M^{d}$ are continuous and purely discontinuous local martingales, respectively, and
\begin{align*}
\langle M^c\rangle &= \int_0^\cdot \pi_s^\top c^P_s \pi_s \,ds,\\
[M^d] &=\int_0^\cdot \int_{\R^d} \big(\pi_s^\top h(z)\big)^2 \,\mu^X(dz,ds) ,\\
J &= \int_0^\cdot \int_{\R^d} \big(z-h(z)\big)\,\mu^X(dz,ds);
\end{align*}
 cf.\ \cite[Theorem~2.34, p.\,84]{JacodShiryaev.03} and \cite[Proposition~2.2]{NeufeldNutz.13a}. We claim that~$M^c$ and~$M^d$ are true martingales.
Indeed, Jensen's inequality and Assumption~\ref{assumpt} imply that 
\begin{align*}
  E^P\Big[ \big|[M^{d}]_T \big|^{1/2}\Big]
  & \leq E^P\bigg[\int_0^T \int_{\R^d} |\pi_s|^2\,|h(z)|^2   \, \mu^X(dz,ds) \bigg]^{1/2} \\
  & \leq C\, E^P\bigg[\int_0^T \int_{\R^d} |z|^2\wedge 1  \, F^P_{s}(dz)\,ds \bigg]^{1/2} \\
  & \leq  C\, \mathcal{K}^{1/2} T^{1/2}.
  \end{align*}
  for a constant $C$. Thus, the Burkholder--Davis--Gundy inequalities yield 
  \begin{equation*}\label{eq:estimateJumpMartBDG}
  E^P\bigg[\sup_{0\leq u \leq T} \big| M^d_u \big| \bigg] \leq C\, E^P\Big[\big|[M^d]_T \big|^{1/2}\Big] \leq C_\cK \,T^{1/2}.
  \end{equation*}
  Similarly, Assumption~\ref{assumpt} also implies that
  \begin{equation*}\label{eq:estimateContMartBDG}
    E^P\bigg[\sup_{0\leq u \leq T} \big| M^{c}_u \big| \bigg]\leq C_\cK\, T^{1/2}
\end{equation*}
and we conclude that $M^c$ and $M^d$ are true martingales. Recall that
$W^\pi>0$ and $W_{-}^\pi>0$ $P$-a.s.\ because $\pi\in\cA$; cf.\ \cite[Theorem~I.4.61, p.\,59]{JacodShiryaev.03}.  Thus, It\^o's formula yields that
\begin{align*}\log(W^\pi_T) &=  M^c_T +  M^d_T + \int_0^T \pi_s^\top b^P_s\,ds - \frac{1}{2}\int_0^T \pi_s^\top c^P_s \pi_s \,ds\\
& \quad\;+\int_0^T\int_{\R^d} \big[\log(1+\pi_s^\top z) -\pi_s^\top h(z)\big]\,\mu^X(dz,ds).
\end{align*}
In view of Assumption~\ref{assumpt} and \cite[Theorem~II.1.8, p.\,66]{JacodShiryaev.03}, taking expected values yields the result.
\end{proof}

The next three lemmas constitute the proof of Theorem~\ref{thm}. 

\begin{lemma}\label{le:C^0*}
Let $\hat{y}\in \argmax_{y \in \sC\cap\sC^0}\inf_{\theta\in \Theta} g^{\theta}(y)$. Then $\hat{y}$, seen as a constant process, is an element of $\cA$.
\end{lemma}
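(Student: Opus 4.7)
The admissibility conditions for $\hat y$ to lie in $\cA$ are: (a) predictability and $X$-integrability under every $P\in\fP$; (b) $\hat y_t(\omega)\in\sC\cap\sC^0$ for all $(\omega,t)$; and (c) $\log(W^{\hat y}_T)>-\infty$ $P$-a.s.\ for every $P\in\fP$. Conditions (a) and (b) are essentially free: $\hat y$ is a bounded deterministic constant taking values in $\sC\cap\sC^0$, and since $X$ is a semimartingale under every $P\in\fP$, integrability of a constant integrand is automatic. The real content is (c), i.e.\ the strict positivity of $W^{\hat y}_T=\cE(\hat y^{\top} X)_T$ under every model $P$.

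The plan for (c) relies on the maximization property of $\hat y$. Since $0\in\sC\cap\sC^0$ and a direct inspection of~\eqref{eq:fct_g-bcF} gives $g^{\theta}(0)=0$ for every $\theta\in\Theta$, we have
\[
  \inf_{\theta\in\Theta} g^{\theta}(\hat y)\;=\;\sup_{y\in\sC\cap\sC^0}\inf_{\theta\in\Theta} g^{\theta}(y)\;\geq\;\inf_{\theta\in\Theta} g^{\theta}(0)\;=\;0.
\]
Because the infimum is nonnegative, $g^{\theta}(\hat y)\geq 0>-\infty$ for \emph{every} single $\theta=(b,c,F)\in\Theta$, not merely in the infimum. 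This pointwise finiteness is the decisive point: inspecting the integrand $I_{\hat y}(z)=\log(1+\hat y^{\top} z)-\hat y^{\top} h(z)$, one sees that $I_{\hat y}(z)=-\infty$ on $\{z:\hat y^{\top} z=-1\}$, so any $F\in\cL_\Theta$ charging this set would force $g^{(b,c,F)}(\hat y)=-\infty$. Hence $F[\{z:\hat y^{\top} z=-1\}]=0$ for every $F\in\cL_\Theta$.

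Combined with the defining property of $\sC^0$, which yields $F[\{z:\hat y^{\top} z<-1\}]=0$ for every $F\in\cL_\Theta$, we conclude $F[\{z:\hat y^{\top} z\leq -1\}]=0$ for every $F\in\cL_\Theta$. For any $P\in\fP$ the jump characteristic $F^{P}_{t}$ takes values in $\cL_{\Theta}$, $P\otimes dt$-a.e., so the compensator $\nu^{P}(dz,dt)=F^{P}_{t}(dz)\,dt$ assigns zero measure to $\{(t,z):\hat y^{\top} z\leq -1\}$. Consequently the jump measure $\mu^{X}$ does not charge this set, $P$-a.s., i.e.\ $\hat y^{\top}\Delta X_t>-1$ for all $t\leq T$, $P$-a.s. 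By the standard positivity criterion for stochastic exponentials (\cite[Theorem~I.4.61]{JacodShiryaev.03}), this yields $W^{\hat y}>0$ on $[0,T]$, $P$-a.s., and hence $\log(W^{\hat y}_T)>-\infty$, $P$-a.s. Since $P\in\fP$ was arbitrary, (c) holds and $\hat y\in\cA$. The main obstacle is thus not technical but conceptual: one must exploit the saddle-type inequality $\inf_{\theta}g^{\theta}(\hat y)\geq 0$ to upgrade the weak boundary condition $\hat y^{\top} z\geq -1$ from $\sC^0$ to the strict inequality $\hat y^{\top} z>-1$ needed for strict positivity of the wealth process.
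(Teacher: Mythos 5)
Your proposal is correct and follows essentially the same route as the paper: reduce admissibility to $W^{\hat y}>0$ $P$-a.s.\ via \cite[Theorem~I.4.61]{JacodShiryaev.03}, use $g^{\theta}(0)=0$ and the maximality of $\hat y$ to get $g^{\theta}(\hat y)\geq 0>-\infty$ for every $\theta\in\Theta$, and conclude from the form of the integrand that no $F\in\cL_\Theta$ charges $\{z:\hat y^\top z\leq -1\}$. You merely spell out in more detail the final step that the paper compresses into ``the claim now follows from the definition of $g^{\theta}$.''
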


\begin{proof}
  We need to show that $W^{\hat{y}}>0$ $P$-a.s.\ for all $P \in \fP$, which by \cite[Theorem~I.4.61, p.\,59]{JacodShiryaev.03} is equivalent to $\hat{y} \in \sC^{0,*}$,
  where
  \begin{equation*}%
  \sC^{0,*} := \bigcap_{F \in \cL_{\Theta}} \big\{ y \in \R^d \, \big| \, F[z \in \R^d \, | \, y^\top z\leq -1]=0 \big\}.
  \end{equation*}
 As $0 \in \sC\cap\sC^0$ and $g^{\theta}(0)=0$ for all $\theta\in \Theta$, we have 
 \begin{equation*}
 \inf_{\theta\in \Theta} g^{\theta}(\hat{y})
 \geq \inf_{\theta\in \Theta} g^{\theta}(0)=0
 \end{equation*}
 and in particular $g^{\theta}(\hat{y})>-\infty$ for all $\theta\in \Theta$. The claim now follows from the definition of $g^{\theta}$.
\end{proof}

\begin{lemma}\label{le:proof-thm-l1-log}
	We have $u(1)<\infty$. Any 
	$
	  \hat{y}\in \argmax _{y \in \sC\cap \sC^0} \inf_{\theta\in \Theta} g^{\theta}(y)
	$
	satisfies
	\begin{equation*} %
	\inf_{P\in \fP}E^P[\log(W_T^{\hat{y}})]
	=\sup_{\pi \in \cA}\inf_{P\in \fP}E^P[\log(W_T^{\pi})]
	=\inf_{P\in \fP} \sup_{\pi \in \cA}E^P[\log(W_T^{\pi})]
	\end{equation*}
	and this value is given by $T \,\inf_{\theta\in \Theta} g^{\theta}(\hat{y})$.
\end{lemma}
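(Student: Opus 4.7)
The plan is to sandwich the three expressions in the displayed chain between $T\inf_{\theta\in\Theta} g^\theta(\hat y)$ and itself. Labelling those three expressions $(a),(b),(c)$ in order, one has the trivial chain $(a)\le(b)\le(c)$: the first inequality follows because $\hat y$, viewed as a constant process, is admissible by Lemma~\ref{le:C^0*}, and the second is the weak minimax. It therefore suffices to show that $(a)\ge T\inf_\theta g^\theta(\hat y)$ and that $(c)\le T\inf_\theta g^\theta(\hat y)$.

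For $(a)$, I would apply Lemma~\ref{le:CanDec} to $\pi\equiv\hat y$ under an arbitrary $P\in\fP$. Since $\theta^P_s(\omega)\in\Theta$ for $P\otimes dt$-a.e.\ $(\omega,s)$,
\[
E^P[\log(W_T^{\hat y})]
= E^P\!\left[\int_0^T g^{\theta^P_s}(\hat y)\,ds\right]
\ge E^P\!\left[\int_0^T \inf_{\theta\in\Theta}g^\theta(\hat y)\,ds\right]
= T\inf_{\theta\in\Theta}g^\theta(\hat y),
\]
and taking $\inf_{P\in\fP}$ yields the desired lower bound on $(a)$. For $(c)$, I would restrict the outer infimum to the subset $\fP_L\subseteq\fP$ of L\'evy laws. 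If $P\in\fP_L$ has triplet $\theta\in\Theta$, its differential characteristics are identically $\theta$, so Lemma~\ref{le:CanDec} gives, for every $\pi\in\cA$,
\[
E^P[\log(W_T^\pi)]
= E^P\!\left[\int_0^T g^\theta(\pi_s)\,ds\right]
\le T\sup_{y\in\sC\cap\sC^0} g^\theta(y),
\]
since $\pi_s\in\sC\cap\sC^0$. Taking $\sup_{\pi\in\cA}$ and then $\inf_{\theta\in\Theta}$ produces $(c)\le T\inf_\theta\sup_y g^\theta(y)$, and the local minimax identity of Proposition~\ref{prop:local-minimax} rewrites the right-hand side as $T\inf_\theta g^\theta(\hat y)$.

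Combining the two bounds pins every quantity in the chain to $T\inf_{\theta\in\Theta} g^\theta(\hat y)$, which is the announced value. Finiteness $u(1)<\infty$ is then immediate, since $u(1)\le(c)\le T\inf_\theta\sup_y g^\theta(y)<\infty$ by Lemma~\ref{le:lsc}. The main obstacle I foresee is essentially none: the myopic structure of log-utility is already encoded in the pointwise decomposition of Lemma~\ref{le:CanDec}, and all delicate matters (existence of $\hat y$, upper semicontinuity, the local minimax) have been handled in Section~\ref{se:localAnalysis}. The only point requiring mild care is that the upper bound on $(c)$ should be obtained by restricting to $\fP_L$ rather than attempting to optimize $\pi$ pointwise under a generic $P\in\fP$ with random, time-varying characteristics.
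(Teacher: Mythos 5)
Your proof is correct and follows essentially the same route as the paper: the myopic decomposition of Lemma~\ref{le:CanDec}, restriction to the L\'evy laws $\fP_L$ to bound the $\inf\sup$ from above, and the local minimax identity of Proposition~\ref{prop:local-minimax}. The only difference is organizational---you pin all three quantities to $T\inf_\theta g^\theta(\hat y)$ in a single sandwich, whereas the paper first establishes the optimality of $\hat y$ and then derives the minimax identity separately---but the ingredients and logic coincide.
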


\begin{proof}
	Let $\pi\in \cA$ and let $\theta^{P}$ denote the differential characteristics of $P$. Using Lemma~\ref{le:CanDec}, we have that
	\begin{align}
	\inf_{P\in \fP} E^P[\log(W^{\pi}_T)] &= \inf_{P\in \fP} E^P\Big[\int_0^T g^{\theta^P_s}(\pi_s)\,ds\Big] \nonumber\\
	&\leq \inf_{P\in \fP_{L}} E^P\Big[\int_0^T g^{\theta^P}(\pi_s)\,ds\Big], \nonumber\\
	&\leq \inf_{P\in \fP_{L}} E^P\Big[\int_0^T \sup_{y \in \sC\cap\sC^0} g^{\theta^P}(y)\,ds\Big] \nonumber \\
	&= T \inf_{\theta\in \Theta} \sup_{y\in \sC\cap \sC^0} g^{\theta}(y). \label{eq:pf-part1-2}
	\end{align}
	By Proposition~\ref{prop:local-minimax}, we have
	$
	\inf_{\theta\in \Theta} \sup_{y \in \sC\cap\sC^0}  g^{\theta}(y)
	= \inf_{\theta\in \Theta} g^{\theta}(\hat{y})
	$
	and thus
	\begin{align}
	\inf_{P\in \fP} E^P[\log(W^{\pi}_T)]
	&\leq T \inf_{\theta\in \Theta} g^{\theta}(\hat{y}) \nonumber\\
	&= \inf_{P\in \fP} E^P\Big[\int_0^T \inf_{\theta\in \Theta} g^{\theta}(\hat{y}) \, ds \Big] \nonumber\\
	&\leq  \inf_{P\in \fP} E^P\Big[\int_0^T  g^{\theta^P_s}(\hat{y}) \, ds \Big]. \nonumber\\
	&= \inf_{P\in \fP} E^P[\log(W^{\hat{y}}_T)],\nonumber
	\end{align}
	where  Lemma~\ref{le:CanDec} was again used. As $\pi \in \cA$ was arbitrary, we conclude that
	\begin{equation*}
	\sup_{\pi \in \cA} \inf_{P\in \fP} E^P[\log(W^{\pi}_T)]
	\leq T \inf_{\theta\in \Theta} g^{\theta}(\hat{y})
	\leq \inf_{P\in \fP} E^P[\log(W^{\hat{y}}_T)]. 
	\end{equation*}
  But then these inequalities must be equalities, as claimed.
	In particular, we have that $u(1)=T  \inf_{\theta\in \Theta} g^{\theta}(\hat{y})<\infty$; cf.\ Lemma~\ref{le:fct_g}.
	
	It remains to prove the minimax identity. To this end, note that
	\begin{align*}
	T  \inf_{\theta\in \Theta} \sup_{y \in \sC\cap\sC^0} g^{\theta}(y)
	&= \inf_{P \in \fP_{L}} E^P\Big[\int_0^T \sup_{y \in \sC\cap\sC^0} g^{\theta^P}(y) \, ds\Big] \nonumber\\
	&\geq \inf_{P \in \fP_{L}} \sup_{\pi \in \cA}E^P\Big[\int_0^T  g^{\theta^P}(\pi_s) \, ds\Big]\\
	&\geq \inf_{P \in \fP} \sup_{\pi \in \cA}E^P\Big[\int_0^T  g^{\theta_s^P}(\pi_s) \, ds\Big]. 
	\end{align*}
	Using also Proposition~\ref{prop:local-minimax} 
	and Lemma~\ref{le:CanDec}, we conclude  that
	\begin{align*}
	\sup_{\pi \in \cA} \inf_{P\in \fP} E^P[\log(W^{\pi}_T)] &=T  \inf_{\theta\in \Theta} \sup_{y \in \sC\cap\sC^0} g^{\theta}(y)\\
	&\geq \inf_{P \in \fP} \sup_{\pi \in \cA}E^P\Big[\int_0^T  g^{\theta_s^P}(\pi_s) \, ds\Big]\\
	&= \inf_{P \in \fP} \sup_{\pi \in \cA} E^P[\log(W^\pi_T)].
	\end{align*}
	The converse inequality is trivial, so the proof is complete.
\end{proof}

It remains to prove the third assertion of Theorem~\ref{thm}. 

\begin{lemma}\label{le:proof-thm-l2-log}
	Any constant $\tilde{\pi}\in \cA$ satisfying 
	\begin{equation} \label{eq:le:proof-thm-l2-log}
	\inf_{P\in \fP} E^P[\log(W^{\tilde{\pi}}_T)]=\sup_{\pi \in \cA} \inf_{P\in \fP} E^P[\log(W^{\pi}_T)]
	\end{equation}
	is an element of $\argmax _{y \in \sC\cap \sC^0} \inf_{\theta\in \Theta} g^{\theta}(y)$.
\end{lemma}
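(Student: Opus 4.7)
The strategy is to compare the value achieved by the constant $\tilde{\pi}$ against a L\'evy-only subfamily of models, where the expected log-utility reduces to a deterministic multiple of $g^\theta(\tilde{\pi})$; combined with the value identity from Lemma~\ref{le:proof-thm-l1-log}, this will force $\tilde{\pi}$ to maximize $\inf_\theta g^\theta$.

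\textbf{Step 1.} First I would invoke Lemma~\ref{le:proof-thm-l1-log} to rewrite the right-hand side of \eqref{eq:le:proof-thm-l2-log} as
\begin{equation*}
\sup_{\pi \in \cA} \inf_{P\in \fP} E^P[\log(W^{\pi}_T)] = T \sup_{y \in \sC\cap\sC^0}\inf_{\theta\in \Theta} g^{\theta}(y).
\end{equation*}
Next, since $\tilde{\pi}\in\cA$ is constant, Lemma~\ref{le:CanDec} yields $E^P[\log(W^{\tilde\pi}_T)] = E^P\big[\int_0^T g^{\theta^P_s}(\tilde\pi)\,ds\big]$ for every $P\in\fP$. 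Specializing to $P\in\fP_L$ with L\'evy triplet $\theta\in\Theta$, the differential characteristics are deterministic and constant equal to $\theta$, so the integrand collapses to $g^\theta(\tilde\pi)$ and
\begin{equation*}
E^P[\log(W^{\tilde\pi}_T)] = T\, g^\theta(\tilde\pi).
\end{equation*}

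\textbf{Step 2.} Using $\fP_L\subseteq \fP$ to pass to a smaller family in the infimum and combining with \eqref{eq:le:proof-thm-l2-log}, I obtain
\begin{equation*}
T\sup_{y\in\sC\cap\sC^0}\inf_{\theta\in\Theta} g^\theta(y) = \inf_{P\in\fP} E^P[\log(W^{\tilde\pi}_T)] \leq \inf_{P\in\fP_L} E^P[\log(W^{\tilde\pi}_T)] = T\inf_{\theta\in\Theta} g^\theta(\tilde\pi).
\end{equation*}
Since $\tilde\pi\in\cA$ takes values in $\sC\cap\sC^0$, the reverse inequality $\inf_\theta g^\theta(\tilde\pi) \leq \sup_{y\in\sC\cap\sC^0}\inf_\theta g^\theta(y)$ is trivial, so equality holds and $\tilde\pi\in\argmax_{y\in\sC\cap\sC^0}\inf_{\theta\in\Theta} g^\theta(y)$, as required.

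\textbf{Main obstacle.} There is no substantial obstacle: the argument is essentially a one-liner once one has Lemmas~\ref{le:CanDec} and~\ref{le:proof-thm-l1-log}. The only point to be mindful of is that the passage from $\inf_{P\in\fP}$ to $\inf_{P\in\fP_L}$ goes in the correct direction (a larger class produces a smaller infimum) and that a constant process is automatically integrable and admissible, so restricting to $\fP_L$ does not create measurability or finiteness issues; this is where the hypothesis ``$\tilde\pi$ constant'' is essential, since for a general predictable $\tilde\pi$ the time average $\frac{1}{T}\int_0^T g^{\theta^P_s}(\tilde\pi_s)\,ds$ would not reduce to a pointwise value of $g$ on $\Theta$.
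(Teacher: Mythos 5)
Your argument is correct and is essentially the paper's own proof: both restrict the infimum to the L\'evy subfamily $\fP_{L}$, use Lemma~\ref{le:CanDec} with the constancy of $\tilde{\pi}$ to identify $\inf_{P\in\fP_{L}}E^{P}[\log(W^{\tilde{\pi}}_{T})]$ with $T\inf_{\theta\in\Theta}g^{\theta}(\tilde{\pi})$, and combine this with the value identity from Lemma~\ref{le:proof-thm-l1-log} to force equality in the resulting chain of inequalities. The only cosmetic difference is that the paper closes the chain via \eqref{eq:pf-part1-2} and Proposition~\ref{prop:local-minimax}, whereas you close it with the trivial bound $\inf_{\theta}g^{\theta}(\tilde{\pi})\leq\sup_{y}\inf_{\theta}g^{\theta}(y)$; both are valid.
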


\begin{proof}
	We deduce from Lemma~\ref{le:proof-thm-l1-log}, \eqref{eq:pf-part1-2} and Proposition~\ref{prop:local-minimax} that
	\begin{align*}
	\inf_{P\in \fP} E^P[\log(W^{\tilde{\pi}}_T)] 
	&\leq \inf_{P\in \fP_{L}} E^P[\log(W^{\tilde{\pi}}_T)]\\
	& \leq   T  \sup_{y\in \sC\cap \sC^0} \inf_{\theta\in \Theta} g^{\theta}(y)\\
	& =\inf_{P\in \fP} E^P[\log(W^{\tilde{\pi}}_T)].
	\end{align*}
	Thus, the above inequalities are in fact equalities; in particular,
	\begin{equation*}
	 \inf_{P\in \fP_{L}} E^P[\log(W^{\tilde{\pi}}_T)] = T \sup_{y\in \sC\cap \sC^0} \inf_{\theta\in \Theta}g^{\theta}(y).
	\end{equation*}
	On the other hand, using Lemma~\ref{le:CanDec} and the fact that $\tilde{\pi}\in \cA$ is constant,
	\[
	\inf_{P\in \fP_{L}} E^P[\log(W^{\tilde{\pi}}_T)]
	= \inf_{P\in \fP_{L}}E^P\Big[\int_0^T g^{\theta^P}(\tilde{\pi})\,ds\Big]
	= T \inf_{\theta\in \Theta} g^{\theta}(\tilde{\pi}),
	\]
	so it follows that $\tilde{\pi}\in \argmax_{\sC\cap\sC^0}\inf_{\theta\in \Theta} g^{\theta}(y)$.
\end{proof}

The remaining two lemmas constitute the proof of the saddle point result, Theorem~\ref{thm-compact}.

\begin{lemma}\label{le:proof-thm-l3-log}
	Assume that $\Theta$ is compact. Let $\big(\hat{\theta},\hat{y}\big) \in \Theta \times\sC\cap\sC^0$  be a saddle point of the function $g^{\theta}(y)$ and let $\hat{P} \in \fP_{L}$ be the L\'evy law with triplet $\hat{\theta}$. Then
	$(\hat{P},\hat{y}) \in \fP_{L}\times \cA$ is a saddle point of $(P,\pi)\mapsto E^P[\log(W^\pi_T)]$ on $\fP \times \cA$ and
	\begin{equation*}
	\sup_{\pi\in \cA} \inf_{P\in\fP} E^P[\log(W^\pi_T)]
	= T \, g^{\hat{\theta}}(\hat{y}).
	\end{equation*}
\end{lemma}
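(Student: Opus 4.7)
The plan is to verify the two saddle inequalities directly, using Lemma~\ref{le:CanDec} to reduce stochastic expectations of $\log W^\pi_T$ to Lebesgue integrals of the deterministic function $g^\theta$, and then to exploit the two sides of the saddle property of $(\hat\theta,\hat y)$ separately. The key observation is that because $\hat P \in \fP_L$ has \emph{constant} differential characteristics $\hat\theta$, Lemma~\ref{le:CanDec} simplifies for $P=\hat P$ to $E^{\hat P}[\log W^\pi_T]=E^{\hat P}\bigl[\int_0^T g^{\hat\theta}(\pi_s)\,ds\bigr]$, turning the problem of optimizing over $\pi$ into a pointwise maximization of $g^{\hat\theta}$. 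Symmetrically, for any $P\in\fP$ the admissibility $\theta^P_s(\omega)\in\Theta$ holds $P\otimes dt$-a.e., so minimization over $P$ reduces to pointwise minimization over $\Theta$.

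For the first saddle inequality, I would fix $\pi\in\cA$ and use the bound $g^{\hat\theta}(\pi_s(\omega))\leq g^{\hat\theta}(\hat y)$ coming from the saddle-point property of $\hat y$ as a maximizer of $y\mapsto g^{\hat\theta}(y)$ on $\sC\cap\sC^0$ (which is applicable since $\pi_s(\omega)\in\sC\cap\sC^0$). Integrating against $ds$ and taking $E^{\hat P}$ yields $E^{\hat P}[\log W^\pi_T]\leq T\,g^{\hat\theta}(\hat y)=E^{\hat P}[\log W^{\hat y}_T]$. For the second saddle inequality, I would fix $P\in\fP$ with characteristics $\theta^P$ and use the bound $g^{\theta^P_s(\omega)}(\hat y)\geq g^{\hat\theta}(\hat y)$ coming from the saddle-point property of $\hat\theta$ as a minimizer of $\theta\mapsto g^\theta(\hat y)$ on $\Theta$. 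Integrating and using Lemma~\ref{le:CanDec} gives $E^P[\log W^{\hat y}_T]\geq T\,g^{\hat\theta}(\hat y)=E^{\hat P}[\log W^{\hat y}_T]$.

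The stated value of the game then follows from Lemma~\ref{le:proof-thm-l1-log}, which identifies the robust value with $T\inf_{\theta\in\Theta}g^\theta(\hat y)$; the saddle identity $\inf_{\theta\in\Theta}g^\theta(\hat y)=g^{\hat\theta}(\hat y)$ completes the evaluation. I do not anticipate a serious obstacle: admissibility $\hat y\in\cA$ is already granted by Lemma~\ref{le:C^0*}, the finiteness and integrability needed to invoke Lemma~\ref{le:CanDec} are secured by Assumption~\ref{assumpt}, and the measurability in the pointwise comparisons is immediate from the definition of $\cA$ and $\fP$. The only minor care is to note that $g^{\hat\theta}(\hat y)<\infty$ (Lemma~\ref{le:fct_g}) so that the upper bound in the first inequality is meaningful even though $g^{\hat\theta}(\pi_s)$ may take the value $-\infty$.
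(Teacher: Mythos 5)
Your proposal is correct and follows essentially the same route as the paper: both reduce $E^P[\log W^\pi_T]$ to $E^P\bigl[\int_0^T g^{\theta^P_s}(\pi_s)\,ds\bigr]$ via Lemma~\ref{le:CanDec} and then invoke the two sides of the saddle property of $(\hat\theta,\hat y)$, with admissibility of $\hat y$ supplied by Lemma~\ref{le:C^0*}. The only cosmetic difference is that you verify the two defining saddle inequalities directly, whereas the paper chains the equivalent assertions through $\inf_P\sup_\pi$ and $\sup_\pi\inf_P$; these formulations are interchangeable, as the paper itself notes.
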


\begin{proof}
	We recall that $\hat{y}\in\cA$; cf.\ Lemma~\ref{le:C^0*}. Let $\pi \in \cA$; then Lemma~\ref{le:CanDec} yields that
	\begin{equation*}
	E^{\hat{P}}[\log(W_T^{\pi})]= E^{\hat{P}}\Big[\int_0^T g^{\hat{\theta}}(\pi_s)\, ds\Big] \leq T \sup_{y \in \sC \cap \sC^0} g^{\hat{\theta}}(y).
	\end{equation*}
	Using the same lemma again,
	$
	 T \sup_{y \in \sC \cap \sC^0} g^{\hat{\theta}}(y)= T g^{\hat{\theta}}(\hat{y})= E^{\hat{P}}[\log(W_T^{\hat{y}})],
	$
	and as $\pi \in \cA$ was arbitrary,  we deduce that
	\begin{equation}\label{eq:pf1}
	\inf_{P \in \fP} \sup_{\pi \in \cA} E^{P}[\log(W_T^{\pi})]\leq \sup_{\pi \in \cA}E^{\hat{P}}[\log(W_T^{\pi})] \leq E^{\hat{P}}[\log(W_T^{\hat{y}})].
	\end{equation}
	As $(\hat{\theta},\hat{y})$ is a saddle point of the function $g^{\theta}(y)$,
	\begin{equation*}
	E^{\hat{P}}[\log(W_T^{\hat{y}})]= T\, g^{\hat{\theta}}(\hat{y}) = T \inf_{\theta\in \Theta} g^{\theta}(\hat{y}).
	\end{equation*}
	Moreover, using Lemma~\ref{le:CanDec}, we have
	\begin{align*}
	T \inf_{\theta\in \Theta} g^{\theta}(\hat{y}) 
	&= \inf_{P \in \fP}E^P\Big[\int_0^T \inf_{\theta\in \Theta} g^{\theta}(\hat{y}) \, ds\Big]\\
	&\leq \inf_{P \in \fP}E^P\Big[\int_0^T g^{\theta^P_s}(\hat{y}) \, ds\Big]\\
	&=\inf_{P \in \fP} E^{P}[\log(W_T^{\hat{y}})].
	\end{align*}
	Thus, we obtain that
	\begin{equation}\label{eq:pf2}
	E^{\hat{P}}[\log(W_T^{\hat{y}})]\leq \inf_{P \in \fP} E^{P}[\log(W_T^{\hat{y}})]\leq \sup_{\pi \in \cA} \inf_{P \in \fP} E^{P}[\log(W_T^{\pi})].
	\end{equation}
	It follows that the inequalities in~\eqref{eq:pf1} and~\eqref{eq:pf2} are in fact equalities, and the proof is complete.
\end{proof}

\begin{lemma}\label{le:proof-thm-l4-log}
	Assume that $\Theta$ is compact and let $(\tilde{P},\tilde{\pi})$ be a saddle point of $(P,\pi)\mapsto E^P[\log(W^\pi)]$ on $\fP\times\cA$
	with $\tilde{P}\in \fP_{L}$ and $\tilde{\pi}$ constant. Then $(\tilde{\theta},\tilde{\pi})$ is a saddle point of the function $g^{\theta}(y)$ on $\Theta \times\sC\cap\sC^0$, where $\tilde{\theta}$ is the L\'evy triplet of $\tilde{P}$.
\end{lemma}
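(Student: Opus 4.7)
The plan is to translate the saddle-point property of $(\tilde{P},\tilde{\pi})$ for expected log-utility into the two inequalities characterising a saddle point of $g^{\theta}(y)$ on $\Theta\times\sC\cap\sC^0$. The bridge is Lemma~\ref{le:CanDec}: since $\tilde{\pi}$ is constant and $\tilde{P}\in\fP_{L}$ has L\'evy triplet $\tilde{\theta}$, the lemma yields the deterministic identity $E^{\tilde{P}}[\log W^{\tilde{\pi}}_T]=T\,g^{\tilde{\theta}}(\tilde{\pi})$, and similarly $E^{P_{\theta}}[\log W^{\tilde{\pi}}_T]=T\,g^{\theta}(\tilde{\pi})$ for any $P_{\theta}\in\fP_{L}$ whose L\'evy triplet equals $\theta\in\Theta$. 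Such a $P_{\theta}$ exists by the one-to-one correspondence between $\fP_{L}$ and $\Theta$, and $P_{\theta}\in\fP_{L}\subseteq\fP$.

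First I handle the ``worst-model'' inequality $g^{\tilde{\theta}}(\tilde{\pi})\leq g^{\theta}(\tilde{\pi})$ for every $\theta\in\Theta$. The $P$-side of the saddle property reads $E^{\tilde{P}}[\log W^{\tilde{\pi}}_T]\leq E^{P}[\log W^{\tilde{\pi}}_T]$ for all $P\in\fP$; specialising to $P=P_{\theta}$ and applying the two identities above yields the claim at once.

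The reverse inequality $g^{\tilde{\theta}}(y)\leq g^{\tilde{\theta}}(\tilde{\pi})$ for every $y\in\sC\cap\sC^0$ is the only delicate point, because a generic $y\in\sC\cap\sC^0$, viewed as a constant process, need not be admissible: admissibility of constants requires $y\in\sC^{0,*}$ (so that $W^{y}>0$ $P$-a.s.\ for every $P\in\fP$), which is strictly stronger than $y\in\sC^0$. My workaround borrows the shrinkage used in Lemma~\ref{le:approx-g}: set $y_n:=(1-1/n)y\in\sC\cap\sC^0_n$, and note that $\sC^0_n\subseteq\sC^{0,*}$ (if $F[y^{\top}z<-1+1/n]=0$, then a fortiori $F[y^{\top}z\leq -1]=0$), so $y_n\in\cA$. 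The $\pi$-side of the saddle property, combined with Lemma~\ref{le:CanDec}, then gives $T\,g^{\tilde{\theta}}(y_n)=E^{\tilde{P}}[\log W^{y_n}_T]\leq E^{\tilde{P}}[\log W^{\tilde{\pi}}_T]=T\,g^{\tilde{\theta}}(\tilde{\pi})$. Concavity of $g^{\tilde{\theta}}$ together with $g^{\tilde{\theta}}(0)=0$ yields $g^{\tilde{\theta}}(y_n)\geq(1-1/n)\,g^{\tilde{\theta}}(y)$ whenever the right-hand side is finite (and the desired inequality is trivial when $g^{\tilde{\theta}}(y)=-\infty$); letting $n\to\infty$ concludes.

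The main obstacle, such as it is, is precisely this mismatch between $\sC^0$ (the natural domain of $g^{\tilde{\theta}}$) and $\sC^{0,*}$ (the set of constants that are genuinely admissible); once the concave-shrinkage trick is identified, the rest is a direct dictionary between Lemma~\ref{le:CanDec} and the definition of a saddle point.
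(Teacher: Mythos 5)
Your proposal is correct and follows essentially the same route as the paper: the worst-model inequality comes from testing the $P$-side of the saddle point against L\'evy laws via Lemma~\ref{le:CanDec}, and the strategy-side inequality comes from testing constant admissible strategies and bridging the gap between $\sC^{0,*}$ and $\sC^0$ by concavity of $g^{\tilde\theta}$ with $g^{\tilde\theta}(0)=0$. The only cosmetic difference is that the paper packages the last step as ``replace $\sC\cap\sC^{0,*}$ by its closure $\sC\cap\sC^0$ since $g^{\tilde\theta}$ is concave and proper,'' whereas you make the same shrinkage $y_n=(1-\tfrac1n)y$ explicit.
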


\begin{proof}
	As $\tilde{P} \in \fP_{L}$, we have
	\begin{equation*}
	E^{\tilde{P}}[\log(W^{\tilde{\pi}}_T)]
	=\inf_{P \in \fP}E^{P}[\log(W^{\tilde{\pi}}_T)]
	\leq \inf_{P \in \fP_{L}}E^{P}[\log(W^{\tilde{\pi}}_T)]
	\leq E^{\tilde{P}}[\log(W^{\tilde{\pi}}_T)],
	\end{equation*}
	which implies that the above inequalities are equalities. By Lemma~\ref{le:CanDec}, as $\tilde{\pi}$ is constant, we obtain that
	\begin{equation}\label{eq:pf-part4-1}
	T\, g^{\tilde{\theta}}(\tilde{\pi})=E^{\tilde{P}}[\log(W^{\tilde{\pi}}_T)]
	=\inf_{P \in \fP_{L}}E^{P}[\log(W^{\tilde{\pi}}_T)]
	=T \inf_{\theta\in \Theta} g^{\theta}(\tilde{\pi}).
	\end{equation}
	Recall from the proof of Lemma~\ref{le:C^0*} that a constant process $y$ is in $\cA$ if and only if $y \in \sC\cap\sC^{0,*}$. As $(\tilde{P},\tilde{\pi})$ is a saddle point, we deduce  that
	\begin{align*}
	E^{\tilde{P}}[\log(W^{\tilde{\pi}}_T)]
	= \sup_{\pi \in \cA} E^{\tilde{P}}[\log(W^{\pi}_T)]
	& = \,\sup_{\pi \in \cA} E^{\tilde{P}}\Big[\int_0^T g^{\tilde{\theta}}(\pi_s) \, ds \Big]\\
	& \geq \,\sup_{y \in \sC\cap \sC^{0,*}} E^{\tilde{P}}\Big[\int_0^T g^{\tilde{\theta}}(y) \, ds \Big]\\ 
	& \geq\, E^{\tilde{P}}\Big[\int_0^T g^{\tilde{\theta}}(\tilde{\pi}) \, ds \Big]\\
	& =\, E^{\tilde{P}}[\log(W^{\tilde{\pi}}_T)].
	\end{align*} 
	Therefore, again, the above inequalities are in fact equalities. In particular,
	\[
	T\, g^{\tilde{\theta}}(\tilde{\pi})
	=E^{\tilde{P}}[\log(W^{\tilde{\pi}}_T)]
	= \sup_{y \in \sC\cap \sC^{0,*}} E^{\tilde{P}}\Big[\int_0^T g^{\tilde{\theta}}(y) \, ds \Big] 
	= T \sup_{y \in \sC\cap \sC^{0,*}} g^{\tilde{\theta}}(y),
	\]
	and in the last expression we may replace $\sC \cap \sC^{0,*}$ by its closure $\sC\cap \sC^0$ as 
	$g^{\tilde{\theta}}$ is concave and proper. Together with \eqref{eq:pf-part4-1}, this shows that $(\tilde{\theta},\tilde{\pi})$ is a saddle point of the function $g^{\theta}(y)$.
\end{proof}

\section{Proofs for Power Utility}\label{se:proof-pow}

In this section, we focus on the case $U(x)=\frac{1}{p}x^p$, where $p\in (-\infty,0)\cup (0,1)$. We recall that Assumption~\ref{assumpt} is in force and that the initial capital is $x_{0}=1$, without loss of generality.

The arguments for power utility are less direct than in the logarithmic case, because the power utility investor is typically not myopic. Thus, the optimal strategy and expected utility for a fixed $P$ cannot be expressed by the corresponding function $g^{\theta}$ (see \cite{Nutz.09a, Nutz.09b} for the structure in the general case). However, the power utility problem remains tractable when $P$ is a L\'evy law, and we shall see that the worst case over all L\'evy laws $P\in\fP_{L}$ already corresponds to the worst case over all $P\in\fP$. The crucial tool to prove that is a martingale argument, contained in Lemma~\ref{le:expo-martingale} below.

For some of the arguments we need to avoid the singularity of $U$ at zero and the corresponding singularity of $g$ at the boundary of $\sC^{0}$. To this end, recall that
\begin{equation*}
\sC^0_{n} = \bigcap_{F \in \cL_{\Theta}} \Big\{ y \in \R^d \, \Big| \, F\big[z \in \R^d \, \big| \, y^\top z< -1+\tfrac{1}{n}\big]=0 \Big\} \subseteq  \sC^{0}
\end{equation*}
for $n \in \N$ and define $\cA_n$ as the set of all predictable processes $\pi$ such that $\pi_t(\omega) \in \sC\cap\sC^0_n$ for all $(\omega,t)\in [\![0,T]\!]$. This implies that $W^{\pi}>0$ $P$-a.s.\ for all $P\in\fP$ and in particular $\pi\in\cA$.

\begin{lemma}\label{le:expo-martingale}
  Let $P \in \fP$ and let $\theta^P=(b^{P},c^{P},F^{P})$ be the corresponding differential characteristics. If $\pi \in \cA_n$ for some  $n \in \N$, then
  \[
  M_t:=\frac{ (W_t^\pi)^p}{\exp\big( p \int_0^t g^{\theta_s^P}(\pi_s)\,ds\big)}
  \]
  is real-valued and $M=(M_t)_{t\leq T}$ is a martingale with unit expectation. 
  
  If
   $P \in \fP_{L}$ and $\pi\in \sC\cap\sC^{0}$ is constant, then
  \begin{equation}\label{eq:powerLevyExp}
  E^P[U(W_T^{\pi})]= \frac{1}{p}\exp\Big(p\,T\, g^{\theta^{P}}(\pi)\Big) \in [-\infty,\infty).
  \end{equation}
\end{lemma}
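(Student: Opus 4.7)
The plan is to identify $M$ as $(W^{\pi})^{p}/A$ where $A_{t}=\exp\bigl(p\int_{0}^{t}g^{\theta^{P}_{s}}(\pi_{s})\,ds\bigr)$ is a strictly positive, continuous, finite-variation process, show via It\^o's formula that the drift of $(W^{\pi})^{p}$ is exactly cancelled by $dA$ so that $M$ becomes a local martingale, upgrade it to a true martingale using moment estimates from Assumption~\ref{assumpt} and the pointwise bound from $\pi\in\cA_{n}$, and finally deduce~\eqref{eq:powerLevyExp} from $E^{P}[M_{T}]=1$ together with an approximation step to lift the restriction $\pi\in\sC\cap\sC^{0}_{n}$.

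First I would apply It\^o's formula to $f(w)=w^{p}$ on the strictly positive process $W^{\pi}$. Using the canonical decomposition of $\int\pi\,dX$ from Lemma~\ref{le:CanDec} and compensating the jump integral by $\nu^{P}=F^{P}\otimes dt$, the surviving $dt$-coefficient is
\[
W^{p}_{s-}\Bigl\{ p\,\pi_{s}^{\top}b_{s}^{P}+\tfrac{p(p-1)}{2}\pi_{s}^{\top}c_{s}^{P}\pi_{s}+\int\bigl[(1+\pi_{s}^{\top}z)^{p}-1-p\,\pi_{s}^{\top}h(z)\bigr]\,F_{s}^{P}(dz)\Bigr\},
\]
which by~\eqref{eq:fct_g-bcF} equals $p\,W^{p}_{s-}\,g^{\theta^{P}_{s}}(\pi_{s})$. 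Hence $W^{p}_{t}=1+\widetilde N_{t}+p\int_{0}^{t}W^{p}_{s-}\,g^{\theta^{P}_{s}}(\pi_{s})\,ds$ for a local martingale $\widetilde N$, and integration by parts against the continuous, finite-variation process $A$ gives $dM_{t}=A_{t}^{-1}\,d\widetilde N_{t}$, so $M$ is a real-valued, strictly positive local martingale on $[0,T]$.

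To upgrade $M$ to a true martingale, I would combine the pointwise inequality $1+\pi^{\top}z\geq 1/n$ (from $\pi\in\cA_{n}$), the compactness of $\sC\cap\sC^{0}$, and the moment bounds in Assumption~\ref{assumpt}. The angle bracket of $\widetilde N$ has $dt$-density proportional to $W_{s-}^{2p}\bigl\{p^{2}\pi_{s}^{\top}c_{s}^{P}\pi_{s}+\int[(1+\pi_{s}^{\top}z)^{p}-1]^{2}F_{s}^{P}(dz)\bigr\}$, and the braced factor is uniformly bounded in $(s,\pi,\theta^{P})$: for $p<0$ from $\int|z|^{2}\wedge 1\,F(dz)$ directly (using $(1+\pi^{\top}z)^{p}\in(0,n^{|p|}]$), and for $p\in(0,1)$ from $\int|z|^{2}\wedge|z|^{p(1+\varepsilon)}F(dz)$ after estimating $|(1+\pi^{\top}z)^{p}-1|\lesssim 1+|z|^{p}$. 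An It\^o--Gronwall estimate on $(W^{\pi})^{p+\delta}$ for a sufficiently small $\delta>0$ then yields $\sup_{t\leq T}E^{P}[M_{t}^{r}]<\infty$ for some $r>1$, giving uniform integrability and hence $E^{P}[M_{T}]=M_{0}=1$. The main obstacle I expect is precisely the case $p\in(0,1)$: it is the $|z|^{p(1+\varepsilon)}$ moment in Assumption~\ref{assumpt} that is used essentially here, in line with the classical treatments of \cite{Kallsen.00, Nutz.09c}.

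For the second claim, fix $P\in\fP_{L}$ with triplet $\theta^{P}$ and a constant $\pi\in\sC\cap\sC^{0}$. When $\pi\in\sC\cap\sC^{0}_{n}$ for some $n$, the process $A_{T}=\exp(pT\,g^{\theta^{P}}(\pi))$ is deterministic and the martingale identity $E^{P}[M_{T}]=1$ gives $E^{P}[(W_{T}^{\pi})^{p}]=A_{T}$, which is~\eqref{eq:powerLevyExp} upon division by $p$. For general $\pi$ I would set $\pi_{n}:=(1-\tfrac{1}{n})\pi$; by convexity of $\sC$ and the fact that $\pi^{\top}z\geq -1$ $F$-a.s.\ for every $F\in\cL_{\Theta}$, one checks $\pi_{n}\in\sC\cap\sC^{0}_{n}$ as in the proof of Lemma~\ref{le:approx-g}, so the previous case yields $E^{P}[(W_{T}^{\pi_{n}})^{p}]=\exp(pT\,g^{\theta^{P}}(\pi_{n}))$. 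Letting $n\to\infty$, $W_{T}^{\pi_{n}}\to W_{T}^{\pi}$ $P$-a.s., while $g^{\theta^{P}}(\pi_{n})\to g^{\theta^{P}}(\pi)\in[-\infty,\infty)$ by the concavity lower bound $g^{\theta^{P}}(\pi_{n})\geq(1-\tfrac{1}{n})g^{\theta^{P}}(\pi)$ and the upper semicontinuity of Lemma~\ref{le:fct_g}. A Fatou argument applied according to the sign of $p$ then delivers~\eqref{eq:powerLevyExp}, the degenerate case $g^{\theta^{P}}(\pi)=-\infty$ (only possible for $p<0$) producing $E^{P}[U(W_{T}^{\pi})]=-\infty$ consistently with the stated formula.
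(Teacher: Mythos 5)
Your architecture is the same as the paper's: cancel the drift of $(W^\pi)^p$ against $\exp(p\int_0^\cdot g^{\theta^P_s}(\pi_s)\,ds)$ via It\^o's formula, upgrade the resulting positive local martingale to a true martingale by a moment estimate, and obtain \eqref{eq:powerLevyExp} by approximating a general constant $\pi\in\sC\cap\sC^0$ with $\pi_n=(1-\frac1n)\pi\in\sC\cap\sC^0_n$. The drift computation and the identification of $a^M=0$ are correct. However, two steps are genuinely gapped. First, the implication ``$\sup_{t\le T}E^P[M_t^{r}]<\infty$ for some $r>1$, hence uniform integrability, hence $E^P[M_T]=1$'' is false for positive local martingales: the inverse Bessel process $M_t=1/|B_t|$ (three-dimensional Brownian motion started away from the origin) satisfies $\sup_{t\le T}E[M_t^{2}]<\infty$ and is nevertheless a strict local martingale. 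What is needed is a bound on $E^P\big[|M_\tau|^{1+\varepsilon}\big]$ uniformly over \emph{stopping times} $\tau\le T$, i.e.\ class (D). The paper gets exactly this by observing that the analogous process with $p$ replaced by $\tilde p=p(1+\varepsilon)$ is again a positive supermartingale, so its expectation at any stopping time is at most one, and the correction factor is a ratio of deterministic exponentials controlled by Assumption~\ref{assumpt}; de la Vall\'ee-Poussin then applies. Your Gronwall estimate, as stated, only produces bounds at deterministic times; it can be salvaged only if you run it along a localizing sequence and retain the bound at the stopped times.

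Second, the passage to the limit in \eqref{eq:powerLevyExp} does not close with Fatou alone. For $p\in(0,1)$, Fatou applied to $U(W_T^{\pi_n})\ge 0$ yields only $E^P[U(W_T^{\pi})]\le\frac1p\exp\big(pT g^{\theta^P}(\pi)\big)$; the reverse inequality requires uniform integrability of $\big(U(W_T^{\pi_n})\big)_n$, which the paper extracts from the $(1+\varepsilon)$-moment bound in Lemma~\ref{le:stoch-expo-limit}. For $p<0$, Fatou applied to the nonnegative variables $(W_T^{\pi_n})^p$ gives $E^P[(W_T^{\pi})^p]\le\exp\big(pTg^{\theta^P}(\pi)\big)$, again only one direction; the paper handles the nondegenerate case by applying the martingale argument directly to $\pi$ rather than by approximation. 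Finally, when $g^{\theta^P}(\pi)=-\infty$ your Fatou bound reads $E^P[U(W_T^{\pi})]\ge-\infty$ and is vacuous, so the ``consistency'' you invoke is unproved: one must actually show $E^P[(W_T^{\pi})^p]=\infty$. The paper does this by noting that finiteness would make $(W^\pi)^p$, the exponential of a L\'evy process, a special semimartingale of class (D) with finite drift rate $Y_-\,p\,g^{\theta^P}(\pi)$, contradicting $g^{\theta^P}(\pi)=-\infty$. These two repairs are exactly where the specific structure (supermartingale property of the $\tilde p$-analogue, and the L\'evy exponential structure) is needed.
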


	\begin{proof}
	  Let $\pi \in \cA_n$; then the function $g^{\theta_s^P}(\pi_s)$ and its integral are finite. Moreover, both $W^\pi$ and $W^\pi_{-}$ are strictly positive; cf.\ \cite[Theorem~I.4.61, p.\,59]{JacodShiryaev.03}. Thus, the process $M$ is a semimartingale with values in $(0,\infty)$. In particular, its drift rate 
	\[
	  a^{M} := b^{M} + \int (z-h(z))\,F^{M}(dz)
	\]
 is well-defined with values in $(-\infty,\infty]$; cf.\ \cite[Remark~2.3]{Nutz.09b}. Moreover, $M$ is a $\sigma$-martingale and true supermartingale as soon as $a^{M}=0$; see, e.g., \cite[Lemma~2.4]{Nutz.09b}.
	  
	Set $(b,c,F)=(b^{P},c^{P},F^{P})$ and $Y=(W^\pi)^p$. An application of It\^o's formula shows that the drift rate $a^{Y}$ of $Y$ satisfies
	\begin{align*}
	        \frac{a^{Y}}{Y_{-}} & = p\pi^\top b + \frac{p(p-1)}{2} \pi^\top c \pi + \int \big[(1+\pi^\top z)^p - 1 - p\pi^\top h(z)\big]\, \,F(dz)\\
	        &= pg^{\theta}(\pi).
	\end{align*}
	See, e.g., \cite[Lemma~3.4]{Nutz.09b} for a similar calculation.
	Noting that the process $G_{t}=\exp\big( p \int_0^t g^{\theta_s}(\pi_s)\,ds\big)$ is continuous and of finite variation, we have
	\[
	  dM = G^{-1}\,dY + Y_{-} \,d(G^{-1}) = G^{-1}\,dY - Y_{-}G^{-1} pg^{\theta}(\pi) \,dt
	\]
	and it follows that $a^{M}= G^{-1} a^{Y}- Y_{-}G^{-1} pg^{\theta}(\pi)=0$. As a result, $M$ is a $\sigma$-martingale and a supermartingale.
	To establish that $M$ is a true martingale, it remains to show that $M$ is of class (D). 
	
	Consider first the case $p\in (0,1)$. Let $\varepsilon>0$ be as in Assumption~\ref{assumpt} and let $\tau\leq T$ be a stopping time; we estimate
	\begin{equation*}
	E\big[|M_\tau|^{1+\varepsilon}\big]=E\Bigg[\frac{\big( W_\tau^\pi\big)^{p(1+\varepsilon)}}{\exp\big( p(1+\varepsilon) \int_0^\tau g^{\theta_s}(\pi_s)\,ds\big)}\Bigg].
	\end{equation*}
	Set $\tilde{p}:=p(1+\varepsilon)$ and let $g^{\theta}(\tilde{p},\pi)$ be defined like $g^{\theta}(\pi)$ but with $p$ replaced by $\tilde{p}$. Using the supermartingale property of $M$ with respect to $\tilde{p}$ (which holds by the same arguments), we obtain that 
	\begin{align*}
	& \ E\bigg[\frac{\big( W_\tau^\pi\big)^{p(1+\varepsilon)}}{\exp\big( p(1+\varepsilon) \int_0^\tau g^{\theta_s}(\pi_s)\,ds\big)}\bigg]\\
	= & \ E\bigg[\frac{\big( W_\tau^\pi\big)^{\tilde{p}}}{\exp\big(\tilde{p} \int_0^\tau g^{\theta_s}(\tilde{p},\pi_s)\,ds\big)} \, \frac{\exp\big(\tilde{p} \int_0^\tau g^{\theta_s}(\tilde{p},\pi_s)\,ds\big)}{\exp\big(\tilde{p} \int_0^\tau g^{\theta_s}(\pi_s)\,ds\big)}\bigg]\\
	\leq & \ \frac{\exp\Big(\tilde{p} TC_{1}\, \sup_{(b,c,F)\in \Theta}  \big\{ |b| + \frac{|\tilde{p}-1|}{2} |c| + \int |z|^2 \wedge |z|^{\tilde{p}}\,F(dz) \big\}\Big)}{\exp\Big(-\tilde{p} TC_{2}\, \sup_{(b,c,F)\in \Theta} \big\{  |b| + \frac{|p-1|}{2}   |c|  +  \int |z|^2 \wedge |z|^{p}\,F(dz) \big\}\Big)}%
	\end{align*}
	where $C_1,C_2$ are finite constants depending only on $p,\tilde{p}$ and the diameter of $\sC\cap\sC^0$. The last line is finite due to Assumption~\ref{assumpt} and does not depend on $\tau$. We have shown that
	\begin{equation*}
	  \sup_{\tau\leq T} E\big[ |M_\tau|^{1+\varepsilon} \big]<\infty,
	\end{equation*}
	so the de la Vall\'ee-Poussin theorem implies that $M$ is of class (D) and in particular a true martingale.
	
	For the case $p<0$, choose an arbitrary $\eps>0$ and recall that $\pi \in \cA_n$. A similar estimate as above holds for $\tilde{p}=p(1+\varepsilon)<0$, except that the the signs are reversed, the constants $C_{1}, C_{2}$ now depend on the fixed $n$,  and $|z|^2 \wedge |z|^{p}$, $|z|^2 \wedge |z|^{\tilde p}$ are replaced by $|z|^2 \wedge 1$. The conclusion remains the same.
	
	Finally, let $P \in \fP_{L}$ and let $\pi\in \sC\cap\sC^{0}$ be constant. We observe that $g^{\theta}(\pi)\in [-\infty,\infty)$, and the value $-\infty$ can occur only if $p<0$. If $g^{\theta}(\pi)$ is finite and $W^{\pi}>0$ $P$-a.s., the above arguments still apply and the identity~\eqref{eq:powerLevyExp} follows.
	
	Let $p\in(0,1)$. We have just seen that~\eqref{eq:powerLevyExp} holds when $\pi\in\cA_{n}$, and then the general case follows by passing to the limit on both sides; cf.\ Lemma~\ref{le:stoch-expo-limit} below.
	
	Let $p<0$. If $P[W^{\pi}_{T}=0]>0$, then $F[z \in \R^d \,| \, y^\top z= -1]>0$ and both sides of~\eqref{eq:powerLevyExp} equal $-\infty$. If $W^{\pi}>0$ $P$-a.s.\ but $g^{\theta}(\pi)=-\infty$, we need to argue that $E^P[Y_T]=\infty$ for $Y=(W^{\pi})^{p}$. Suppose that $E^P[Y_T]<\infty$. Then as $Y$ is the exponential of a L{\'e}vy process \cite[Lemma~4.2]{Kallsen.00}, $Y$ is of class~(D) on $[0,T]$ and a special semimartingale \cite[Lemma~4.4]{Kallsen.00}. In particular, its drift rate $a^{Y}=Y_{-}p g^{\theta}(\pi)$ has to be finite \cite[Proposition~II.2.29, p.\,82]{JacodShiryaev.03}. This contradicts $g^{\theta}(\pi)=-\infty$ and completes the proof.
\end{proof}

In the next two lemmas, we prove our main results for the set $\cA_{n}$ of strategies, where $n$ is fixed. We shall pass to the desired set~$\cA$ in a later step.

\begin{lemma}\label{le:proof-thm-l1-pow}
	Let $\hat{y}\in  \argmax _{y \in \sC\cap \sC^0_n} \inf_{\theta\in \Theta} g^{\theta}(y)$; then 
	\begin{equation*} %
	\inf_{P\in \fP}E^P[U(W_T^{\hat{y}})]
	=\sup_{\pi \in \cA_n}\inf_{P\in \fP}E^P[U(W_T^{\pi})]
	=\inf_{P\in \fP} \sup_{\pi \in \cA_n}E^P[U(W_T^{\pi})]
	\end{equation*}
	and this value is given by $\frac{1}{p}\exp\big(p\,T \,\inf_{\theta\in \Theta} g^{\theta}(\hat{y})\big)$.
\end{lemma}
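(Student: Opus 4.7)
The plan mirrors the sandwich structure of Lemma~\ref{le:proof-thm-l1-log}, but the additive representation of $\log W^{\pi}_{T}$ from Lemma~\ref{le:CanDec} is replaced by the multiplicative exponential martingale of Lemma~\ref{le:expo-martingale}. For $\pi\in\cA_{n}$ and $P\in\fP$ with differential characteristics $\theta^{P}$, that lemma gives
\begin{equation*}
  E^{P}\big[(W^{\pi}_{T})^{p}\big]
  = E^{P}\Big[\exp\Big(p\int_{0}^{T}g^{\theta^{P}_{s}}(\pi_{s})\,ds\Big)\,M_{T}\Big],
  \qquad E^{P}[M_{T}]=1,
\end{equation*}
which replaces the integral-plus-martingale formula used in the logarithmic case. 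Note also that $\hat{y}$, viewed as a constant process, is an element of $\cA_{n}$ by definition.

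Step one, the upper bound. Restricting to $P\in\fP_{L}$ makes $\theta^{P}_{s}\equiv\theta^{P}$ constant, so $g^{\theta^{P}}(\pi_{s})\leq\sup_{y\in\sC\cap\sC^{0}_{n}}g^{\theta^{P}}(y)$. Propagating this estimate through the exponential inside the expectation (using $E^{P}[M_{T}]=1$) and then through the factor $\tfrac{1}{p}$ outside yields---after a sign check that works in both cases, since for $p<0$ the two inequality reversals (one from multiplying $g^{\theta^{P}}(\pi_{s})$ by $p$ inside the exponent, one from multiplying by $\tfrac{1}{p}$ outside) cancel---the uniform bound
\begin{equation*}
  E^{P}\big[U(W^{\pi}_{T})\big]\leq \tfrac{1}{p}\exp\Big(pT\sup_{y\in\sC\cap\sC^{0}_{n}}g^{\theta^{P}}(y)\Big) \quad \text{for every } P\in\fP_{L}.
\end{equation*}
Infimizing over $\theta$ (another identical sign check shows the infimum is attained at the $\theta$ minimizing $\sup_{y}g^{\theta}(y)$) and invoking Proposition~\ref{prop:local-minimax} applied with the constraint $\sC\cap\sC^{0}_{n}$ (cf.\ Remark~\ref{rem:local-minimax}) then give
\begin{equation*}
  \inf_{P\in\fP}E^{P}\big[U(W^{\pi}_{T})\big]
  \leq \inf_{P\in\fP_{L}}E^{P}\big[U(W^{\pi}_{T})\big]
  \leq \tfrac{1}{p}\exp\Big(pT\inf_{\theta\in\Theta} g^{\theta}(\hat{y})\Big).
\end{equation*}

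Step two, the lower bound. Setting $\pi\equiv\hat{y}$ and letting $P\in\fP$ be arbitrary in Lemma~\ref{le:expo-martingale}, and using the reverse pointwise bound $g^{\theta^{P}_{s}}(\hat{y})\geq \inf_{\theta\in\Theta}g^{\theta}(\hat{y})$ ($P\otimes dt$-a.e., by definition of the infimum) together with the same sign analysis, I obtain $E^{P}[U(W^{\hat{y}}_{T})]\geq \tfrac{1}{p}\exp(pT\inf_{\theta}g^{\theta}(\hat{y}))$ for every $P\in\fP$. Sandwiching with step one forces all three quantities in the first stated identity to coincide and equal $\tfrac{1}{p}\exp(pT\inf_{\theta}g^{\theta}(\hat{y}))$. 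For the minimax identity, the bound from step one is uniform in $\pi\in\cA_{n}$ when $P\in\fP_{L}$, so $\inf_{P\in\fP}\sup_{\pi}E^{P}[U(W^{\pi}_{T})]\leq \inf_{P\in\fP_{L}}\sup_{\pi}E^{P}[U(W^{\pi}_{T})]\leq\tfrac{1}{p}\exp(pT\inf_{\theta}g^{\theta}(\hat{y}))$, and the trivial $\sup_{\pi}\inf_{P}\leq\inf_{P}\sup_{\pi}$ forces equality.

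The main obstacle is the sign bookkeeping when $p<0$: in every step two inequalities get reversed (one by $p$, one by $\tfrac{1}{p}$) and must cancel so that the estimates still point in the direction we need. Once this is organized uniformly across the cases $p>0$ and $p<0$, the rest of the argument is a clean combination of the exponential-martingale identity of Lemma~\ref{le:expo-martingale} and the local minimax result of Proposition~\ref{prop:local-minimax}.
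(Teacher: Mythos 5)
Your proof is correct and follows essentially the same sandwich argument as the paper: an upper bound obtained by restricting to L\'evy laws and applying the local minimax result (Proposition~\ref{prop:local-minimax} with $\sC\cap\sC^0_n$), a lower bound for the constant strategy $\hat{y}$ via the unit-expectation martingale of Lemma~\ref{le:expo-martingale}, and the observation that $\hat{y}\in\cA_n$ closes the chain. The only (harmless) deviation is that you derive the L\'evy-law upper bound $E^{P}[U(W^{\pi}_T)]\leq \frac{1}{p}\exp\big(pT\sup_{y\in\sC\cap\sC^0_n}g^{\theta^{P}}(y)\big)$ directly from the exponential martingale, whereas the paper quotes it from the classical result \cite[Theorem~3.2]{Nutz.09c}; your sign bookkeeping for $p<0$ is handled correctly.
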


\begin{proof}
	Let $\pi \in \cA_n$. The classical result for power utility maximization in the L\'evy setting, see \cite[Theorem~3.2]{Nutz.09c}, yields that
	\begin{equation*}
	\inf_{P \in \fP}E^P[U(W_T^{\pi})]
	\leq \inf_{P \in \fP_{L}}E^P[U(W_T^{\pi})]
	\leq
	\inf_{P \in \fP_{L}} \sup_{y \in \sC \cap \sC^0_n} \frac{1}{p}\exp\Big(p\,T \,g^{\theta^P}(y)\Big).
	 \end{equation*}
	In view of Proposition~\ref{prop:local-minimax} and the definition of $\hat{y}$,
	\begin{equation*}
	\inf_{P \in \fP_{L}} \sup_{y \in \sC \cap \sC^0_n} \frac{1}{p}\exp\Big(p\,T \,g^{\theta^P}(y)\Big)
	=
	\frac{1}{p}\exp\Big(p\,T \,\inf_{\theta\in \Theta} g^{\theta}(\hat{y})\Big).
	\end{equation*}
	Moreover, by Lemma~\ref{le:expo-martingale}, we have for any $P \in \fP$ that
	\begin{align*}
	  \frac{1}{p}\exp\Big(p\,T \,\inf_{\theta\in \Theta} g^{\theta}(\hat{y})\Big)
	&= E^P\Bigg[\frac{U(W_T^{\hat{y}})}{\frac{1}{p}\exp\big(p\, \int_0^T g^{\theta^P_s}(\hat{y})\,ds\big)}\Bigg] \frac{1}{p}\exp\Big(p\,T \,\inf_{\theta\in \Theta} g^{\theta}(\hat{y})\Big)\\
	 &\leq  E^P[U(W_T^{\hat{y}})].
	\end{align*}
	As $\pi \in \cA_n$ and $P \in \fP$ were arbitrary, we conclude that
	\begin{equation*}
	\sup_{\pi \in \cA_n}\inf_{P \in \fP} E^P[U(W_T^{\pi})]\leq \frac{1}{p}\exp\Big(p\,T \,\inf_{\theta\in \Theta} g^{\theta}(\hat{y})\Big) \leq  \inf_{P \in \fP} E^P[U(W_T^{\hat{y}})].
	\end{equation*}
	As $\hat{y} \in \cA_n$, these inequalities must be equalities.
	
	It remains to prove the minimax identity. By the definition of $\hat{y}$ and the classical result in \cite[Theorem~3.2]{Nutz.09c},
	\begin{align*}
	\frac{1}{p}\exp\Big(p\,T \,\inf_{\theta\in \Theta} g^{\theta}(\hat{y})\Big)
	= & \ \inf_{P \in \fP_{L}} \sup_{y \in \sC\cap \sC^0_n} \frac{1}{p}\exp\Big(p\,T \, g^{\theta^P}(y)\Big)\\
	=& \  \inf_{P \in \fP_{L}}\sup_{\pi\in \cA_n} E^P[U(W_T^{\pi})]\\
	\geq & \  \inf_{P \in \fP}\sup_{\pi\in \cA_n} E^P[U(W_T^{\pi})].
	\end{align*}
	Together with the above, we have
	\begin{equation*}
	\sup_{\pi \in \cA_n}\inf_{P \in \fP} E^P[U(W_T^{\pi})]= \frac{1}{p}\exp\Big(p\,T \,\inf_{\theta\in \Theta} g^{\theta}(\hat{y})\Big)\geq \inf_{P \in \fP}\sup_{\pi\in \cA_n} E^P[U(W_T^{\pi})],
	\end{equation*}
	and the converse inequality is clear.
\end{proof}

\begin{lemma}\label{le:proof-thm-l3-pow}
	Assume that $\Theta$ is compact. Let $(\hat{\theta},\hat{y})$  be a saddle point of the function $g^{\theta}(y)$ on $\Theta \times\sC\cap\sC^0_n$ and let $\hat{P} \in \fP_{L}$ be the L\'evy law with triplet $\hat{\theta}$. Then $(\hat{P},\hat{y}) \in \fP_{L}\times \cA_n$ is a saddle point of $(P,\pi)\mapsto E^P[U(W^\pi_T)]$ on $\fP \times \cA_n$ and
	\begin{equation*}
	\sup_{\pi\in \cA_n} \inf_{P\in\fP} E^{P}[U(W^{\pi}_T)]
	= \frac{1}{p}\exp\Big(p\,T \, g^{\hat{\theta}}(\hat{y})\Big).
	\end{equation*}
\end{lemma}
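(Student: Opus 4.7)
My plan is to mirror the structure of the logarithmic case (Lemma~\ref{le:proof-thm-l3-log}), but replace the explicit decomposition from Lemma~\ref{le:CanDec} with the multiplicative decomposition obtained from the martingale $M$ in Lemma~\ref{le:expo-martingale}. The saddle point inequalities $g^{\hat{\theta}}(y)\leq g^{\hat{\theta}}(\hat{y})\leq g^{\theta}(\hat{y})$ for all $y\in\sC\cap\sC^{0}_{n}$ and all $\theta\in\Theta$ will drive every estimate. The only subtlety is the sign of $p$, which flips the direction of the monotonicity of $x\mapsto \tfrac{1}{p}e^{px}$, so both cases must be tracked, but the conclusion is the same in either one.

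First I would observe that $\hat{y}\in\sC\cap\sC^{0}_{n}$, viewed as a constant process, lies in $\cA_{n}$ by definition. Applying Lemma~\ref{le:expo-martingale} with the Lévy law $\hat{P}$ and the constant strategy $\hat{y}$ immediately yields
\begin{equation*}
E^{\hat{P}}[U(W^{\hat{y}}_{T})]=\tfrac{1}{p}\exp\bigl(p\,T\,g^{\hat{\theta}}(\hat{y})\bigr),
\end{equation*}
identifying the candidate value.

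Next I would prove the upper-bound half of the saddle point, namely $E^{\hat{P}}[U(W^{\pi}_{T})]\leq \tfrac{1}{p}\exp(pTg^{\hat{\theta}}(\hat{y}))$ for every $\pi\in\cA_{n}$. Lemma~\ref{le:expo-martingale} gives that $M_{t}=(W^{\pi}_{t})^{p}\exp\bigl(-p\int_{0}^{t}g^{\hat{\theta}}(\pi_{s})\,ds\bigr)$ is a martingale of unit $\hat{P}$-expectation, so
\begin{equation*}
E^{\hat{P}}\bigl[(W^{\pi}_{T})^{p}\bigr]
=E^{\hat{P}}\Bigl[\exp\Bigl(p\int_{0}^{T}g^{\hat{\theta}}(\pi_{s})\,ds\Bigr)\Bigr].
\end{equation*}
Since $\pi_{s}\in\sC\cap\sC^{0}_{n}$, the saddle point property gives $g^{\hat{\theta}}(\pi_{s})\leq g^{\hat{\theta}}(\hat{y})$; multiplying by $p$ and exponentiating, then dividing by $p$, yields the claim in both cases $p>0$ and $p<0$ (the reversal of the inequality when dividing by a negative $p$ precisely matches the reversal from the exponential, so the final inequality is independent of the sign).

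Symmetrically, for the lower-bound half I fix the strategy $\hat{y}$ and vary $P\in\fP$. Applying Lemma~\ref{le:expo-martingale} under $P$ with the constant strategy $\hat{y}\in\cA_{n}$ gives a unit-expectation martingale and hence
\begin{equation*}
E^{P}\bigl[(W^{\hat{y}}_{T})^{p}\bigr]
=E^{P}\Bigl[\exp\Bigl(p\int_{0}^{T}g^{\theta^{P}_{s}}(\hat{y})\,ds\Bigr)\Bigr].
\end{equation*}
Because $\theta^{P}_{s}\in\Theta$ ($P\otimes dt$-a.e.) and $(\hat{\theta},\hat{y})$ is a saddle point, $g^{\theta^{P}_{s}}(\hat{y})\geq g^{\hat{\theta}}(\hat{y})$, and the same sign-tracking argument as above delivers $E^{P}[U(W^{\hat{y}}_{T})]\geq \tfrac{1}{p}\exp(pTg^{\hat{\theta}}(\hat{y}))$.

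Combining the three displays gives
\begin{equation*}
E^{\hat{P}}[U(W^{\pi}_{T})]\ \leq\ E^{\hat{P}}[U(W^{\hat{y}}_{T})]\ =\ \tfrac{1}{p}\exp\bigl(pTg^{\hat{\theta}}(\hat{y})\bigr)\ \leq\ E^{P}[U(W^{\hat{y}}_{T})]
\end{equation*}
for all $\pi\in\cA_{n}$ and all $P\in\fP$, which is the saddle point property. The minimax value then equals $\tfrac{1}{p}\exp(pTg^{\hat{\theta}}(\hat{y}))$ by taking $\sup_{\pi}$ on the left and $\inf_{P}$ on the right, or by invoking Lemma~\ref{le:proof-thm-l1-pow} applied to $\hat{y}\in\argmax_{y\in\sC\cap\sC^{0}_{n}}\inf_{\theta\in\Theta}g^{\theta}(y)$ (a condition guaranteed by the saddle point property of $(\hat{\theta},\hat{y})$). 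The main bookkeeping hurdle is simply making the sign cases for $p$ transparent; conceptually, once Lemma~\ref{le:expo-martingale} is available, the argument is a direct transcription of the logarithmic proof with the additive decomposition replaced by the exponential one.
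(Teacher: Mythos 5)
Your proposal is correct and follows exactly the route the paper intends: the published proof of this lemma is only a one-sentence remark saying that one repeats Lemma~\ref{le:proof-thm-l3-log} with Lemma~\ref{le:CanDec} replaced by Lemma~\ref{le:expo-martingale} and a martingale argument, and your write-up supplies precisely those details.

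One small imprecision: the displayed identity $E^{\hat{P}}\bigl[(W^{\pi}_{T})^{p}\bigr]=E^{\hat{P}}\bigl[\exp\bigl(p\int_{0}^{T}g^{\hat{\theta}}(\pi_{s})\,ds\bigr)\bigr]$ is not an equality for a general (random) $\pi\in\cA_{n}$; what Lemma~\ref{le:expo-martingale} gives is $E^{\hat{P}}[M_{T}]=1$ with $M_{T}=(W^{\pi}_{T})^{p}\exp\bigl(-p\int_{0}^{T}g^{\hat{\theta}}(\pi_{s})\,ds\bigr)$, and the two expectations coincide only when the exponent is deterministic. This does not damage the argument: since $g^{\hat{\theta}}(\pi_{s})\leq g^{\hat{\theta}}(\hat{y})$ pointwise and $M_{T}\geq 0$, one bounds $E^{\hat{P}}\bigl[M_{T}\exp\bigl(p\int_{0}^{T}g^{\hat{\theta}}(\pi_{s})\,ds\bigr)\bigr]$ by $\exp\bigl(pT\,g^{\hat{\theta}}(\hat{y})\bigr)E^{\hat{P}}[M_{T}]$ (with the inequality reversed for $p<0$, as you track), which is all that is needed; the same remark applies to the symmetric step with $P\in\fP$ varying. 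With that one-line repair the proof is complete.
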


\begin{proof}
  The line of argument is the same as in Lemma~\ref{le:proof-thm-l3-log} for the logarithmic case, except that the use of Lemma~\ref{le:CanDec} needs to be substituted by Lemma~\ref{le:expo-martingale} and a martingale argument, much like in the preceding proof. We omit the details.
\end{proof}

\begin{remark}\label{rem:le-Levy-pow}
  For later reference, we record that Lemmas~\ref{le:proof-thm-l1-pow} and~\ref{le:proof-thm-l3-pow} remain true if $\fP$ is replaced by $\fP_{L}$ in the assertion. 
\end{remark}

Our next goal is to obtain the preceding two results for $\cA$ and $\sC^{0}$ rather than the auxiliary sets $\cA_{n}$ and $\sC^{0}_{n}$. This will be achieved by passing to the limit as $n\to\infty$, for which some preparations are necessary.

\begin{lemma}\label{le:stoch-expo-limit}
	Let $P \in \fP$ and $\pi \in \cA$. Then $\pi_n:=(1-\frac{1}{n}) \pi \in\cA_{n}$ and
	\begin{equation*}
	\limsup_{n \to \infty} E^P[U(W^{\pi_n}_T)]\leq E^P[U(W^{\pi}_T)].
	\end{equation*}
	Moreover, if $p\in (0,1)$, then $U(W^{\pi_n}_T)\to U(W^{\pi}_T)$ in $L^1(P)$.
\end{lemma}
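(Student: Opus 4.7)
My plan decomposes the claim into three ingredients: (i) the membership $\pi_n\in\cA_n$, (ii) the pathwise convergence $W^{\pi_n}_T\to W^\pi_T$, and (iii) a tailored argument for each regime of $p$.

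For (i), if $\pi\in\cA$ then $\pi^\top z\geq -1$ holds $F$-a.s.\ for every $F\in\cL_\Theta$; hence $\pi_n^\top z=(1-1/n)\pi^\top z\geq -1+1/n$ $F$-a.s., which combined with the convexity of $\sC\ni 0$ shows $\pi_n\in\sC\cap\sC^0_n$, so $\pi_n\in\cA_n$. For (ii), writing $Y=\int\pi\,dX$ we have $W^{\pi_n}_T=\cE((1-1/n)Y)_T$; using the Dol\'eans--Dade product formula, the exponential factor converges pointwise while the infinite jump product converges by dominated convergence, the dominator being $\sum_{s\leq T}|\Delta Y_s|^2\wedge 1<\infty$ together with a second-order Taylor estimate on $\log(1+u)-u$. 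Since $\pi\in\cA$ gives $W^\pi_T>0$ a.s.\ and $U$ is continuous on $(0,\infty)$, this yields $U(W^{\pi_n}_T)\to U(W^\pi_T)$ $P$-a.s.

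For the three regimes of $p$: when $p<0$, $U\leq 0$, and Fatou applied to $-U(W^{\pi_n}_T)$ yields $\limsup_n E^P[U(W^{\pi_n}_T)]\leq E^P[U(W^\pi_T)]$ at once. When $p\in(0,1)$, I would take $\tilde p:=p(1+\varepsilon)$ with $\varepsilon$ from Assumption~\ref{assumpt} and invoke Lemma~\ref{le:expo-martingale}: the martingale identity $E^P[M^{(n)}_T]=1$ combined with a uniform bound on $g^\theta(\tilde p,y)$ over $\Theta\times(\sC\cap\sC^0)$ (from Assumption~\ref{assumpt} and compactness of $\sC\cap\sC^0$) gives $\sup_n E^P[(W^{\pi_n}_T)^{\tilde p}]<\infty$. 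Hence $U(W^{\pi_n}_T)$ is $L^{1+\varepsilon}$-bounded, uniformly integrable, and by the a.s.\ convergence from~(ii) converges in $L^1$ by Vitali's theorem; the $\limsup$ inequality follows \emph{a fortiori}. When $p=0$, Lemma~\ref{le:CanDec} recasts the expectation as $E^P[\int_0^T g^{\theta^P_s}(\pi_{n,s})\,ds]$; an $n$-independent upper bound $g^\theta(y)\leq K_0$ on $\Theta\times(\sC\cap\sC^0)$ is derived via $\log(1+y^\top z)\leq y^\top z$, the local Taylor estimate $|\log(1+u)-u|\leq Cu^2$ for $|u|\leq 1/2$, and Assumption~\ref{assumpt}. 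A double reverse-Fatou application, exploiting the upper semicontinuity of $y\mapsto g^\theta(y)$ from Lemma~\ref{le:fct_g} together with $\pi_{n,s}\to\pi_s$, then delivers $\limsup_n E^P[\log W^{\pi_n}_T]\leq E^P[\log W^\pi_T]$.

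The chief obstacle is the $p=0$ case: reverse Fatou demands a \emph{uniform} upper bound on $g^\theta$ over $\Theta\times(\sC\cap\sC^0)$, and it is precisely the tailored integrability condition $\int |z|^2\wedge\log(1+|z|)\,F(dz)\leq\cK$ in Assumption~\ref{assumpt} that renders this bound available. The a.s.\ convergence of the Dol\'eans--Dade exponential in Step~(ii) is the other technical ingredient but is routine given the standard second-order control on the log-jump corrections.
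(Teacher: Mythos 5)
Your proof is correct and follows essentially the same route as the paper: almost sure convergence of the stochastic exponentials, Fatou's lemma for $p<0$, and the uniform $L^{1+\varepsilon}$ bound obtained from the estimate in the proof of Lemma~\ref{le:expo-martingale}, which gives uniform integrability and hence $L^{1}$ convergence for $p\in(0,1)$. The only remark is that your $p=0$ case is superfluous: this lemma belongs to the power-utility section where $p\neq 0$, the logarithmic case being treated separately via Lemma~\ref{le:CanDec} without recourse to this approximation.
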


\begin{proof}
	It is clear that $\pi_n\in \cA_n$.
	Using that $W_T^\pi= \cE(\int \pi dX)_T$, standard arguments show that $W_T^{\pi_n}$ converges %
	$P$-a.s.\ to  $W_T^\pi$, and then $U(W_T^{\pi_n})$ converges %
	 $P$-a.s.\ to $U(W_T^{\pi})$. When $p<0$, we have $U\leq0$ and the result follows from Fatou's Lemma. 
	 For $p\in (0,1)$, let $\varepsilon>0$ be as in Assumption~\ref{assumpt} and set $\tilde{p}:=p(1+\varepsilon)$. An estimate as in the proof of Lemma~\ref{le:expo-martingale} yields that
	 \begin{equation*}
	  E^P\big[|U(W^{\pi_n}_T)|^{1+\varepsilon}\big] 
	 \leq K<\infty
	 \end{equation*}	 
	 for all $n$, where $K$ is a constant depending on $\tilde{p}:=p(1+\varepsilon)$, the diameter of $\sC \cap \sC^0$ and $\cK$. Thus, $(U(W^{\pi_n}_T))_{n \in \N}$ is uniformly integrable and the convergence in $L^1(P)$ follows. 
\end{proof}

As we will be using results from the classical utility maximization problem \cite{Nutz.09c}, let us comment on a subtlety regarding the class of  strategies. Let $P \in \fP$ and denote by $\cA^P$ the set of all predictable  processes taking values in $\sC\cap \sC^0$ such that $W^\pi>0$ $P$-a.s.; this is the class of admissible strategies in \cite{Nutz.09c} if $\sC\cap \sC^0$ is used as the constraint set (which is necessarily contained in the natural constraints with respect to $P$). In the case $p>0$, we have $\cA \supseteq \cA^P$ as we did not enforce strict positivity in the definition of $\cA$. On the other hand, in the case $p<0$, we have required positivity under \emph{all} models in $\fP$, which results in an inclusion $\cA \subseteq \cA^P$. For the set $\sC\cap \sC^0_{n}$ that has been used above, no such subtleties exist as the wealth process is automatically strictly positive under all models.

\begin{lemma}\label{le:levyvalue}
 Let $P \in \fP_L$; then
 \begin{equation*}
 \sup_{\pi \in \cA^P}E^P[U(W^\pi_T)]\geq \sup_{\pi \in \cA}E^P[U(W^\pi_T)].
 \end{equation*}
 Moreover, if $p\in (0,1)$, we have equality.
\end{lemma}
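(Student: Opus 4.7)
The plan is to split the argument by the sign of $p$ and exploit the set inclusions between $\cA$ and $\cA^P$ noted in the paragraph preceding the lemma.

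First, for $p<0$: By definition, any $\pi\in\cA$ satisfies $W^\pi>0$ $P'$-a.s.\ for \emph{every} $P'\in\fP$, in particular for the given $P\in\fP_L\subseteq\fP$. Since $\pi$ also takes values in $\sC\cap\sC^0$, this means $\pi\in\cA^P$, so $\cA\subseteq\cA^P$ and the inequality $\sup_{\pi\in\cA^P}E^P[U(W^\pi_T)]\geq\sup_{\pi\in\cA}E^P[U(W^\pi_T)]$ is immediate from taking the supremum over a larger set.

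For $p\in(0,1)$, the reverse inclusion $\cA\supseteq\cA^P$ is immediate (no positivity is enforced in the definition of $\cA$ in this case, as $U(0)=0>-\infty$), which gives the trivial direction $\sup_{\pi\in\cA}E^P[U(W^\pi_T)]\geq\sup_{\pi\in\cA^P}E^P[U(W^\pi_T)]$. To establish the reverse inequality, and thereby equality, I would approximate an arbitrary $\pi\in\cA$ by the rescaled strategies $\pi_n:=(1-\tfrac{1}{n})\pi$. Convexity of $\sC^0$ together with the fact that $0\in\sC^0$ implies $\pi_n$ takes values in $\sC\cap\sC^0_n$, so $\pi_n\in\cA_n$. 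Moreover, since strategies in $\cA_n$ yield a strictly positive wealth process under every $P'\in\fP$ (and in particular under $P$), one has $\cA_n\subseteq\cA^P$, so each $\pi_n$ is an admissible competitor for $\sup_{\pi'\in\cA^P}E^P[U(W^{\pi'}_T)]$.

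The final step is to invoke Lemma~\ref{le:stoch-expo-limit}, which gives $E^P[U(W^{\pi_n}_T)]\to E^P[U(W^\pi_T)]$ as $n\to\infty$ (in fact $L^1(P)$-convergence when $p\in(0,1)$). Consequently,
\[
\sup_{\pi'\in\cA^P}E^P[U(W^{\pi'}_T)]\geq\limsup_{n\to\infty}E^P[U(W^{\pi_n}_T)]=E^P[U(W^\pi_T)],
\]
and taking the supremum over $\pi\in\cA$ closes the loop. There is no real obstacle here: the argument is essentially a bookkeeping exercise matching the two definitions of admissibility, with the approximation lemma doing the analytic work. The only point where one must be mildly careful is to verify that $\pi_n\in\sC\cap\sC^0_n$, which uses convexity of $\sC$, $\sC^0$ and $0\in\sC\cap\sC^0$—these are all already recorded in the setup.
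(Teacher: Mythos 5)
Your proof is correct. The $p<0$ case and the trivial inclusion $\cA^P\subseteq\cA$ for $p\in(0,1)$ are exactly as in the paper, and so is the approximation of $\pi\in\cA$ by $\pi_n=(1-\tfrac1n)\pi\in\cA_n$ via Lemma~\ref{le:stoch-expo-limit}. Where you diverge is in how the supremum over $\cA^P$ enters: the paper identifies $\lim_n\sup_{\cA_n}E^P[U(W^\pi_T)]$ and $\sup_{\cA^P}E^P[U(W^\pi_T)]$ with the same closed-form expression $\frac1p\exp\big(p\,T\sup_{y}g^\theta(y)\big)$ by invoking the classical L\'evy result \cite[Theorem~3.2]{Nutz.09c} twice together with Lemma~\ref{le:approx-g}, whereas you simply observe that $\cA_n\subseteq\cA^P$ (strategies valued in $\sC\cap\sC^0_n$ have strictly positive wealth under every model), so that each $\pi_n$ is already a competitor for $\sup_{\cA^P}$ and the bound follows by passing to the limit. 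Your route is more elementary -- it bypasses the classical L\'evy formula entirely -- and it is also slightly more general: nothing in your argument for $p\in(0,1)$ uses $P\in\fP_L$, so you in fact obtain the equality for every $P\in\fP$. The paper's detour through the explicit formula costs nothing extra in context, since that formula is needed anyway in the surrounding lemmas, but your closing step is the cleaner one for this particular statement. One minor imprecision: the membership $\pi_n\in\sC\cap\sC^0_n$ is not a consequence of convexity of $\sC^0$ alone; it uses the elementary equivalence $(1-\tfrac1n)y^\top z<-1+\tfrac1n\iff y^\top z<-1$, which is what places the rescaled strategy in the smaller set $\sC^0_n$ rather than merely in $\sC^0$. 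Since Lemma~\ref{le:stoch-expo-limit} already records $\pi_n\in\cA_n$, this does not affect the validity of your argument.
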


\begin{proof}
 If $p<0$, the claim is clear as $\cA \subseteq \cA^P$. Let $p\in (0,1)$; then $\cA^{P} \subseteq \cA$, so it suffices to show the stated inequality. Let $\pi_n =(1-\frac{1}{n}) \pi \in \cA_n$ for $\pi \in \cA$. Lemma~\ref{le:stoch-expo-limit} yields that
 \begin{equation*}
 \sup_{\pi \in \cA}E^P[U(W^\pi_T)]=\sup_{\pi \in \cA} \lim\limits_{n \to \infty} E^P[U(W^{\pi_n}_T)]\leq  \lim\limits_{n \to \infty} \sup_{\pi \in \cA_n} E^P[U(W^{\pi}_T)].
 \end{equation*}
Let $\theta$ be the L\'evy triplet of $P$. We deduce from \cite[Theorem~3.2]{Nutz.09c} and Lemma~\ref{le:approx-g} that
 \begin{align*}
\lim\limits_{n \to \infty} \sup_{\pi \in \cA_n} E^P[U(W^{\pi}_T)]
&= \lim\limits_{n \to \infty} \frac{1}{p} \exp\Big(p\,T\, \sup_{y \in \sC \cap \sC^0_n}  g^{\theta}(y)\Big)\\
&= \frac{1}{p} \exp\Big(p\,T\, \sup_{y \in \sC \cap \sC^0}  g^{\theta}(y)\Big)\\
&=\sup_{\pi \in \cA^P} E^P[U(W^{\pi}_T)].
 \end{align*}
\end{proof}

We can now prove the main lemma for the passage from $\cA_n$ to $\cA$.

\begin{lemma}\label{le:proof-thm-limit-l1-pow}
 We have
 \begin{align*}
 \lim\limits_{n \to \infty} \sup_{\pi \in \cA_n}\inf_{P \in \fP} E^P[U(W_T^{\pi})]
 &= \frac{1}{p} \exp\Big(p\,T\, \sup_{y \in \sC \cap \sC^0} \inf_{\theta \in \Theta} g^{\theta}(y)\Big)\\
 & = \sup_{\pi \in \cA}\inf_{P \in \fP} E^P[U(W_T^{\pi})].
 \end{align*}
 In particular, $u(1)<\infty$.
\end{lemma}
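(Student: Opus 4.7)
The plan is to establish the two equalities separately and deduce finiteness of $u(1)$ at the end. The first equality follows almost directly from Lemma~\ref{le:proof-thm-l1-pow} and Lemma~\ref{le:approx-g}. Indeed, Proposition~\ref{prop:local-minimax} together with Remark~\ref{rem:local-minimax} supplies, for each $n$, a maximizer $\hat y_n \in \sC\cap\sC^0_n$ of $y\mapsto \inf_{\theta\in\Theta}g^{\theta}(y)$, and then Lemma~\ref{le:proof-thm-l1-pow} applied with $\sC^0_n$ in place of $\sC^0$ gives
\[
  \sup_{\pi\in\cA_n}\inf_{P\in\fP}E^{P}[U(W^{\pi}_T)]
  = \frac{1}{p}\exp\Big(pT\sup_{y\in\sC\cap\sC^0_n}\inf_{\theta\in\Theta}g^{\theta}(y)\Big).
\]
Applying Lemma~\ref{le:approx-g} to the function $g=\inf_{\theta}g^{\theta}$ shows that the inner supremum converges to $\sup_{y\in\sC\cap\sC^0}\inf_{\theta}g^{\theta}(y)$, and since $x\mapsto\frac{1}{p}e^{pTx}$ is continuous the first equality follows.

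For the second equality, the inequality ``$\leq$'' is immediate from $\cA_n\subseteq\cA_{n+1}\subseteq\cA$. For ``$\geq$'', I would avoid trying to approximate a general $\pi\in\cA$ by $\pi_n=(1-\tfrac{1}{n})\pi\in\cA_n$ directly, because Lemma~\ref{le:stoch-expo-limit} delivers convergence in the wrong direction to combine with the infimum over $P$. Instead, the plan is to bound the value of any $\pi\in\cA$ by passing to an arbitrary L\'evy law $P\in\fP_L$ with triplet $\theta$: by Lemma~\ref{le:levyvalue} one has $\sup_{\pi'\in\cA}E^P[U(W^{\pi'}_T)]\leq \sup_{\pi'\in\cA^P}E^P[U(W^{\pi'}_T)]$ (with equality when $p\in(0,1)$), and the classical L\'evy result invoked in the proof of that lemma (\cite[Theorem~3.2]{Nutz.09c}, combined with Lemma~\ref{le:approx-g}) identifies the latter as $\frac{1}{p}\exp(pT\sup_{y\in\sC\cap\sC^0}g^{\theta}(y))$. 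Therefore, for every $\pi\in\cA$,
\[
  \inf_{P\in\fP}E^P[U(W^{\pi}_T)] \;\leq\; \inf_{P\in\fP_L}\frac{1}{p}\exp\Big(pT\sup_{y\in\sC\cap\sC^0}g^{\theta(P)}(y)\Big)
  = \frac{1}{p}\exp\Big(pT\inf_{\theta\in\Theta}\sup_{y\in\sC\cap\sC^0}g^{\theta}(y)\Big),
\]
where I use that $\frac{1}{p}\exp(pT\,\cdot\,)$ is monotone. The minimax identity of Proposition~\ref{prop:local-minimax} rewrites the right-hand side as the common value $V_{\infty}:=\frac{1}{p}\exp(pT\sup_{y}\inf_{\theta}g^{\theta}(y))$. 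Taking the supremum over $\pi\in\cA$ yields the missing ``$\geq$''.

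Finally, $u(1)<\infty$ follows because $\sup_{y}\inf_{\theta}g^{\theta}(y)$ is a finite real number: it is bounded below by $g^{\theta}(0)=0$ and bounded above by $\sup_{y}g^{\theta_{0}}(y)<\infty$ for any fixed $\theta_{0}\in\Theta$, the latter by Lemma~\ref{le:lsc}. Hence $V_{\infty}\in\R$, and $u(1)\leq V_{\infty}<\infty$ by what has just been shown.

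The main obstacle is the reverse inequality in the second equality. The natural attempt via the retracted strategies $\pi_n=(1-\tfrac{1}{n})\pi$ fails because Lemma~\ref{le:stoch-expo-limit} gives only $\limsup_n E^P[U(W^{\pi_n}_T)]\leq E^P[U(W^{\pi}_T)]$ (for $p<0$), which after taking $\inf_P$ still points the wrong way. The trick is to bypass the approximation in $\cA$ entirely and instead dominate $\inf_{P\in\fP}$ by $\inf_{P\in\fP_L}$, reducing the problem to the classical L\'evy case and invoking Lemma~\ref{le:levyvalue} to absorb the discrepancy between $\cA$ and $\cA^P$ caused by the differing positivity conventions in the cases $p>0$ and $p<0$.
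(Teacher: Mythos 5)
Your proof is correct and follows essentially the same route as the paper: the first equality via the $\cA_n$-minimax result of Lemma~\ref{le:proof-thm-l1-pow} combined with Lemma~\ref{le:approx-g} and Proposition~\ref{prop:local-minimax}, and the reverse inequality in the second equality by dominating $\inf_{P\in\fP}$ by $\inf_{P\in\fP_L}$ and invoking Lemma~\ref{le:levyvalue} together with the classical exponential L\'evy result and the local minimax identity. Your observation that the retraction $\pi_n=(1-\tfrac{1}{n})\pi$ cannot yield this inequality (since Lemma~\ref{le:stoch-expo-limit} points the wrong way under $\inf_P$) is exactly why the paper likewise routes the argument through the $\inf\sup$ over L\'evy laws.
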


\begin{proof}
 As $\cA_n\subseteq \cA_{n+1}\subseteq \cA$, the limit exists and
  \begin{equation*}
  \lim\limits_{n \to \infty} \sup_{\pi \in \cA_n}\inf_{P \in \fP} E^P[U(W_T^{\pi})]\leq \sup_{\pi \in \cA}\inf_{P \in \fP} E^P[U(W_T^{\pi})].
  \end{equation*}
 On the other hand, for each $n \in \N$, the minimax result of Lemma~\ref{le:proof-thm-l1-pow} and Remark~\ref{rem:le-Levy-pow} yield that
 \begin{align*}
   \sup_{\pi \in \cA_n}\inf_{P \in \fP} E^P[U(W_T^{\pi})]
  &=   \inf_{P \in \fP} \sup_{\pi \in \cA_n} E^P[U(W_T^{\pi})]
  =  \inf_{P \in \fP_{L}} \sup_{\pi \in \cA_n} E^P[U(W_T^{\pi})].
 \end{align*}
 Applying the classical result of~\cite[Theorem~3.2]{Nutz.09c} for each $P \in \fP_{L}$, we have \begin{equation*}
 \inf_{P \in \fP_{L}} \sup_{\pi \in \cA_n} E^P[U(W_T^{\pi})]
 =  \inf_{P \in \fP_{L}} \frac{1}{p} \exp\Big(p\,T \sup_{y \in \sC \cap \sC^0_n}g^{\theta^P}(y)\Big).
 \end{equation*}
  Using the local minimax result of Proposition~\ref{prop:local-minimax} with respect to $\sC\cap\sC^0_n$, cf.\  Remark~\ref{rem:local-minimax},
  \begin{align*}
  \inf_{P \in \fP_{L}} \frac{1}{p} \exp\Big(p\,T \sup_{y \in \sC \cap \sC^0_n}g^{\theta^P}(y)\Big)
   = \frac{1}{p} \exp\Big(p\,T \sup_{y \in \sC \cap \sC^0_n} \inf_{\theta \in \Theta} g^{\theta}(y)\Big).
  \end{align*}
  By Lemma~\ref{le:approx-g} and, once again, Proposition~\ref{prop:local-minimax},
  \begin{align*}
  \lim\limits_{n \to \infty}  \frac{1}{p} \exp\Big(p\,T \sup_{y \in \sC \cap \sC^0_n} \inf_{\theta \in \Theta} g^{\theta}(y)\Big)
  &=  \frac{1}{p} \exp\Big(p\,T \sup_{y \in \sC \cap \sC^0} \inf_{\theta \in \Theta} g^{\theta}(y)\Big)\\
  &=  \frac{1}{p} \exp\Big(p\,T \inf_{\theta \in \Theta} \sup_{y \in \sC \cap \sC^0} g^{\theta}(y)\Big).
  \end{align*}
  We deduce from \cite[Theorem~3.2]{Nutz.09c} %
  and Lemma~\ref{le:levyvalue} that
  \begin{align*}
  \frac{1}{p} \exp\Big(p\,T \inf_{\theta \in \Theta} \sup_{y \in \sC \cap \sC^0} g^{\theta}(y)\Big)
  = & \ \inf_{P \in \fP_{L}} \frac{1}{p} \exp\Big(p\,T \sup_{y \in \sC \cap \sC^0}g^{\theta^P}(y)\Big)\\
 = & \ \inf_{P \in \fP_{L}} \sup_{\pi \in \cA^P} E^P[U(W_T^{\pi})]\\
 \geq & \ \inf_{P \in \fP_{L}} \sup_{\pi \in \cA} E^P[U(W_T^{\pi})].
  \end{align*}
 Noting also the trivial inequalities
 \begin{equation*}
 \inf_{P \in \fP_{L}} \sup_{\pi \in \cA} E^P[U(W_T^{\pi})]\geq \inf_{P \in \fP} \sup_{\pi \in \cA} E^P[U(W_T^{\pi})]\geq \sup_{\pi \in \cA} \inf_{P \in \fP}  E^P[U(W_T^{\pi})],
 \end{equation*}
 we have established that
 \begin{align*}
 \sup_{\pi \in \cA} \inf_{P \in \fP}  E^P[U(W_T^{\pi})]
 & \geq \lim\limits_{n \to \infty} \sup_{\pi \in \cA_n}\inf_{P \in \fP} E^P[U(W_T^{\pi})]\\
 & = \frac{1}{p} \exp\Big(p\,T \sup_{y \in \sC \cap \sC^0} \inf_{\theta \in \Theta} g^{\theta}(y)\Big)\\
 & \geq \sup_{\pi \in \cA} \inf_{P \in \fP}  E^P[U(W_T^{\pi})]
 \end{align*}
 and hence all these expressions are equal.
\end{proof}

We are now ready to finish the proof of parts (i) and (ii) of Theorem~\ref{thm}.

\begin{lemma}\label{le:proof-thm-limit-l2-pow}
 Let $\hat{y} \in \argmax_{y \in \sC \cap \sC^0} \inf_{\theta\in \Theta} g^{\theta}(y)$; then
 \begin{equation*}
  \inf_{P \in \fP} E^P[U(W_T^{\hat{y}})]
  = \sup_{\pi \in \cA} \inf_{P \in \fP} E^P[U(W_T^{\pi})]
  = \inf_{P \in \fP} \sup_{\pi \in \cA}  E^P[U(W_T^{\pi})].
 \end{equation*}
\end{lemma}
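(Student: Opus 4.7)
The plan is to combine Lemma~\ref{le:proof-thm-limit-l1-pow} with a direct verification that the specific strategy $\hat{y}$ attains the outer infimum. Lemma~\ref{le:proof-thm-limit-l1-pow} already delivers the minimax identity together with the explicit value
\[
\sup_{\pi \in \cA}\inf_{P \in \fP} E^P[U(W_T^{\pi})] \;=\; \inf_{P \in \fP}\sup_{\pi \in \cA} E^P[U(W_T^{\pi})] \;=\; \tfrac{1}{p}\exp\bigl(pT\,\inf_{\theta\in\Theta} g^{\theta}(\hat{y})\bigr),
\]
where the last equality uses that $\hat{y}$ is a maximizer of $y\mapsto\inf_{\theta}g^{\theta}(y)$. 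Provided that $\hat{y}\in\cA$, the estimate $\inf_{P}E^P[U(W_T^{\hat{y}})] \leq \sup_{\pi} \inf_{P} E^P[U(W_T^{\pi})]$ is automatic, so the only task is the reverse inequality $\inf_{P}E^P[U(W_T^{\hat{y}})] \geq \tfrac{1}{p}\exp(pT\inf_{\theta}g^{\theta}(\hat{y}))$.

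First I would check $\hat{y}\in\cA$. For $p\in(0,1)$ this is immediate since $U(0)=0$ and $\hat{y}\in\sC\cap\sC^{0}$. For $p<0$, since $0\in\sC\cap\sC^{0}$ and $g^{\theta}(0)=0$, maximality forces $\inf_{\theta}g^{\theta}(\hat{y})\geq 0$, hence $g^{(b,c,F)}(\hat{y})>-\infty$ for every $(b,c,F)\in\Theta$. Inspecting $I_{\hat{y}}$ in the case $p<0$, the only way the integral can be finite is $F[z:\hat{y}^{\top}z=-1]=0$, which is exactly the condition $\hat{y}\in\sC^{0,*}$ introduced in the proof of Lemma~\ref{le:C^0*} and guarantees $W^{\hat{y}}>0$ $P$-a.s.\ under every $P\in\fP$.

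Next I would approximate by $\hat{y}_n := (1-\tfrac{1}{n})\hat{y}$, which lies in $\sC\cap\sC^{0}_n$ and therefore in $\cA_n$. Applying Lemma~\ref{le:expo-martingale} to $\hat{y}_n$ under any $P\in\fP$ with differential characteristics $\theta^{P}$,
\[
E^P[U(W_T^{\hat{y}_n})] \;=\; \tfrac{1}{p}E^P\Bigl[\exp\Bigl(p\!\int_0^T g^{\theta^P_s}(\hat{y}_n)\,ds\Bigr)\Bigr].
\]
Since $g^{\theta^P_s}(\hat{y}_n)\geq m_n := \inf_{\theta}g^{\theta}(\hat{y}_n)$ pointwise and the factors $p$ (in the exponent) and $1/p$ (in front) carry the same sign, the two sign reversals cancel to yield
\[
E^P[U(W_T^{\hat{y}_n})]\;\geq\; \tfrac{1}{p}\exp(pT\,m_n)
\]
for every $P\in\fP$ and every $n$.

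Finally I would pass to the limit. By Lemma~\ref{le:stoch-expo-limit}, $\limsup_n E^P[U(W_T^{\hat{y}_n})] \leq E^P[U(W_T^{\hat{y}})]$ for each $P$ (via $L^1$-convergence when $p\in(0,1)$, via Fatou applied to $-U\geq 0$ when $p<0$), so
\[
E^P[U(W_T^{\hat{y}})] \;\geq\; \limsup_n E^P[U(W_T^{\hat{y}_n})] \;\geq\; \lim_n \tfrac{1}{p}\exp(pT\,m_n) \;=\; \tfrac{1}{p}\exp\bigl(pT\inf_{\theta}g^{\theta}(\hat{y})\bigr),
\]
where the last equality uses Lemma~\ref{le:approx-g}. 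Taking $\inf_{P\in\fP}$ yields the missing lower bound and concludes the argument. The main subtlety is the sign bookkeeping for $p<0$: one must verify that the exponent-flip and the $1/p$-flip cancel so that the one-sided inequality points the correct way in both cases, and that Fatou (rather than dominated convergence) suffices on the negative side.
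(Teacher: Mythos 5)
There are two problems, one substantive and one technical.

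\textbf{The substantive gap.} You claim that Lemma~\ref{le:proof-thm-limit-l1-pow} ``already delivers the minimax identity'' $\sup_{\pi\in\cA}\inf_{P\in\fP}=\inf_{P\in\fP}\sup_{\pi\in\cA}$. It does not: its statement only identifies the value of $\sup_{\pi\in\cA}\inf_{P\in\fP}E^P[U(W_T^\pi)]$ as $\frac{1}{p}\exp\bigl(pT\sup_y\inf_\theta g^\theta(y)\bigr)$. The second equality asserted in Lemma~\ref{le:proof-thm-limit-l2-pow}, namely that this also equals $\inf_{P}\sup_{\pi\in\cA}E^P[U(W_T^\pi)]$, is precisely one of the two things this lemma is supposed to prove, and your proposal never argues it. The paper spends the second half of its proof on exactly this point: it bounds $\inf_{P\in\fP}\sup_{\pi\in\cA}E^P[U(W_T^\pi)]$ from above by restricting to $\fP_L$ and passing from $\cA$ to $\cA^P$ via Lemma~\ref{le:levyvalue}, then invokes the classical exponential-L\'evy result \cite[Theorem~3.2]{Nutz.09c} and the local minimax identity of Proposition~\ref{prop:local-minimax} to land on $\frac{1}{p}\exp\bigl(pT\sup_y\inf_\theta g^\theta(y)\bigr)$. (The needed inequality is in fact established inside the \emph{proof} of Lemma~\ref{le:proof-thm-limit-l1-pow}, but you cannot cite it off the lemma's statement; you would have to reproduce that chain of inequalities here.)

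\textbf{The technical misstatement.} The displayed identity
$E^P[U(W_T^{\hat{y}_n})]=\frac{1}{p}E^P\bigl[\exp\bigl(p\int_0^T g^{\theta^P_s}(\hat{y}_n)\,ds\bigr)\bigr]$
is false for general $P\in\fP$, because $\theta^P_s$ is random and the exponential factor cannot be separated from $(W_T^{\hat y_n})^p$ inside the expectation. What Lemma~\ref{le:expo-martingale} actually gives is $E^P[M_T]=1$ with $M_T=(W_T^{\hat y_n})^p\exp\bigl(-p\int_0^T g^{\theta^P_s}(\hat y_n)\,ds\bigr)$; the correct route (which is the paper's) writes $U(W_T^{\hat y_n})=\frac{1}{p}M_T\exp\bigl(p\int_0^T g^{\theta^P_s}(\hat y_n)\,ds\bigr)$, bounds the second factor pointwise by $\inf_\theta\frac{1}{p}\exp\bigl(pT g^\theta(\hat y_n)\bigr)$ using $M_T\geq0$, and then uses $E^P[M_T]=1$. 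Your final inequality $E^P[U(W_T^{\hat{y}_n})]\geq\frac{1}{p}\exp(pT\,m_n)$ is correct and your sign bookkeeping for $p<0$ is right, but the step leading to it must be repaired as indicated. The remainder of your argument (membership $\hat y\in\cA$, the approximation $\hat y_n=(1-\frac1n)\hat y$, the limit passage via Lemmas~\ref{le:stoch-expo-limit} and~\ref{le:approx-g}) matches the paper's proof of the first equality.
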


\begin{proof}
  We first note that $\hat{y}\in \cA$. This is obvious from the definition of $\cA$ for $p>0$, whereas for $p<0$ the proof is identical to Lemma~\ref{le:C^0*}. As a result,
 \begin{equation*}
  \inf_{P \in \fP} E^P[U(W_T^{\hat{y}})]
  \leq \sup_{\pi \in \cA}\inf_{P \in \fP} E^P[U(W_T^{\pi})]
  \leq \inf_{P \in \fP} \sup_{\pi \in \cA}  E^P[U(W_T^{\pi})].
  \end{equation*}
  We first prove the converse to the first inequality. 
  By Lemma~\ref{le:proof-thm-limit-l1-pow}, it suffices to show that
  \begin{equation*}
  \inf_{P \in \fP} E^P[U(W_T^{\hat{y}})]\geq \frac{1}{p}\exp\Big(p\,T \sup_{y \in \sC \cap \sC^0}\inf_{\theta\in \Theta} g^{\theta}(y)\Big).
  \end{equation*}
  Indeed, Lemma~\ref{le:stoch-expo-limit} shows that $\hat{y}_n:=(1-\frac{1}{n})\hat{y}\in \cA_{n}$ satisfies
  \begin{equation*}
   \inf_{P \in \fP} E^P[U(W_T^{\hat{y}})]\geq 
   \inf_{P \in \fP} \limsup_{n \to \infty} E^P[U(W_T^{\hat{y}_n})],
  \end{equation*}
  while Lemma~\ref{le:expo-martingale} 
  yields
  \begin{align*}
  & \ \inf_{P \in \fP} \limsup_{n \to \infty} E^P[U(W_T^{\hat{y}_n})]\\
   = & \ \inf_{P \in \fP} \limsup_{n \to \infty} E^P\Bigg[\frac{U(W_T^{\hat{y}_n})}{\frac{1}{p}\exp\big(p \int_0^T g^{\theta^P_s}(\hat{y}_n)\,ds\big)} \ \frac{1}{p}\exp\Big(p \int_0^T g^{\theta^P_s}(\hat{y}_n)\,ds\Big)\Bigg]\\
  \geq & \ \limsup_{n \to \infty} \inf_{\theta\in \Theta}\frac{1}{p}\exp\Big(p\,T\, g^{\theta}(\hat{y}_n)\Big)
  \end{align*}
  and finally Lemma~\ref{le:approx-g} shows that
  \begin{align*}
 \limsup_{n \to \infty} \inf_{\theta\in \Theta}\frac{1}{p}\exp\Big(p\,T\, g^{\theta}(\hat{y}_n)\Big)
 = & \  \frac{1}{p}\exp\Big(p\,T\,\limsup_{n \to \infty} \sup_{y \in \sC \cap \sC^0_n}\inf_{\theta\in \Theta} g^{\theta}(y)\Big)\\
  = & \  \frac{1}{p}\exp\Big(p\,T\, \sup_{y \in \sC \cap \sC^0}\inf_{\theta\in \Theta} g^{\theta}(y)\Big),
  \end{align*}
  which proves the desired inequality. It remains to prove that
	\begin{equation*}
	\inf_{P \in \fP} \sup_{\pi \in \cA}  E^P[U(W_T^{\pi})]
	\leq  \sup_{\pi \in \cA}  \inf_{P \in \fP}   E^P[U(W_T^{\pi})].
	\end{equation*}
	 Indeed, by Lemma~\ref{le:proof-thm-limit-l1-pow}, it suffices to show that 
	 \begin{equation*}
	 \inf_{P \in \fP} \sup_{\pi \in \cA}E^P[U(W_T^{\pi})]\leq \frac{1}{p}\exp\Big(p\,T\, \sup_{y \in \sC \cap \sC^0}\inf_{\theta\in \Theta} g^{\theta}(y)\Big).
	 \end{equation*}
	We first notice that Lemma~\ref{le:levyvalue} implies
	\begin{equation*}
	\inf_{P \in \fP} \sup_{\pi \in \cA}E^P[U(W_T^{\pi})]\leq \inf_{P \in \fP_{L}} \sup_{\pi \in \cA}E^P[U(W_T^{\pi})]\leq \inf_{P \in \fP_{L}} \sup_{\pi \in \cA^P}E^P[U(W_T^{\pi})].
	\end{equation*}
	Using \cite[Theorem~3.2]{Nutz.09c}, we see that the right-hand side satisfies
	\begin{equation*}
	\inf_{P \in \fP_{L}} \sup_{\pi \in \cA^P}E^P[U(W_T^{\pi})]
	=\inf_{P \in \fP_{L}} \frac{1}{p}\exp\Big(p\,T\, \sup_{y \in \sC \cap \sC^0} g^{\theta^P}(y)\Big),
	\end{equation*}
	while the local minimax result of Proposition~\ref{prop:local-minimax} and the definition of $\fP_{L}$ yield that
	\begin{align*}
	 \inf_{P \in \fP_{L}} \frac{1}{p}\exp\Big(p\,T\, \sup_{y \in \sC \cap \sC^0} g^{\theta^P}(y)\Big)
	=  \frac{1}{p}\exp\Big(p\,T\, \sup_{y \in \sC \cap \sC^0}\inf_{\theta\in \Theta} g^{\theta}(y)\Big).
	\end{align*} 
	This completes the proof.
\end{proof}

The proof for part (iii) of Theorem~\ref{thm} is analogous to Lemma~\ref{le:proof-thm-l2-log} for the logarithmic case. We omit the details and proceed with part~(i) of Theorem~\ref{thm-compact}.

\begin{lemma} \label{le:proof-thm-limit-l4-pow}
	Assume that $\Theta$ is compact. Let $( \hat{\theta},\hat{y}) \in \Theta \times\sC\cap\sC^0$  be a saddle point of the function $g^{\theta}(y)$ and let $\hat{P} \in \fP_{L}$ be the L\'evy law with  triplet~$\hat{\theta}$. Then $(\hat{P},\hat{y}) \in \fP_{L}\times \cA$ is a saddle point of $(P,\pi)\mapsto E^P[U(W^\pi_T)]$ on $\fP \times \cA$ and
	\begin{equation*}
	\sup_{\pi\in \cA} \inf_{P\in\fP} E^{P} [U(W^{\pi}_T)]
	= \frac{1}{p}\exp\Big(p\,T \, g^{\hat{\theta}}(\hat{y})\Big).
	\end{equation*}
	\end{lemma}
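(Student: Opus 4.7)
The plan is to verify the three equivalent saddle point conditions one by one, closely modeling the argument of Lemma~\ref{le:proof-thm-l3-log} but substituting Lemma~\ref{le:expo-martingale} wherever Lemma~\ref{le:CanDec} was used in the logarithmic case, and substituting the limiting results Lemmas~\ref{le:proof-thm-limit-l1-pow}--\ref{le:proof-thm-limit-l2-pow} for the direct computations available in the log setting. The four items to establish are: (a)~$\hat{y}\in\cA$; (b)~$E^{\hat P}[U(W^{\hat y}_T)]=\tfrac{1}{p}\exp\bigl(pT g^{\hat\theta}(\hat y)\bigr)$; (c)~$\hat P$ is a worst-case model for~$\hat y$; and (d)~$\hat y$ is an optimal response to~$\hat P$.

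Item (a) is immediate: by Proposition~\ref{prop:local-minimax} a saddle point of~$g^\theta(y)$ exists because $\Theta$ is compact, and the $\hat y$-component automatically lies in $\sC\cap\sC^0$; the arguments in the proof of Lemma~\ref{le:proof-thm-limit-l2-pow} (or, for $p>0$, the definition of $\cA$) show $\hat y\in\cA$. Item~(b) is the second assertion of Lemma~\ref{le:expo-martingale} applied to $\hat P\in\fP_L$ and the constant strategy~$\hat y\in\sC\cap\sC^0$; the value $g^{\hat\theta}(\hat y)$ is finite because it equals $\inf_\Theta\sup_{\sC\cap\sC^0}g^\theta\ge g^\theta(0)=0$ and $g^{\hat\theta}$ is upper semicontinuous with finite supremum on $\sC\cap\sC^0$ (Lemma~\ref{le:lsc}).

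For~(c), apply Lemma~\ref{le:proof-thm-limit-l2-pow} with the maximizer~$\hat y$ (which is such a maximizer because of the saddle point property): this gives
\[
  \inf_{P\in\fP}E^P\bigl[U(W^{\hat y}_T)\bigr]
  = \sup_{\pi\in\cA}\inf_{P\in\fP}E^P\bigl[U(W^\pi_T)\bigr]
  = \tfrac{1}{p}\exp\bigl(pT\sup_{y}\inf_\theta g^\theta(y)\bigr)
  = \tfrac{1}{p}\exp\bigl(pT g^{\hat\theta}(\hat y)\bigr),
\]
where in the last step I use the saddle point identity $g^{\hat\theta}(\hat y)=\sup_y\inf_\theta g^\theta(y)$. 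Combined with~(b), this shows $E^P[U(W^{\hat y}_T)]\ge E^{\hat P}[U(W^{\hat y}_T)]$ for all $P\in\fP$, which is condition~(c). For~(d), use Lemma~\ref{le:levyvalue} and the classical L\'evy result \cite[Theorem~3.2]{Nutz.09c} to get
\[
  \sup_{\pi\in\cA}E^{\hat P}[U(W^\pi_T)]
  \le \sup_{\pi\in\cA^{\hat P}}E^{\hat P}[U(W^\pi_T)]
  = \tfrac{1}{p}\exp\bigl(pT\sup_{y\in\sC\cap\sC^0}g^{\hat\theta}(y)\bigr)
  = \tfrac{1}{p}\exp\bigl(pT g^{\hat\theta}(\hat y)\bigr),
\]
where the final equality uses the other half of the saddle point property, $g^{\hat\theta}(\hat y)=\sup_y g^{\hat\theta}(y)$. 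Combined with~(b), this gives $E^{\hat P}[U(W^\pi_T)]\le E^{\hat P}[U(W^{\hat y}_T)]$ for all $\pi\in\cA$, which is condition~(d). Putting (b)--(d) together yields the saddle point property and the displayed value.

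The main obstacle is conceptual rather than technical: unlike the situation in $\cA_n$ treated in Lemma~\ref{le:proof-thm-l3-pow}, the strategy~$\hat y$ need not lie in any $\cA_n$, so a direct martingale calculation along the lines of Lemma~\ref{le:proof-thm-l3-log} is unavailable. The maneuver that saves us is already packaged inside Lemmas~\ref{le:proof-thm-limit-l1-pow} and~\ref{le:proof-thm-limit-l2-pow}, which perform the required approximation $\hat y_n=(1-1/n)\hat y\in\cA_n$ and the limit passage (via Lemmas~\ref{le:stoch-expo-limit} and~\ref{le:approx-g}); we merely need to invoke these existing ingredients correctly and then identify the saddle value with $g^{\hat\theta}(\hat y)$ through Proposition~\ref{prop:local-minimax}.
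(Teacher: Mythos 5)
Your proof is correct, and it rests on the same ingredients as the paper's: the identity $E^{\hat P}[U(W^{\hat y}_T)]=\frac{1}{p}\exp(p T g^{\hat\theta}(\hat y))$ from Lemma~\ref{le:expo-martingale}, the bound $\sup_{\pi\in\cA}E^{\hat P}[U(W^\pi_T)]\le\sup_{\pi\in\cA^{\hat P}}E^{\hat P}[U(W^\pi_T)]=\frac{1}{p}\exp(pT\sup_y g^{\hat\theta}(y))$ from Lemma~\ref{le:levyvalue} and the classical L\'evy result, and the $\hat y_n=(1-\frac1n)\hat y$ approximation. The one structural difference is in how you obtain the worst-case-model condition $\inf_{P\in\fP}E^P[U(W^{\hat y}_T)]=E^{\hat P}[U(W^{\hat y}_T)]$: you observe that the saddle-point property makes $\hat y$ a maximizer of $y\mapsto\inf_\theta g^\theta(y)$ and then simply cite Lemma~\ref{le:proof-thm-limit-l2-pow} (together with Lemma~\ref{le:proof-thm-limit-l1-pow} and the identification $\sup_y\inf_\theta g^\theta(y)=g^{\hat\theta}(\hat y)$), whereas the paper re-runs the approximation argument inline, assembling one closed chain of inequalities from $\inf_P\sup_\pi$ back to $\sup_\pi\inf_P$ via Lemmas~\ref{le:approx-g}, \ref{le:expo-martingale} and~\ref{le:stoch-expo-limit}. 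Your delegation to the already-proven limit lemma is legitimate (no circularity: Lemma~\ref{le:proof-thm-limit-l2-pow} precedes this statement and does not require compactness of $\Theta$) and is slightly more economical; the paper's single-chain version has the minor advantage of delivering the minimax identity and both optimality conditions simultaneously without splitting into cases.
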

	\begin{proof}
	By Lemma~\ref{le:levyvalue} and \cite[Theorem~3.2]{Nutz.09c}, we have
	\begin{equation*}\label{eq:1-le:proof-thm-limit-l4-pow}
	\inf_{P \in \fP} \sup_{\pi \in \cA}E^P[U(W_T^{\pi})]
	\leq \ \sup_{\pi \in \cA^{\hat{P}}}E^{\hat{P}}[U(W_T^{\pi})]=\frac{1}{p}\exp\Big(p\,T\, \sup_{y \in \sC \cap \sC^0} g^{\hat{\theta}}(y)\Big).
	\end{equation*}
	Setting $\hat{y}_n= (1-\frac{1}{n}) \hat{y}\in\cA_{n}$, Lemma~\ref{le:approx-g} yields that
	\begin{align*}
	\frac{1}{p}\exp\Big(p\,T\, \sup_{y \in \sC \cap \sC^0} g^{\hat{\theta}}(y)\Big)
	&=\frac{1}{p}\exp\Big(p\,T\,\inf_{\theta\in \Theta} g^{\theta}(\hat{y})\Big) \\
	&= \frac{1}{p}\exp\Big(p\,T\,\lim\limits_{n \to \infty}\inf_{\theta\in \Theta} g^{\theta}(\hat{y}_n)\Big)\\
	&= \lim\limits_{n \to \infty} \inf_{\theta\in \Theta} \frac{1}{p}\exp\Big(p\,T\,g^{\theta}(\hat{y}_n)\Big),
	\end{align*}
	and we deduce from Lemma~\ref{le:expo-martingale} that
	\begin{align*}
	& \ \lim\limits_{n \to \infty} \inf_{\theta\in \Theta} \frac{1}{p}\exp\Big(p\,T\,g^{\theta}(\hat{y}_n)\Big)\\
	= & \ \lim\limits_{n \to \infty} \inf_{P \in \fP}\Bigg( E^P\Bigg[\frac{U(W^{\hat{y}_n}_T)}{\frac{1}{p}\exp\big(p\,\int_0^T g^{\theta^P_s}(\hat{y}_n)\,ds\big)}\Bigg]\, \inf_{\theta\in \Theta} \frac{1}{p}\exp\Big(p\,T\,g^{\theta}(\hat{y}_n)\Big)\Bigg)
	\\
	\leq & \ \lim\limits_{n \to \infty} \inf_{P\in\fP} E^{P}[U(W_T^{\hat{y}_n})]\\
	\leq & \ \inf_{P\in\fP} \limsup_{n \to \infty} E^{P}[U(W_T^{\hat{y}_n})].
	\end{align*} 
	As Lemma~\ref{le:stoch-expo-limit} shows that
	\begin{equation*}
	\inf_{P\in\fP} \limsup_{n \to \infty} E^{P}[U(W_T^{\hat{y}_n})] 
	\leq \inf_{P\in\fP}  E^{P}[U(W^{\hat{y}}_T)] 
	\leq \sup_{\pi \in \cA}\inf_{P \in \fP} E^P[U(W_T^{\pi})],
	\end{equation*}
	all the above inequalities are equalities, and the result follows.
\end{proof}

Finally, the argument for part~(ii) of Theorem~\ref{thm-compact} is quite similar to Lemma~\ref{le:proof-thm-l4-log} and therefore omitted. This completes the proofs of Theorems~\ref{thm} and~\ref{thm-compact} for power utility.

\newcommand{\dummy}[1]{}


\begin{thebibliography}{10}

\bibitem{BertsekasShreve.78}
D.~P. Bertsekas and S.~E. Shreve.
\newblock {\em Stochastic Optimal Control. The Discrete-Time Case.}
\newblock Academic Press, New York, 1978.

\bibitem{BiaginiPinar.15}
S.~Biagini and M.~Pinar.
\newblock The robust {M}erton problem of an ambiguity averse investor.
\newblock {\em Preprint arXiv:1502.02847v1}, 2015.

\bibitem{Bogachev.07volII}
V.~I. Bogachev.
\newblock {\em Measure Theory. {V}ol. {II}}.
\newblock Springer-Verlag, Berlin, 2007.

\bibitem{DenisKervarec.13}
L.~Denis and M.~Kervarec.
\newblock Optimal investment under model uncertainty in nondominated models.
\newblock {\em SIAM J. Control Optim.}, 51(3):1803--1822, 2013.

\bibitem{FouquePunWong.14}
J.-P. Fouque, C.~S. Pun, and H.~Y. Wong.
\newblock Portfolio optimization with ambiguous correlation and stochastic
  volatilities.
\newblock {\em Preprint, University of Santa Barbara}, 2014.

\bibitem{HuPeng.09levy}
M.~Hu and S.~Peng.
\newblock {$G$}-{L}{\'e}vy processes under sublinear expectations.
\newblock {\em Preprint arXiv:0911.3533v1}, 2009.

\bibitem{JacodShiryaev.03}
J.~Jacod and A.~N. Shiryaev.
\newblock {\em Limit Theorems for Stochastic Processes}.
\newblock Springer, Berlin, 2nd edition, 2003.

\bibitem{Kallsen.00}
J.~Kallsen.
\newblock Optimal portfolios for exponential {L\'{e}}vy processes.
\newblock {\em Math. Methods Oper. Res.}, 51(3):357--374, 2000.

\bibitem{KallsenMuhleKarbe.08}
J.~Kallsen and J.~Muhle-Karbe.
\newblock Utility maximization in affine stochastic volatility models.
\newblock {\em Int. J. Theor. Appl. Finance}, 13(3):459--477, 2010.

\bibitem{Kardaras.09}
C.~Kardaras.
\newblock No-free-lunch equivalences for exponential {L}\'evy models under
  convex constraints on investment.
\newblock {\em Math. Finance}, 19(2):161--187, 2009.

\bibitem{LinRiedel.14}
Q.~Lin and F.~Riedel.
\newblock Optimal consumption and portfolio choice with ambiguity.
\newblock {\em Preprint arXiv:1401.1639v1}, 2014.

\bibitem{MatoussiPossamaiZhou.12utility}
A.~Matoussi, D.~Possamai, and C.~Zhou.
\newblock Robust utility maximization in non-dominated models with {2BSDEs}:
  The uncertain volatility model.
\newblock {\em Math. Finance}, 25(2):258--287, 2015.

\bibitem{Merton.69}
R.~C. Merton.
\newblock Lifetime portfolio selection under uncertainty: the continuous-time
  case.
\newblock {\em Rev. Econom. Statist.}, 51:247--257, 1969.

\bibitem{NeufeldNutz.13a}
A.~Neufeld and M.~Nutz.
\newblock Measurability of semimartingale characteristics with respect to the
  probability law.
\newblock {\em Stochastic Process. Appl.}, 124(11):3819--3845, 2014.

\bibitem{NeufeldNutz.13b}
A.~Neufeld and M.~Nutz.
\newblock Nonlinear {L\'e}vy processes and their characteristics.
\newblock {\em To appear in Trans. Amer. Math. Soc.}, 2014.

\bibitem{Nutz.09a}
M.~Nutz.
\newblock The opportunity process for optimal consumption and investment with
  power utility.
\newblock {\em Math. Financ. Econ.}, 3(3):139--159, 2010.

\bibitem{Nutz.09b}
M.~Nutz.
\newblock The {B}ellman equation for power utility maximization with
  semimartingales.
\newblock {\em Ann. Appl. Probab.}, 22(1):363--406, 2012.

\bibitem{Nutz.09c}
M.~Nutz.
\newblock Power utility maximization in constrained exponential {L}\'evy
  models.
\newblock {\em Math. Finance}, 22(4):690--709, 2012.

\bibitem{Nutz.13util}
M.~Nutz.
\newblock Utility maximization under model uncertainty in discrete time.
\newblock {\em Math. Finance}, 26(2):252--268, 2016.

\bibitem{Pollard.13}
D.~Pollard.
\newblock {\em Asymptopia}.
\newblock 2001.
\newblock
  \url{http://www.stat.yale.edu/~pollard/Books/Asymptopia/Appendices.pdf}.

\bibitem{Prohorov.56}
Yu.~V. Prohorov.
\newblock Convergence of random processes and limit theorems in probability
  theory.
\newblock {\em Teor. Veroyatnost. i Primenen.}, 1:177--238, 1956.

\bibitem{Quenez.04}
M.-C. Quenez.
\newblock Optimal portfolio in a multiple-priors model.
\newblock In {\em Seminar on {S}tochastic {A}nalysis, {R}andom {F}ields and
  {A}pplications {IV}}, volume~58 of {\em Progr. Probab.}, pages 291--321.
  Birkh{\"a}user, Basel, 2004.

\bibitem{Samuelson.69}
P.~A. Samuelson.
\newblock Lifetime portfolio selection by dynamic stochastic programming.
\newblock {\em Rev. Econ. Statist.}, 51(3):239--246, 1969.

\bibitem{Schied.06}
A.~Schied.
\newblock Risk measures and robust optimization problems.
\newblock {\em Stoch. Models}, 22(4):753--831, 2006.

\bibitem{Sion.58}
M.~Sion.
\newblock On general minimax theorems.
\newblock {\em Pacific J. Math.}, 8:171--176, 1958.

\bibitem{StoikovZariphopoulou.05}
S.~Stoikov and T.~Zariphopoulou.
\newblock Dynamic asset allocation and consumption choice in incomplete
  markets.
\newblock {\em Australian Econ. Pap.}, 44(4):414--454, 2005.

\bibitem{TalayZheng.02}
D.~Talay and Z.~Zheng.
\newblock Worst case model risk management.
\newblock {\em Finance Stoch.}, 6(4):517--537, 2002.

\bibitem{TevzadzeToronjadzeUzunashvili.13}
R.~Tevzadze, T.~Toronjadze, and T.~Uzunashvili.
\newblock Robust utility maximization for a diffusion market model with
  misspecified coefficients.
\newblock {\em Finance Stoch.}, 17(3):535--563, 2013.

\end{thebibliography}
\end{document}